\renewcommand{\b}[1]{\left[#1\right]}
\newcommand{\E}[1]{\mathbb{E}\b{#1}}
\newcommand{\Prb}[1]{\Pr\b{#1}}
\newtheorem{thm}{Theorem}[section]
\newtheorem{lemma}[thm]{Lemma}
\theoremstyle{definition}
\newtheorem{dfn}{Definition}
\newtheorem{fact}[dfn]{Fact}
\newcommand\numberthis{\addtocounter{equation}{1}\tag{\theequation}}
\newcommand{\p}[1]{\left(#1\right)}
\newcommand{\omt}[1]{ }
\def\ynn{Y_{(n-1:n)}}
\def\xnn{X_{(n-1:n)}}
\def\xn{X_{(n:n)}}
\newcommand*{\approxident}{%
  \mathrel{\vcenter{\offinterlineskip
  \hbox{$\sim$}\vskip-.35ex\hbox{$\sim$}\vskip-.35ex\hbox{$\sim$}}}}
\def\apeq{\approxident}
\newcommand{\apeqq}[2]{\apeq_{#1;#2}}
\def\xan{X_{(\alpha n:\alpha n)}}
\def\ynko{Y_{(n-k+1:n)}}
\def\yn{Y_{(n:n)}}
\def\zm{Z_{(m:m)}}
\def\zk{Z_{(k:m)}}
\def\zmk{Z_{(m-k:m)}}
\def\zmm{Z_{(m-1:m)}}
\def\zmko{Z_{(m-k+1:m)}}
\def\zlnn{Z_{(Cn - \ln n+1:Cn)}}
\def\znc{Z_{(n(1+c):n(1+c))}}
\def\znC{Z_{(Cn:Cn)}}
\def\znko{Z_{(n(1+c)-k+1:n(1+c))}}
\def\znn{Z_{(n(1+c)-1:n(1+c))}}
\def\znjo{Z_{(\beta^{1+\delta}n(1+c)-j : \beta^{1+\delta}n(1+c))}}
\def\znbc{Z_{(n\beta^{1+\delta}(1+c):\beta^{1+\delta}n(1+c))}}
\newcommand{\pdf}[2]{f_{(#1:#2)}}
\newcommand{\cdf}[2]{F_{(#1:#2)}}
\newcommand{\cdfz}[2]{F_{Z,(#1:#2)}}
\newcommand{\ind}[1]{\mathbbm{1}_{#1}}
\def\dt{\, dt}
\def\dx{\, dx}
\def\dy{\, dy}
\def\dz{\, dz}
\def\tpd{^{-(1+\delta)}}
\date{}
\begin{document}

\title{Selection Problems in the Presence of Implicit Bias} 
\author{Jon Kleinberg\\ Cornell University
\and Manish Raghavan\\ Cornell University}

\maketitle
\begin{abstract}
  Over the past two decades, the notion of implicit bias has come to serve as an
  important component in our understanding of discrimination in
  activities such as hiring, promotion, and school admissions. Research on
  implicit bias posits that when people evaluate others -- for example, in a
  hiring context -- their unconscious biases about membership in particular
  groups can have an effect on their decision-making, even when they
  have no deliberate intention to discriminate against members of these groups.
  A growing body of experimental work has pointed to the effect that implicit
  bias can have in producing adverse outcomes.

  Here we propose a theoretical model for studying the effects of implicit bias
  on selection decisions, and a way of analyzing possible procedural remedies
  for implicit bias within this model. 
A canonical situation represented by our model is a hiring setting: a
recruiting committee is trying to choose a set of finalists to
interview among the applicants for a job, evaluating these applicants
based on their future potential, but their estimates of potential are
skewed by implicit bias against members of one group. In this model,
we show that measures such as the Rooney Rule, a requirement that at
least one of the finalists be chosen from the affected group, can not
only improve the representation of this
  affected group, but also lead to higher payoffs in absolute terms for the
  organization performing the recruiting. However, identifying the conditions
  under which such measures can lead to improved payoffs involves subtle
  trade-offs between the extent of the bias and the underlying distribution of
  applicant characteristics, leading to novel theoretical questions about order
  statistics in the presence of probabilistic side information.
\end{abstract}

\section{Introduction}
\newcommand{\xhdr}[1]{\paragraph*{\bf {#1}.}}

Over the past two decades, the notion of {\em implicit bias}
\cite{greenwald-implicit-social-cognition}
has come to provide on important perspective on 
the nature of discrimination.
Research on implicit bias argues that unconscious attitudes 
toward members of different demographic groups --- for example,
defined by gender, race, ethnicity, national origin, sexual orientation,
and other characteristics --- can have a non-trivial impact on the way
in which we evaluate members of these groups; and this in turn may
affect outcomes in employment
\cite{bertrand-emily-greg,bohnet-implicit-bias-hiring,uhlmann-implicit-bias-criteria},
education \cite{van-den-bergh-implicit-bias-teachers},
law \cite{greenwald-implicit-bias-law,jolls-sunstein-implicit-bias}, 
medicine \cite{green-implicit-bias-physicians},
and other societal institutions.

In the context of a process like hiring, implicit bias thus shifts the
question of bias and discrimination to be not just about identifying
bad actors who are intentionally discriminating, but also about the
tendency of all of us to reach discriminatory conclusions based on
the unconscious application of stereotypes.
An understanding of these issues also helps inform the design
of interventions to mitigate implicit bias ---
when essentially all of us
have a latent tendency toward
low-level discrimination, a set of broader practices may be needed
to guide the process toward the desired outcome.

\xhdr{A basic mechanism: The Rooney Rule}
One of the most basic and widely adopted mechanisms in practice
for addressing implicit bias
in hiring and selection is the {\em Rooney Rule}
\cite{collins-rooney-rule-implicit-bias}, which, roughly
speaking, requires that in recruiting for a job opening, one of the
candidates interviewed must come from an underrepresented group.
The Rooney Rule is named for a protocol adopted by the 
National Football League (NFL) in 2002 in response to widespread
concern over the low representation of African-Americans in
head coaching positions; it required that when a team is searching for a
new head coach, at least one minority candidate must be interviewed
for the position.
Subsequently the Rooney Rule has become a guideline adopted
in many areas of business
\cite{cavicchia-implicit-bias-rooney}; for example, in 2015
then-President Obama exhorted leading tech firms to use
the Rooney Rule for hiring executives, and in recent years
companies including Amazon, Facebook, Microsoft, and Pinterest
have adopted a version of the Rooney Rule requiring that
at least one candidate interviewed must be a woman or a member
of an underrepresented minority group
\cite{passariello-implicit-rooney-wsj}.
In 2017, a much-awaited set of recommendations made by
Eric Holder and colleagues to address workplace bias 
at Uber advocated for the use of the Rooney Rule
as one of its key points
\cite{covington-burling-uber-recomm,shaban-implicit-rooney-wapo}.

The Rooney Rule is the subject of ongoing debate, and one crucial aspect
of this debate is the following tension.
On one side is the argument that implicit (or explicit) bias
is preventing deserving candidates from underrepresented groups from being
fairly considered, and the Rooney Rule is providing a
force that counter-balances and partially offsets the consequences
of this underlying bias.
On the other side is the concern that if a job search process produces
a short-list of top candidates all from the majority group, it may
be because these are genuinely the strongest candidates despite
the underlying bias --- particularly if there is a shortage of
available candidates from other groups.  In this case, wholesale
use of the Rooney Rule may lead firms to consider weaker candidates
from underrepresented groups, which works against the elimination
of unconscious stereotypes. 
Of course, there are other reasons to seek diversity in recruiting
that may involve broader considerations or longer time horizons than 
just the specific applicants being evaluated;
but even these lines of argument generally incorporate 
the more local question of the effect on the set of applicants.

Given the widespread consideration of the Rooney Rule from both legal and
empirical perspectives \cite{collins-rooney-rule-implicit-bias},
it is striking that prior work has not attempted to 
formalize the inherently mathematical question that forms
a crucial ingredient in these debates:
given some estimates of the extent of bias and the prevalence
of available minority candidates, 
does the expected quality of the candidates being interviewed by a hiring
committee go up or down when the Rooney Rule is implemented?
When the bias is large and there are many minority candidates,
it is quite possible that a hiring committee's bias has caused it
to choose a weaker candidate over a stronger minority one, and
the Rooney Rule may be strengthening the pool of interviewees by
reversing this decision and swapping the stronger minority candidate in.
But when the bias is small or there
are few minority candidates, the Rule might be reversing a decision
that in fact chose the stronger applicant.

In this paper, we propose a formalization of this family of questions,
via a simplified model of selection with implicit bias,
and we give a tight analysis of the consequences of using
the Rooney Rule in this setting.
In particular, when selecting for a fixed number of slots, we 
identify a sharp threshold on the effectiveness of the Rooney Rule 
in our model that depends on three parameters: 
not just the extent of bias and the the prevalence of available
minority candidates, but a third quantity as well --- essentially, a parameter
governing the distribution of candidates' expected future job performance. 
We emphasize that our model is deliberately stylized, to 
abstract the trade-offs as cleanly as possible.
Moreover, in interpreting these results, we emphasize a point noted above,
that there are other reasons to consider using the Rooney Rule beyond the
issues that motivate this particular formulation;
but an understanding of the trade-offs in our model seems informative 
in any broader debate about such hiring and selection measures.

We now describe the basic ingredients of our model, followed
by a summary of the main results.

\subsection{A Model of Selection with Implicit Bias}

Our model is based on the following scenario.
Suppose that a hiring committee is trying to fill an open job position,
and it would like to choose the $k \geq 2$ best candidates as {\em finalists}
to interview from among a large set of applicants.
We will think of $k$ as a small constant, and indeed
most of the subtlety of the question already arises 
for the case $k = 2$, when just two finalists must be selected.

\xhdr{$X$-candidates and $Y$-candidates}
The set of all applicants is partitioned into two groups $X$ and $Y$,
where we think of $Y$ as the majority group, and $X$ as a minority
group within the domain that may be subject to bias.
For some positive real number $\alpha \leq 1$ and a natural number $n$, 
there are $n$ applicants 
from group $Y$ and $\alpha n$ applicants from group $X$.
If a candidate $i$ belongs to $X$, we will refer to them as an
{\em $X$-candidate}, and if 
$i$ belongs to $Y$, we will refer to them as a
{\em $Y$-candidate}.
(The reader is welcome, for example, to think of the setting
of academic hiring, with $X$ as candidates from a group 
that is underrepresented in the field,
but the formulation is general.)

Each candidate $i$ has a (hidden) numerical value that we call
their {\em potential},
representing their future performance over the course of their career.
For example, in faculty hiring, we might think of the potential
of each applicant in terms of some numerical proxy
like their future lifetime citation count
(with the caveat that any numerical measure will of course be an imperfect
representation).
Or in hiring executives, the potential of each applicant could be
some measure of the revenue they will bring to the firm.

We assume that 
there is a common distribution $Z$ that these numerical potentials come from:
each potential is an independent draw from $Z$.  (Thus, the
applicants can have widely differing values for
their numerical potentials; they just
arise as draws from a common distribution.)
For notational purposes, when $i$ is an $X$-candidate, we write
their potential as $X_i$, and when $j$ is a $Y$-candidate, we write
their potential as $Y_j$.
We note an important modeling decision in this formulation: we are
assuming that all $X_i$ and all $Y_j$ values come from this same
distribution $Z$.
While it is also of interest to consider the case in which the numerical
potentials of the two groups $X$ and $Y$ are drawn from different
group-specific distributions, we focus on the case of identical 
distributions for two reasons.  First, there are many settings
where differences between the underlying distributions for 
different groups appear to be small compared
to the bias-related effects we are seeking to measure; and second,
in any formal analysis of bias between groups, the setting in which
the groups begin with identical distributions is arguably the
first fundamental special case that needs to be understood.

In the domains that we are considering --- hiring executives,
faculty members, athletes, performers --- there is a natural functional form
for the distribution $Z$ of potentials, and this is the family
of {\em power laws} (also known as {\em Pareto distributions}), with
$\Prb{Z \geq t} = t^{-(1 + \delta)}$ and support $[1, \infty)$ for a
fixed $\delta > 0$.
Extensive empirical work has argued that the distribution of
individual output in a wide range of creative professions can
be approximated by power law distributions with small
positive values of $\delta$ \cite{clauset-power-law-survey}.
For example, the distribution of lifetime citation counts
is well-approximated by a power law, as are the lifetime
downloads, views, or sales by performers, authors, and other artists.
In the last part of the paper, we also consider the case in which
the potentials are drawn from a distribution with bounded support,
but for most of the paper we will focus on power laws.

\xhdr{Selection with Bias}
Given the set of applicants,
the hiring committee would like to choose $k$ {\em finalists} to interview.
The {\em utility} achieved by the committee is the sum of the potentials
of the $k$ finalists it chooses; the committee's goal is to maximize
its utility.\footnote{Since our goal is to model processes like
the Rooney Rule, which apply to the selection of finalists for
interviewing, rather than to the hiring decision itself, we treat
the choice of $k$ finalists as the endpoint rather than
modeling the interviews that subsequently ensue.}

If the committee could exactly evaluate the potential of each 
applicant, then it would have a straightforward way to maximize
the utility of the set of finalists: simply sort all applicants
by potential, and choose the top $k$ as finalists.
The key feature of the situation we would like to capture, however,
is that the committee is biased in its evaluations; 
we look for a model that incorporates this bias as cleanly as possible.

Empirical work in some of our core motivating settings --- such as
the evaluation of scientists and faculty candidates --- indicates
that evaluation committees often systematically downweight female
and minority candidates of a given level of achievement, 
both in head-to-head comparisons and in ranking using numerical scores
\cite{wenneras-implicit-bias-peer-rev}.
It is thus natural to model the hiring committee's evaluations as follows:
they correctly estimate the potential of a $Y$-applicant $j$ at 
the true value $Y_j$, but they estimate the potential of an
$X$-applicant $i$ at a reduced value $\tilde{X_i} < X_i$.
They then rank candidates by these values $\{Y_j\}$ and $\{\tilde{X_i}\}$,
and they choose the top $k$ according to this biased ranking.

For most of the paper, we focus on the case of
{\em multiplicative bias}, in which $\tilde{X_i} = X_i / \beta$
for a bias parameter\footnote{When $\beta = 1$, the ranking has no bias.} $\beta
> 1$. This is a reasonable approximation to empirical data from human-subject
studies \cite{wenneras-implicit-bias-peer-rev}; and moreover, for power law
distributions this multiplicative form is in a strong sense the ``right''
parametrization of the bias, since biases that grow either faster or slower than
multiplicatively have a very simple asymptotic behavior in the power law case.

In this aspect of the model, as in others, we seek the cleanest
formulation that exposes the key underlying issues;
for example, it would be an interesting extension to consider versions in 
which the estimates for each individual are perturbed by random noise.
A line of previous work
\cite{
braverman-sorting-noisy,
feige-computing-noisy,
fu-micro-foundations} 
has analyzed models of ranking under noisy perturbations; 
while our scenario is quite different in
that the entities being ranked are partitioned into a fixed set of
groups with potentially different levels of bias and noise,
it would be natural to see if 
these techniques could potentially be extended to handle
noise in the context of implicit bias. 

\subsection{Main Questions and Results}

This then is the basic model in which we analyze interventions 
with the structure of the Rooney Rule:
(i) a set of $n$ $Y$-applicants and $\alpha n$ $X$-applicants each
have an independent future potential drawn from a power law 
distribution; 
(ii) a hiring committee ranks the applicants according to a sorted
order in which each $X$-applicant's potential is divided down by $\beta > 1$,
and chooses the top $k$ in this ordering as {\em finalists};
and (iii) the hiring committee's {\em utility} is the sum of
the potentials of the $k$ finalists.

Qualitatively, the motivation for the Rooney Rule in such settings
is that hiring committees are either unwilling or unable to
reasonably correct for their bias in performing such rankings, and
therefore cannot be relied on to interview $X$-candidates on their own.
The difficulty in removing this skew from such evaluations is
a signature aspect of phenomena around implicit bias.

The decision to impose the Rooney Rule is 
made at the outset, before the actual values of the potentials 
$\{Y_j\}$ and $\{\tilde{X_i}\}$ are materialized.
All that is known at the point of this initial decision
to use the Rule or not are the parameters of the domain:
the bias $\beta$, the relative abundance of $X$-candidates $\alpha$,
the power law exponent $1 + \delta$, and the number of finalists
to be chosen $k$.
The question is: as a function of these parameters,
will the use of the
Rooney Rule produce a positive or negative expected change in utility,
where the expectation is taken over the random draws of applicant values?
We note that one could instead ask about the probability that the Rooney Rule
produces a positive change in utility as opposed to the expected change; in
fact, our techniques naturally extend to characterize not only the expected
change, but the probability that this change is positive, as we will show in
Section~\ref{sec:pl}.

\begin{figure}[t]
\begin{center}
\includegraphics[width=.80\textwidth]{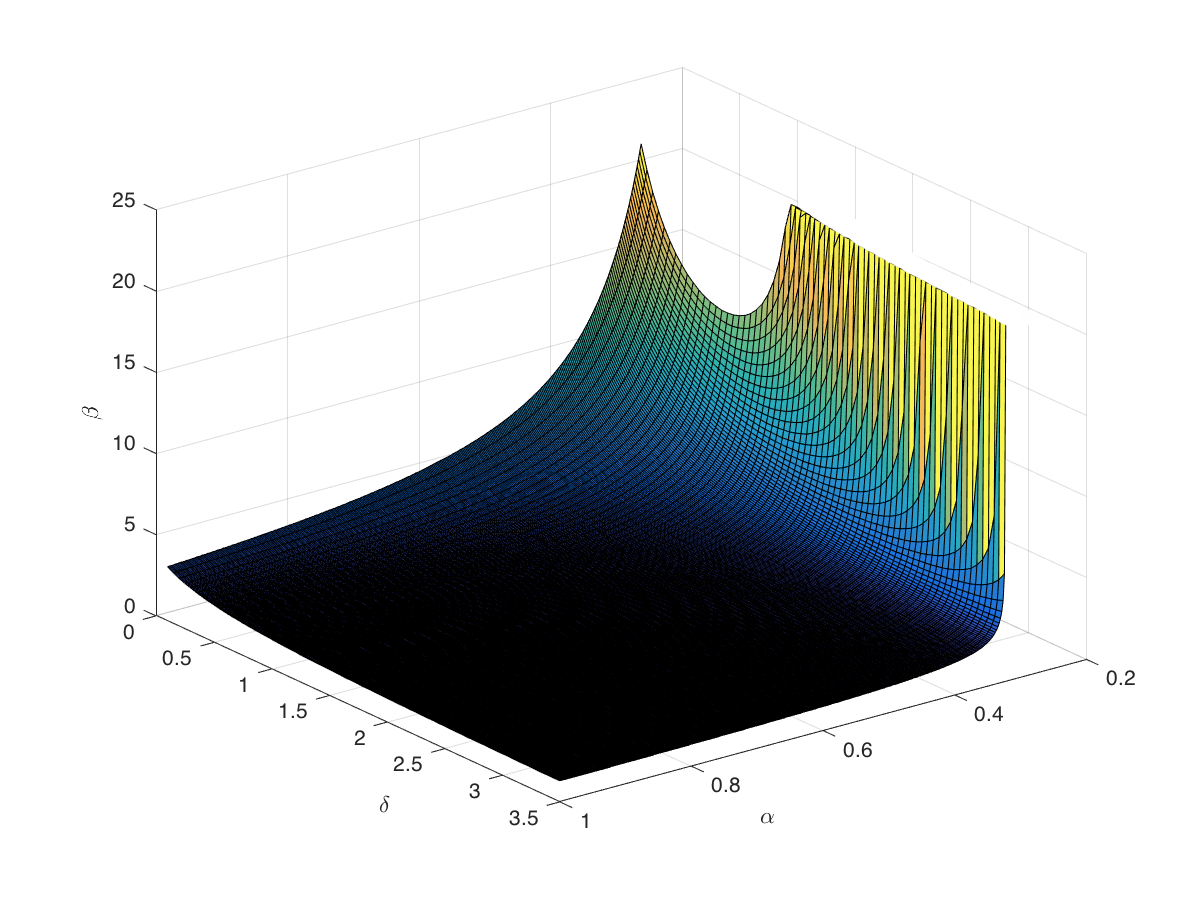}
\caption{
\label{fig:phi}
Fixing $k = 2$, the $(\alpha,\beta,\delta)$ values for
which the Rooney Rule produces a positive expected change for
sufficiently large $n$ lie above a surface (depicted in the figure)
defined by the function $\phi_2(\alpha,\beta,\delta) = 1$.
}
\end{center}
\end{figure}

Our model lets us make precise the trade-off in
utility that underpins the use of the Rooney Rule.
If the committee selects an $X$-candidate on its own --- even
using its biased ranking --- then their choice already
satisfies the conditions of the Rule.
But if all $k$ finalists are $Y$-candidates, then the Rooney Rule
requires that the committee replace the lowest-ranked of
these finalists $j$ with the highest-ranked $X$-candidate $i$.
Because $i$ was not already a finalist, we know that 
$\tilde{X_i} = X_i / \beta < Y_j$.
But to see whether this yields a positive change in utility,
we need to understand which of $X_i$ or $Y_j$ has a larger expected value,
conditional on the information contained in the committee's 
decision, that $X_i / \beta < Y_j$.

Our main result is an exact characterization of when the 
Rooney Rule produces a positive expected change in terms of the four
underlying parameters, showing that it non-trivially depends on all four. For
the following theorem, and for the remainder of the paper, 
we assume $0 < \alpha \leq 1$, $\beta > 1$, and $\delta > 0$. We begin with the
case where $k = 2$.

\begin{thm}
  For $k = 2$ and sufficiently large $n$, the Rooney Rule produces a positive
  expected change if and only if $\phi_2(\alpha, \beta, \delta) > 1$ where
  \begin{equation}
    \phi_2(\alpha, \beta, \delta) = \frac{\alpha^{1/(1+\delta)}\b{1 -
    (1+c^{-1})^{-\delta/(1+\delta)} \b{1 + \frac{\delta}{1+\delta}
    (1+c)^{-1}}}}{\frac{\delta}{1+\delta} (1+c)^{-1-\delta/(1+\delta)}}
  \end{equation}
  and $c = \alpha \beta^{-(1+\delta)}$. Moreover, $\phi_2(\alpha, \beta,
  \delta)$ is increasing in $\beta$, so for fixed $\alpha$ and $\delta$ there
  exists $\beta^*$ such that $\phi_2(\alpha, \beta, \delta) > 1$ if and only if
  $\beta > \beta^*$.
  \label{thm:characterization2}
\end{thm}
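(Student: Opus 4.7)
\emph{Approach.} The Rooney Rule changes the set of finalists for $k=2$ exactly on the event $\{\xtan < \ynn\}$; when it triggers, $\ynn$ is replaced by $\xan$, so the expected change in utility is
\begin{equation*}
\Delta_n \;=\; \E{\p{\xan - \ynn}\,\ind{\xtan < \ynn}},
\end{equation*}
and the first half of the theorem reduces to determining the sign of $\Delta_n$ as $n\to\infty$. For the Pareto tail $\Prb{Z\geq t} = t^{-(1+\delta)}$ the standard Poisson representation of extremes gives, jointly in $k$, $n^{-1/(1+\delta)} Z_{(n-k+1:n)} \Rightarrow E_k^{-1/(1+\delta)}$, where $E_k = \xi_1 + \cdots + \xi_k$ and the $\xi_i$ are i.i.d.\ $\operatorname{Exp}(1)$. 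Applying this separately to the independent $X$- and $Y$-samples and setting $c := \alpha \beta^{-(1+\delta)}$, the event $\{\xtan < \ynn\}$ rescales to $\{M > cS\}$ with $M \sim \operatorname{Exp}(1)$ independent of $S = \xi_1+\xi_2 \sim \operatorname{Gamma}(2,1)$. Standard $L^p$ bounds on Pareto extreme order statistics (any $p<1+\delta$) provide uniform integrability, so $n^{-1/(1+\delta)}\Delta_n \to T_1 - T_2$ with
\begin{equation*}
T_1 = \alpha^{1/(1+\delta)}\E{M^{-1/(1+\delta)}\,\ind{M>cS}}, \qquad T_2 = \E{S^{-1/(1+\delta)}\,\ind{M>cS}}.
\end{equation*}

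\emph{Identifying $\phi_2$.} Writing $\gamma := \delta/(1+\delta) \in (0,1)$, integrating $M$ out of $T_2$ yields a single Gamma integral, $T_2 = \gamma\Gamma(\gamma)(1+c)^{-1-\gamma}$. For $T_1$ I would swap the order of integration and use $\int_0^{m/c} se^{-s}\,ds = 1 - (1+m/c)e^{-m/c}$, reducing $T_1$ to three Gamma integrals with shape parameters $\gamma$ and $1+\gamma$; combining them via $(1+c^{-1})^{-1} = c(1+c)^{-1}$ gives
\begin{equation*}
T_1 = \alpha^{1/(1+\delta)}\Gamma(\gamma)\b{1 - (1+c^{-1})^{-\gamma}\p{1 + \gamma(1+c)^{-1}}}.
\end{equation*}
The common factor $\Gamma(\gamma)>0$ cancels from $T_1/T_2$, leaving exactly $\phi_2(\alpha,\beta,\delta)$. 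Since $T_2>0$, we obtain $\Delta_n > 0$ for all sufficiently large $n$ if and only if $\phi_2(\alpha,\beta,\delta)>1$.

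\emph{Monotonicity.} Because $c = \alpha\beta^{-(1+\delta)}$ strictly decreases in $\beta$, it suffices to show that $\phi_2$ strictly decreases in $c$. Using the identity $(1+c)^{1+\gamma}(1+c^{-1})^{-\gamma} = (1+c)c^{\gamma}$ one rewrites
\begin{equation*}
\phi_2(c) = \frac{\alpha^{1/(1+\delta)}}{\gamma}\b{(1+c)^{1+\gamma} - c^{\gamma}(1+c+\gamma)},
\end{equation*}
and a short differentiation reduces the sign of $\phi_2'(c)$ to that of $(1+c)^{\gamma} - c^{\gamma} - \gamma c^{\gamma-1}$. Here the one clean insight kicks in: strict concavity of $x\mapsto x^{\gamma}$ on $(0,\infty)$ (valid since $\gamma \in (0,1)$) forces $(1+c)^{\gamma} < c^{\gamma} + \gamma c^{\gamma-1}$ for every $c>0$, so $\phi_2'(c) < 0$; continuity then produces the threshold $\beta^*$. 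I expect the main technical obstacle not to be any of the elementary calculations above but the uniform-integrability argument that upgrades the Poisson-process distributional limit to convergence of $n^{-1/(1+\delta)}\Delta_n$ in mean, since the $L^p$ estimates on Pareto order statistics must be applied carefully on the event $\{\xtan < \ynn\}$ (where the top $X$-statistic is being conditioned to be comparatively small), but once integrability is secured the rest of the argument is bookkeeping.
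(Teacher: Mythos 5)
Your argument is correct and arrives at exactly the paper's function $\phi_2$, but the route to the asymptotics is genuinely different. The paper works directly with the finite-$n$ order-statistic densities: it writes $\E{\ynn\cdot\ind{\xan<\beta\ynn}}$ as $\int y\,f_{(n-1:n)}(y)F_{(\alpha n:\alpha n)}(\beta y)\,dy$, approximates $\p{1-(\beta y)^{-(1+\delta)}}^{\alpha n}$ by $\p{1-y^{-(1+\delta)}}^{cn}$, and recognizes the resulting integrand as (a constant times) the density of the second-highest order statistic of an \emph{enlarged} sample of size $n(1+c)$, then closes with explicit $\Gamma$-function formulas; all approximations carry quantitative $1\pm O((\ln n)^2/n)$ error bounds. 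You instead invoke the Poisson/R\'enyi representation of Pareto extremes, reduce the comparison to the limit variables $M\sim\mathrm{Exp}(1)$ and $S\sim\Gamma(2,1)$ with the event $\{M>cS\}$, and compute $T_1,T_2$ as Gamma integrals; I checked these and they match the paper's \eqref{eq:x_cond_2} and \eqref{eq:y_cond_2}, and your monotonicity argument (reduce to the sign of $(1+c)^{\gamma}-c^{\gamma}-\gamma c^{\gamma-1}$ and apply concavity of $x\mapsto x^{\gamma}$) is the same as the paper's, which uses the equivalent inequality $(1+c^{-1})^{\gamma}<1+\gamma c^{-1}$. What each approach buys: yours is shorter and more conceptual, and the uniform-integrability step you flag is genuinely routine here ($n^{-1/(1+\delta)}\xan$ is bounded in $L^p$ for $p<1+\delta$, the indicator only shrinks the integrand, and the limiting event $\{M=cS\}$ is null, so weak convergence plus UI suffices; the conditioning event has probability $(1+c)^{-2}$ bounded away from $0$, so no degeneracy arises). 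But it yields only a qualitative ``sufficiently large $n$'' statement, whereas the paper's explicit error control is what lets it claim accuracy already at $n\approx 50$ and, more importantly, is what powers the extension to general $k=O(\ln n)$ in Theorem~\ref{thm:rk}, where the number of terms grows with $n$ and a purely distributional limit argument would need to be made uniform in $k$.
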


Thus, we have an explicit characterization for when the Rooney Rule produces
positive expected change. The following theorem extends this to larger values of
$k$.

\begin{thm}
There is an explicit
function $\phi_k(\alpha,\beta,\delta)$ such
that the Rooney Rule produces a positive expected change, for
$n$ sufficiently large and $k = O(\ln n)$, 
if and only if $\phi_k(\alpha,\beta,\delta) > 1$.
\label{thm:characterization}
\end{thm}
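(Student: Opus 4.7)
The plan is to generalize Theorem~\ref{thm:characterization2} to arbitrary $k$ by passing to a Poisson-process representation of the extreme order statistics of the combined, biased pool, which packages the dependence on $k$ into a single geometrically-distributed index and makes an explicit closed form for $\phi_k$ drop out. The Rooney Rule is binding precisely when $X_{(\alpha n:\alpha n)}/\beta < Y_{(n-k+1:n)}$, contributing change $X_{(\alpha n:\alpha n)} - Y_{(n-k+1:n)}$ and zero otherwise, so the sign of interest is that of $\E{(X_{(\alpha n:\alpha n)} - Y_{(n-k+1:n)})\ind{X_{(\alpha n:\alpha n)}/\beta < Y_{(n-k+1:n)}}}$. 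Since $\Prb{X_i/\beta \geq t} = \beta^{-(1+\delta)}t^{-(1+\delta)}$ for $t \geq 1$, Poisson superposition shows that, as $n \to \infty$, the restriction of $\{Y_j\}\cup\{X_i/\beta\}$ to $[1,\infty)$ converges to a Poisson process of intensity $(1+c)n\cdot(1+\delta)t^{-(2+\delta)}$ with $c = \alpha\beta^{-(1+\delta)}$, whose points are independently labeled $Y$ with probability $1/(1+c)$ and $X$ with probability $c/(1+c)$, the labels being independent of positions. Letting $T_1 > T_2 > \cdots$ denote the descending points of this limit, the trigger event has probability $(1+c)^{-k}$, and conditional on it the top $X$ lies at position $k+G$, where $G$ is geometric on $\{1,2,\ldots\}$ with success probability $c/(1+c)$ and is independent of $\{T_j\}$. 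The contributed utility change is $\beta T_{k+G} - T_k$, giving
\begin{equation*}
\E{\Delta \mid \text{trigger}} \;=\; \beta\,\E{T_{k+G}} - \E{T_k} \;=\; \sum_{j=1}^{\infty}\frac{c\beta\,\E{T_{k+j}}}{(1+c)^j} \;-\; \E{T_k}.
\end{equation*}

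To get an explicit formula, the Renyi representation of uniform order statistics combined with $Z = V^{-1/(1+\delta)}$ for $V \sim \mathrm{Unif}(0,1)$ gives $T_j / ((1+c)n)^{1/(1+\delta)}$ converging (in distribution and in $L^1$) to $W_j^{-1/(1+\delta)}$, where $W_j \sim \mathrm{Gamma}(j,1)$ is a sum of $j$ independent Exp$(1)$ variables; hence $\E{T_j} \sim ((1+c)n)^{1/(1+\delta)}\,\Gamma(j-s)/\Gamma(j)$ with $s = 1/(1+\delta)$. Dividing by the common positive prefactor $((1+c)n)^{1/(1+\delta)}\,\Gamma(k-s)/\Gamma(k)$, the sign of $\E{\Delta\mid\text{trigger}}$ becomes the sign of $\phi_k(\alpha,\beta,\delta) - 1$, where
\begin{equation*}
\phi_k(\alpha,\beta,\delta) \;=\; \beta\,\frac{\Gamma(k)}{\Gamma(k-s)}\sum_{j=1}^{\infty} \frac{c}{(1+c)^j}\,\frac{\Gamma(k+j-s)}{\Gamma(k+j)}.
\end{equation*}
Applying the beta-integral identity $\Gamma(a-s)/\Gamma(a) = \Gamma(s)^{-1}\int_0^1 u^{a-s-1}(1-u)^{s-1}\du$ and resumming the geometric series in $u/(1+c)$ rewrites $\phi_k$ as a single one-dimensional integral, and specializing to $k=2$ collapses the sum to recover the expression of Theorem~\ref{thm:characterization2}.

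The hard part is making the Poisson-limit substitutions uniform as $k = k(n) = O(\ln n)$, which is outside the classical fixed-top-$k$ extreme-value regime. Three error sources must be controlled: (i) in the finite sample the types of the top $k$ biased values come from a hypergeometric rather than a Bernoulli scheme, producing only a multiplicative error of size $1 + O(k^2/n) = 1+o(1)$; (ii) the $L^1$ approximation $\E{T_j} = ((1+c)n)^{1/(1+\delta)}\Gamma(j-s)/\Gamma(j)(1+o(1))$ must hold uniformly in $j \leq k = O(\ln n)$, which I would derive from the explicit Renyi representation together with Gamma-density tail estimates, using $s = 1/(1+\delta) < 1$ to keep the relevant moments finite; and (iii) because $\Pr[G=j]$ decays geometrically in $j$ while $\E{T_{k+j}}$ decays only polynomially, the sum over $j$ can be truncated at $j = O(\ln n)$ with geometrically small remainder. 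Combined, (i)--(iii) yield a $1+o(1)$ approximation of the finite-$n$ expected change by its closed-form Poisson-limit expression, establishing the characterization $\phi_k(\alpha,\beta,\delta) > 1$ for $n$ sufficiently large.
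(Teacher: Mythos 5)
Your proposal is correct in its main line and reaches a formula that is algebraically identical to the paper's: writing $s = 1/(1+\delta)$ and using $\beta c = \alpha^{1/(1+\delta)}c^{\delta/(1+\delta)}$, your series $\beta\,\frac{\Gamma(k)}{\Gamma(k-s)}\sum_{j\ge 1}\frac{c}{(1+c)^j}\frac{\Gamma(k+j-s)}{\Gamma(k+j)}$ is exactly the ``tail'' form of $\phi_k$ that the paper itself derives in the proof of Theorem~\ref{thm:rk_increasing}. The route, however, is genuinely different. The paper computes $\E{\xan\cdot\ind{\xan<\beta\ynko}}$ and $\E{\ynko\cdot\ind{\xan<\beta\ynko}}$ directly (Theorems~\ref{thm:cond_x_k} and~\ref{thm:cond_y_k}) by manipulating order-statistic densities so that the integrands become densities of order statistics of an enlarged pool of $n(1+c)$ (resp.\ $\beta^{1+\delta}n(1+c)$) samples, then invokes exact Gamma-ratio formulas; your argument instead Poissonizes the combined biased pool, exploits the asymptotic independence of ranks and group labels, and encodes the position of the top $X$-candidate as $k+G$ with $G$ geometric. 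Your approach is more conceptual: it makes the trigger probability $(1+c)^{-k}$ (the paper's Theorem~\ref{thm:prob}) and the monotonicity structure of $\phi_k$ in $k$ transparent, whereas the paper's computation obtains the finite-sum form of $\phi_k$ directly and quantifies the error as $1\pm O((\ln n)^2/n)$. The one place your sketch is thinner than you suggest is step (i): the finite-sample discrepancy is not merely hypergeometric-versus-Bernoulli sampling of labels, because the $Y$-values and the $X_i/\beta$-values are drawn from \emph{different} distributions whose tails are only asymptotically proportional (indeed $\Prb{X/\beta \ge t} = \beta^{-(1+\delta)}t^{-(1+\delta)}$ only for $t\ge 1$, with different behavior below). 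So the independence of labels and positions holds only after restricting to the upper tail, and you need a quantitative version of this restriction, uniform over the top $O(\ln n)$ ranks --- this is precisely the role of Lemma~\ref{lem:y_approx_c} and the truncation at $(cn/\ln n)^{1/(1+\delta)}$ in the paper, and an analogous lemma would have to be supplied to complete your argument. With that caveat, the error-control plan (uniformity of $\E{T_j}$ over $j\le O(\ln n)$ and geometric truncation of the sum over $G$) is sound and parallels the lemmas the paper actually proves.
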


Interestingly, even for larger values of $k$, there are parts of the parameter
space for which the Rooney Rule produces a positive expected change
and parts for which the Rooney Rule produces a negative expected change,
independent of the number of applicants $n$.

Figure \ref{fig:phi} depicts a view of the function $\phi_2$, by
showing the points in three-dimensional
$(\alpha,\beta,\delta)$ space for which $\phi$ takes the value $1$.
The values for which the Rooney Rule produces a positive expected change for
sufficiently large $n$ lie above this surface. 

The surface in Figure \ref{fig:phi} is fairly complex, 
and it displays unexpected non-monotonic behavior. 
For example, on certain regions of fixed $(\alpha,\beta)$,
it is non-monotonic in $\delta$, a fact which is not a priori obvious:
there are choices of $\alpha$ and $\beta$ for which the Rooney
Rule produces a positive expected change at certain ``intermediate''
values of $\delta$, but not at values of $\delta$ that are
sufficiently smaller or sufficiently larger.
Moreover, there exist $(\alpha, \delta)$ pairs above which the surface
does not exist. (One example in Figure \ref{fig:phi} occurs at
$\alpha \approx 0.3$ and $\delta \approx 3$). 
Characterizing the function $\phi$ and its level set $\phi = 1$
is challenging, and it is noteworthy that the complexity of this
function is arising from our relatively bare-bones formulation of the trade-off
in the Rooney Rule; this suggests 
the function and its properties are capturing something
inherent in the process of biased selection. 

One monotonocity result we are able to establish for the function
$\phi$ is the following, showing that for fixed $(\alpha,\beta,\delta)$,
increasing the number
of positions can't make the Rooney Rule go from beneficial to harmful.

\begin{thm}
For sufficiently large $n$ and $k = O(\ln n)$, if the Rooney Rule produces a
positive expected change at a given number of finalists $k$, it also produces a
positive expected change when there are $k + 1$ finalists (at the same
$(\alpha,\beta,\delta)$).
\label{thm:monotone}
\end{thm}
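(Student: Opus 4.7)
Write $X := X_{(\alpha n:\alpha n)}$, $Y^{(k)} := Y_{(n-k+1:n)}$, and $E_k := \{X < \beta Y^{(k)}\}$, so the Rule at $k$ finalists fires precisely on $E_k$ and (as in the setup behind Theorem~\ref{thm:characterization}) the expected change in utility is $\Delta_k := \E{\p{X-Y^{(k)}}\ind{E_k}}$. Thus I must show $\Delta_k > 0 \Rightarrow \Delta_{k+1} > 0$, equivalently $\phi_k > 1 \Rightarrow \phi_{k+1} > 1$.

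My plan is to compare $\Delta_{k+1}$ and $\Delta_k$ term by term, using the inclusion $E_{k+1} \subseteq E_k$. A direct algebraic rearrangement gives
\[
\Delta_{k+1} \;=\; \Delta_k \;+\; \E{\p{Y^{(k)} - Y^{(k+1)}}\ind{E_{k+1}}} \;-\; \E{\p{X - Y^{(k)}}\ind{E_k \setminus E_{k+1}}}.
\]
The first correction is a.s.\ non-negative. The second correction is the real obstacle: on $E_k\setminus E_{k+1}$ the biased $(k+1)$-st rank is occupied by $X$, which forces $\beta Y^{(k+1)} \leq X < \beta Y^{(k)}$. Using the R\'enyi exponential-spacing representation of power-law order statistics, $\log Y^{(k)} - \log Y^{(k+1)} = G/\p{k(1+\delta)}$ with $G \sim \mathrm{Exp}(1)$ independent of $Y^{(k+1)}$, I can integrate the second correction out in closed form, conditional on $(X, Y^{(k+1)})$, obtaining a sharp expression in $(\alpha,\beta,\delta,k)$.

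With this decomposition in hand, sign preservation reduces to a one-step inequality on the asymptotic function $\phi_k$ of Theorem~\ref{thm:characterization}: on its level set $\{\phi_k = 1\}$, the discrete derivative $\phi_{k+1}-\phi_k$ should be non-negative. I plan to verify this by substituting the closed-form correction into the formula for $\phi_k$ and signing the result via the power-law's regularly varying tail (in particular, the monotonicity of its log-density). Because the pointwise monotonicity of $\phi_k$ in $(\alpha,\beta,\delta)$ is known to fail (see the non-monotonic surface in Figure~\ref{fig:phi}), the sign check cannot be made parameter-by-parameter; it must specifically exploit the constraint $\phi_k = 1$, where cancellations between the two correction terms and the principal term of $\Delta_k$ play the decisive role.

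The main obstacle is precisely this transversality computation on $\{\phi_k=1\}$: the explicit formula for $\phi_k$ is already algebraically intricate at $k=2$, so identifying the right cancellations will be the technical heart of the proof. I expect the exponential-spacing representation to provide exactly the factorization needed to match terms on this surface. The condition $k = O(\ln n)$ is used to make the asymptotic expression for $\phi_k$ and the error terms in the integrated correction uniform across the relevant range; this parallels the error analysis in the proof of Theorem~\ref{thm:characterization} and should not introduce further difficulty.
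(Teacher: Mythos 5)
Your opening decomposition of $\Delta_{k+1}$ is algebraically correct, and the R\'enyi spacing representation for Pareto order statistics is a legitimate way to compute the correction terms. But the argument as planned has a genuine logical gap: establishing $\phi_{k+1}-\phi_k\ge 0$ only on the level set $\{\phi_k=1\}$ does not yield the theorem. The statement requires $\phi_k>1\Rightarrow\phi_{k+1}>1$ at every parameter point; controlling the discrete difference on the boundary $\{\phi_k=1\}$ says nothing about points in the open region $\{\phi_k>1\}$, where $\phi_{k+1}$ could in principle dip below $1$ even while exceeding $1$ on the boundary. Some comparison valid throughout that region (or on all of parameter space) is required, and your proposal explicitly disclaims that such a parameter-by-parameter comparison is possible.

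That disclaimer is also where you go wrong substantively. The non-monotonicity visible in Figure~\ref{fig:phi} concerns the dependence of $\phi_2$ on $\delta$ (and $\alpha$, $\beta$); it has no bearing on monotonicity in $k$. In fact the paper proves the unconditional inequality $\phi_{k+1}>\phi_k$ for all admissible $(\alpha,\beta,\delta)$ with $k\le((1-c^2)\ln n)/2$, which immediately gives the theorem with no level-set analysis. The mechanism is an identity your plan is missing: the finite sum in~\eqref{eq:phi_def} is a partial sum of the binomial series $\sum_{j=0}^{\infty}\binom{j-\frac{1}{1+\delta}}{j}(1+c)^{-j}=(1+c^{-1})^{\delta/(1+\delta)}$, so the bracketed factor in $\phi_k$ equals the tail $\sum_{j=k}^{\infty}\binom{j-\frac{1}{1+\delta}}{j}(1+c)^{-j}$, and $\phi_{k+1}-\phi_k$ collapses to a single series whose terms are positive by the elementary inequality $\frac{a+c}{b+c}>\frac{a}{b}$ for $0<a<b$ and $c>0$. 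Your route could conceivably be salvaged by replacing the level-set transversality step with a global term-by-term comparison of your two corrections against $\Delta_k$, but as written the reduction fails and the decisive computation is not carried out.
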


We prove these theorems through an analysis of the {\em order statistics}
of the underlying power law distribution.
Specifically, if we draw $m$ samples from the power law $Z$
and sort them in ascending order from lowest to highest, then
the $\ell^{\rm th}$ item in the sorted list is a random variable
denoted $Z_{(\ell:m)}$.
To analyze the effect of the Rooney Rule, we are comparing
$Y_{(n-k+1:n)}$ with $X_{(\alpha n : \alpha n)}$.
Crucially, we are concerned with their expected values
conditional on the fact that the committee chose the $k^{\rm th}$-ranked
$Y$-candidate over the top-ranked $X$-candidate, implying 
as noted above that $X_{(\alpha n : \alpha n)} / \beta < Y_{(n-k+1:n)}$.
The crucial comparison is therefore between 
$\E{Y_{(n-k+1:n)} | \xan < \beta Y_{(n-k+1:n)}}$ and 
$\E{\xan | \xan < \beta Y_{(n-k+1:n)}}$.
Order statistics conditional on this type of side information turn out to 
behave in complex ways, and hence the core of the analysis is in dealing
with these types of conditional order statistics for power law distributions.

More generally, given the ubiquity of power law distributions
\cite{clauset-power-law-survey}, we find it surprising how little is
known about how their order statistics behave qualitatively. In this
respect, the techniques we provide may prove to be independently
useful in other applications. For example, we develop a tight
asymptotic characterization of the expectations of order statistics
from a power law distribution that to our knowledge is novel.

We also note that 
although our results are expressed for sufficiently large $n$,
the convergence to the asymptotic behavior happens very quickly as
$n$ grows; to handle fixed values of $n$, we need only modify the
bounds by correction terms that grow like $\p{1 \pm O\p{\frac{(\ln n)^2}{n}}}$.
In particular, the errors in the asymptotic analysis are small once
$n$ reaches 50, which is reasonable for settings in which a
job opening receives many applications.

\xhdr{Estimating the level of bias $\beta$}
The analysis techniques we develop for proving Theorem
\ref{thm:characterization} 
can also be used for related problems in this model.
A specific question we are able to address is the problem of estimating
the amount of bias from a history of hiring decisions.

In particular, suppose that over $m$ years the hiring committee
makes one offer per year; in $N$ of the $m$ years this offer goes
to an $X$-candidate, and in $m - N$ of the $m$ years this offer
goes to a $Y$-candidate.
Which value of the bias parameter $\beta$ maximizes the probability
of this sequence of observations?

We provide a tight characterization of the solution to this question,
finding again that it
depends not only on $\alpha$ (in this case, the sequence of $\alpha$
values for each year), but also on the power law exponent $1 + \delta$.
The solution has a qualitatively natural structure, and produces $\beta = 1$
(corresponding to no bias)
as the estimate when the fraction of $X$-candidates hired over the
$m$ years is equal to the expected number that would be hired under
random selection.

\xhdr{Generalizations to other distributions}
Finally, at the end of the paper we consider how to adapt our approach
for classes of distributions other than power laws.
A different category of distributions that can be motivated by the
considerations discussed here is the set of bounded distributions,
which take values only over a finite interval.
Just as power laws are characteristic of the performance of
employees in certain professions, bounded distributions are 
appropriate when there are absolute constraints on the maximum effect
a single employee can have.

Moreover, bounded distributions are also of interest because they
contain the uniform distribution on $[0,1]$ as a special case.
We can think of this special case as describing an instance in 
which each candidate is associated with their {\em quantile}
(between $0$ and $1$) in a ranking of the entire population,
and the bias then operates on this quantile value,
reducing it in the case of $X$-candidates.

For bounded distributions, we can handle much more general forms
for the bias --- essentially, any function that reduces the values
$X_i$ strictly below the maximum of the distribution (for instance, a bias that
always prefers a $Y$-candidate to an $X$-candidate when they are within
some $\varepsilon$ of each other). When $k = 2$ and 
there are equal numbers of $X$-candidates and $Y$-candidates,
we show that for any bounded distribution and any such bias, 
the Rooney Rule produces a positive expected change in utility
for all sufficiently large $n$.

\subsection{An Illustrative Special Case: Infinite Bias} \label{sec:inf_bias}

To illustrate some of the basic considerations that go into
our analysis and its interpretation, we begin with a
simple special case that we can think of as ``infinite bias'' ---
the committee deterministically ranks every $Y$-candidate above
every $X$-candidate.  This case already exhibits structurally rich
behavior, although the complexity is enormously less than the
case of general $\beta$.  We also focus here on $k = 2$. In terms of
Figure~\ref{fig:phi}, we can visualize the infinite bias case as if we are
looking down at the plot from infinitely high up; 
thus, reasoning about infinite bias
amounts to determining which parts of the 
$(\alpha, \delta)$ plane are covered by the
surface $\phi_2(\alpha, \beta, \delta) = 1$.

With infinite bias, the committee is guaranteed to choose 
the two highest-ranked $Y$-candidates in the absence of an intervention;
with the Rooney Rule, the committee will choose the highest-ranked
$Y$-candidate and the highest-ranked $X$-candidate.
As we discuss in the next section, for power law distributions
with exponent $1 + \delta$, if $z^*$ is the expected maximum 
of $n$ draws from the distribution, then (i) the expected value of
the second-largest of the $n$ draws is $\frac{\delta}{(1 + \delta)} z^*$;
and (ii) the expected maximum of $\alpha n$ draws from the 
distribution is asymptotically $\alpha^{1/(1 + \delta)} z^*$.

This lets us directly evaluate the utility consequences of the
intervention.  If there is no intervention, the utility of the
committee's decision will be $\left(1 + \frac{\delta}{1 + \delta}\right)
z^*$, and if the Rooney Rule is used, the utility of the
committee's decision will be $(1 + \alpha^{1/(1 + \delta)}) z^*$.
Thus, the Rooney Rule produces positive expected change in utility
if and only if $\alpha^{1/(1 + \delta)} > \frac{\delta}{(1 + \delta)}$;
that is, if and only if
$\alpha > \left(\frac{\delta}{1 + \delta}\right)^{1 + \delta}$.

In addition to providing a simple closed-form expression for 
when to use the Rooney Rule in this setting,
the condition itself leads to some counter-intuitive consequences.
In particular, the closed-form expression for the condition 
makes it clear that {\em for every} $\alpha > 0$, 
there exists a sufficiently small $\delta > 0$ so that when the
distribution of applicant potentials is a power law with exponent $1 + \delta$,
using the Rooney Rule produces the higher expected utility.
In other words, with a power law exponent close to 1, it's a better
strategy to commit one of the two offers to the $X$-candidates,
even though they form an extremely small fraction of the population.

This appears to come perilously close to contradicting the following
argument.  We can arbitrarily divide the $Y$-candidates into two
sets $A$ and $B$ of $n/2$ each; and if $\alpha < 1/2$, each of 
$A$ and $B$ is larger than the set of all $X$-candidates.
Let $a^*$ be the top candidate in $A$ and $b^*$ be the top candidate in $B$.
Each of $a^*$ and $b^*$ has at least the expected value of the top
$X$-candidate, and moreover, one of them is the top 
$Y$-candidate overall.  So how can it be that choosing $a^*$ and $b^*$
fails to improve on the result of using the Rooney Rule?

The resolution is to notice that using the Rooney Rule still involves hiring
the {\em top} $Y$-candidate.  So it's not that the Rooney Rule chooses one
of $a^*$ or $b^*$ at random, together with the top $X$-candidate.
Rather, it chooses the {\em better} of $a^*$ and $b^*$, along with the top
$X$-candidate.
The real point is that
power law distributions have so much probability in the 
tail of the distribution
that the best person among a set of $\alpha n$ 
can easily have a higher expected
value than the second-best person among a set of $n$,
even when $\alpha$ is quite small.
This is a key property of power law distributions that
helps explain what's happening both in this example and in our analysis.

\subsection{A Non-Monotonicity Effect}
As noted above, much of the complexity in the analysis arises from
working with expected values of random variables
conditioned on the outcomes of certain biased comparisons.  
One might hope that expected values conditional on these types of comparisons
had tractable properties that facilitated the analysis, but
this is not the case; 
in fact, these conditional expectations exhibit 
some complicated and fairly counter-intuitive behavior.
To familiarize the reader with some of these phenomena --- both as
preparation for the subsequent sections, but also as an interesting
end in itself --- we offer the following example.

Much of our analysis involves quantities like
$\E{X | X > \beta Y}$ --- the conditional expectation of $X$, given
that it exceeds some other random variable $Y$ multiplied by a bias parameter.
(We will also be analyzing the version in which the inequality
goes in the other direction, 
but we'll focus on the current expression for now.)
If we choose $X$ and $Y$ as independent random variables both drawn from
a distribution $Z$, and then view the conditional expectation as a function
just of the bias parameter $\beta$, what can we say about the properties of
this function $f(\beta) = \E{X | X > \beta Y}$?

Intuitively we'd expect $f(\beta)$ to be monotonically increasing
in $\beta$ --- indeed, as $\beta$ increases, we're putting a stricter
lower bound on $X$, and so this ought to raise the conditional expectation
of $X$.

The surprise is that this is not true in general; we can construct
independent random variables $X$ and $Y$ for which $f(\beta)$ is not
monotonically increasing.
In fact, the random variables are very simple: we can have each of $X$ and $Y$
take values independently and uniformly from the finite set
$\{1, 5, 9, 13\}$.
Now, the event $X > 2Y$ consists of four possible pairs of $(X,Y)$ values:
(5,1), (9,1), (13,1), and (13,5).  
Thus, $f(2) = \E{X | X > 2 Y} = 10$.
In contrast, the event $X > 3Y$ consists of three possible pairs of
$(X,Y)$ values: (5,1), (9,1), and (13,1).  Thus, $f(3) = 9$, which
is a smaller value, despite the fact that $X$ is required to be 
a larger multiple of $Y$.

The surprising content of this example has a fairly sharp formulation
in terms of a story about recruiting.
Suppose that two academic departments, Department $A$ and Department $B$,
both engage in hiring each year.  
In our stylized setting, each interviews one $X$-candidate and 
one $Y$-candidate each year, and hires one of them.
Each candidate comes from the uniform distribution on $\{1, 5, 9, 13\}$.
Departments $A$ and $B$ are both biased in their hiring: 
$A$ only hires the $X$-candidate in a given year if they're more than
twice as good as the $Y$-candidate, while
$B$ only hires the $X$-candidate in a given year if they're more than
three times as good as the $Y$-candidate.

Clearly this bias hurts the average quality of both departments, $B$ more so
than $A$.  But you might intuitively expect that at least if you looked
at the $X$-candidates that $B$ has actually hired, they'd be of higher average 
quality than the $X$-candidates that $A$ has hired --- simply because
they had to pass through a stronger filter to get hired.
In fact, however, this isn't the case: despite the fact that $B$ imposes
a stronger filter, the calculations performed above for this example
show that the average quality of the $X$-candidates $B$ hires is $9$,
while the average quality of the $X$-candidates $A$ hires is $10$.

This non-monotonicity property shows that the conditional expectations
we work with in the analysis can be pathologically behaved for arbitrary
(even relatively simple) distributions.
However, we will see that with power law distributions we are able ---
with some work --- to avoid these difficulties; and part of our
analysis will include a set of explicit monotonicity results.

\section{Biased Selection with Power Law Distributions} \label{sec:pl}
Recall that for a random variable $Z$, we use $Z_{(\ell:m)}$
to denote the $\ell^{\rm th}$ order statistic in $m$ draws from $Z$:
the value in position $\ell$ when we sort $m$ independent
draws from $Z$ from lowest to highest. 
Recall also that
when selecting $k$ finalists, the Rooney Rule improves expected
utility exactly when
\[
  \E{\xan - \ynko | \xan < \beta \ynko} > 0.
\]
Using linearity of expectation and the fact that $\Prb{A|B} \Prb{B} = \Prb{A
\cdot \ind{B}}$, this is equivalent to
\begin{equation}
  \frac{\E{\xan \cdot \ind{\xan < \beta \ynko}}}{\E{\ynko \cdot \ind{\xan <
  \beta \ynko}}} > 1.
  \label{eq:decision}
\end{equation}
We will show an asymptotically tight characterization of the
tuples of parameters $(k,\alpha,\beta,\delta)$ for which this condition holds,
up to an error term on the order of
$O\p{\frac{(\ln n)^2}{n}}$. In order to better understand the terms
in~\eqref{eq:decision}, we begin with some necessary background.

\subsection{Preliminaries}
\begin{fact}
  Let $\pdf{p}{m}$ and $\cdf{p}{m}$ be, respectively, the 
  probability density function and cumulative distribution function of the 
  $p^{\rm th}$ order statistic
  out of $m$ draws from the power law distribution with
  parameter $\delta$. Using definitions from \cite{order-statistics},
  \begin{align*}
    \pdf{p}{m}(x)
    &= (1+\delta) (m-p+1) \binom{m}{p-1} \p{1-x^{-(1+\delta)}}^{p-1}
    \p{x^{-(1+\delta)}}^{m-p+1} x^{-1}
  \end{align*}
  and
  \[
    \cdf{p}{m}(x)
    = \sum_{j=p}^m \binom{m}{j} \p{1-x^{-(1+\delta)}}^j
    \p{x^{-(1+\delta)}}^{m-j}.
  \]
\end{fact}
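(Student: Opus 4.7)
The plan is to derive both formulas directly from the generic order statistic expressions applied to the power law distribution. I would first recall the two elementary ingredients for the power law with exponent $1+\delta$ on support $[1,\infty)$: the marginal CDF $F(x) = 1 - x^{-(1+\delta)}$ and the marginal density $f(x) = (1+\delta)\, x^{-(2+\delta)}$.

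For the CDF $\cdf{p}{m}$, I would argue that the event $Z_{(p:m)} \le x$ occurs exactly when at least $p$ of the $m$ independent draws land in $(-\infty, x]$. Because the number of such draws is binomial with parameters $m$ and $F(x)$, summation over $j \ge p$ yields
\[
  \cdf{p}{m}(x) \;=\; \sum_{j=p}^m \binom{m}{j} F(x)^j \bigl(1-F(x)\bigr)^{m-j},
\]
and substituting $F(x) = 1 - x^{-(1+\delta)}$ and $1-F(x) = x^{-(1+\delta)}$ gives the stated binomial sum verbatim.

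For the density $\pdf{p}{m}$, I would invoke the standard combinatorial derivation: choose which of the $m$ draws lands at $x$ (contributing a factor $m$ and the density $f(x)$), choose which $p-1$ of the remaining $m-1$ draws fall below $x$ (a factor $\binom{m-1}{p-1} F(x)^{p-1}$), and force the rest above $x$ (a factor $(1-F(x))^{m-p}$). This produces the classical expression
\[
  \pdf{p}{m}(x) \;=\; \frac{m!}{(p-1)!\,(m-p)!}\, F(x)^{p-1} \bigl(1-F(x)\bigr)^{m-p} f(x).
\]
Plugging in the power law CDF and density and splitting $x^{-(2+\delta)} = x^{-(1+\delta)} \cdot x^{-1}$ combines the $(1-F(x))^{m-p}$ factor with one additional $x^{-(1+\delta)}$ into $\bigl(x^{-(1+\delta)}\bigr)^{m-p+1}$, matching the form in the statement. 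The combinatorial coefficient is rewritten using the identity $\frac{m!}{(p-1)!(m-p)!} = (m-p+1)\binom{m}{p-1}$.

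There is essentially no obstacle here: the Fact is a direct specialization of textbook order statistic formulas (as cited), and the only bookkeeping is the factor of $x^{-1}$ peeled off from $x^{-(2+\delta)}$ and the rewriting of the multinomial coefficient, both of which are chosen so that the expression is convenient for the downstream analysis of the Rooney Rule comparison in~\eqref{eq:decision}.
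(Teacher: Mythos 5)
Your derivation is correct: the binomial argument for the CDF and the standard combinatorial derivation of the density, specialized to $F(x)=1-x^{-(1+\delta)}$ and $f(x)=(1+\delta)x^{-(2+\delta)}$, reproduce both stated formulas exactly, including the identity $\frac{m!}{(p-1)!(m-p)!}=(m-p+1)\binom{m}{p-1}$ and the peeled-off factor of $x^{-1}$. The paper offers no proof of this Fact beyond a citation to a standard order-statistics reference, and your argument is precisely the textbook derivation that citation points to, so there is nothing further to compare.
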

\begin{dfn}
  We define 
  \[
    \Gamma(a) = \int_0^\infty t^{a-1} e^{-t} \dt.
  \]
  $\Gamma(\cdot)$ is considered the continuous relaxation of the factorial, and
  it satisfies
  \[
    \Gamma(a+1) = a\Gamma(a).
  \]
  If $a$ is a positive integer, $\Gamma(a+1) = a!$.
  Furthermore, $\Gamma(a) > 1$ for $0 < a < 1$ and $\Gamma(a) < 1$ for $1 < a <
  2$.
\end{dfn}

\subsection{The Case where $k=2$}
For simplicity, we begin with the case where we're
selecting $k = 2$ finalists.
In this section, we will make several approximations, growing tight
with large $n$, that we treat formally in Appendices~\ref{app:missing_pl}
and~\ref{app:pl_proofs}. This section is intended to demonstrate the techniques
needed to understand the condition~\eqref{eq:decision}. In the case where $k=2$,
always selecting an $X$-candidate increases expected utility if and only if
\begin{equation}
  \frac{\E{\xan \cdot \ind{\xan < \beta \ynn}}}{\E{\ynn \cdot \ind{\xan <
    \beta \ynn}}} > 1.
  \label{eq:decision_2}
\end{equation}
Theorems~\ref{thm:cond_x_k} and~\ref{thm:cond_y_k} in
Appendix~\ref{app:pl_proofs} give tight approximations to these quantities;
here, we provide an outline for how to find them. For the sake of exposition, we'll
only show this for the denominator in this section, which is slightly simpler to
approximate. We begin with
\[
  \E{\ynn \cdot \ind{\xan < \beta \ynn}}
  = \int_1^\infty y \pdf{n-1}{n}(y) \cdf{\alpha n}{\alpha n}(\beta y) \dy.
\]
Letting $c = \alpha \beta^{-(1+\delta)}$, we can use Lemma~\ref{lem:y_approx_c}
and some manipulation to approximate this by
\begin{align*}
  (1+\delta) n(n-1) \int_1^\infty \p{1 - y^{-(1+\delta)}}^{n(1+c)-2}
  \p{y^{-(1+\delta)}}^2 \dy.
\end{align*}
Conveniently, the function being integrated is (up to a constant factor) $y
\cdot \pdf{n(1+c)-1}{n(1+c)}(y)$, i.e. $y$~times the probability 
density function of the second-highest
order statistic from $n(1+c)$ samples. Since
\begin{align*}
  \E{\znn}
  &= \int_1^\infty z \pdf{n(1+c)-1}{n(1+c)}(z) \dz \\
  &= (1+\delta) n(1+c) (n(1+c)-1) \int_1^\infty \p{1 -
    z^{-(1+\delta)}}^{n(1+c)-2} \p{z^{-(1+\delta)}}^2 \dz,
\end{align*}
we have
\[
  \E{\ynn \cdot \ind{\xan < \beta \ynn}} \approx \frac{1}{(1+c)^2} \E{\znn}.
\]
Then, we can use Lemmas~\ref{lem:os_approx} and~\ref{lem:recurrence} to get
$\E{\znn} \approx (1+c)^{1/(1+\delta)} \E{\ynn}$, meaning that
\begin{equation}
  \label{eq:y_cond_2}
  \E{\ynn \cdot \ind{\xan < \beta \ynn}} \approx (1+c)^{-(1+\delta/(1+\delta))}
  \E{\ynn}.
\end{equation}
For the numerator of~\eqref{eq:decision_2}, a slightly more involved calculation yields
\begin{equation}
  \E{\xan \cdot \ind{\xan < \beta \ynn}} \approx \E{\xan}\b{1 -
  (1+c^{-1})^{-\delta/(1+\delta)} \b{1 + \frac{\delta}{1+\delta}
  (1+c)^{-1}}}.
  \label{eq:x_cond_2}
\end{equation}

By Lemmas~\ref{lem:os_approx} and~\ref{lem:recurrence}, $\E{\xan} \approx
\Gamma\p{\frac{\delta}{1+\delta}} (\alpha n)^{1/(1+\delta)}$ and $\E{\ynn}
\approx \Gamma\p{1+\frac{\delta}{1+\delta}}
n^{1/(1+\delta)}$. Recall that, up to the approximations we made, the Rooney
Rule improves utility in expectation if and only if the ratio
between~\eqref{eq:x_cond_2} and~\eqref{eq:y_cond_2} is larger than 1. Therefore,
the following theorem holds:
\begin{thm} \label{thm:r_2}
  For sufficiently large $n$, the Rooney Rule with $k=2$ improves utility in
  expectation if and only if
  \begin{equation}
    \frac{\alpha^{1/(1+\delta)}\b{1 -
    (1+c^{-1})^{-\delta/(1+\delta)} \b{1 + \frac{\delta}{1+\delta}
    (1+c)^{-1}}}}{\frac{\delta}{1+\delta} (1+c)^{-1-\delta/(1+\delta)}} > 1.
    \label{eq:condition_2}
  \end{equation}
  where $c = \alpha \beta^{-(1+\delta)}$.
\end{thm}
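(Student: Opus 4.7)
The plan is to combine the four asymptotic building blocks that have already been established en route to the theorem: the denominator approximation~\eqref{eq:y_cond_2}, the numerator approximation~\eqref{eq:x_cond_2}, and the marginal expectations $\E{\xan}$ and $\E{\ynn}$ coming from Lemmas~\ref{lem:os_approx} and~\ref{lem:recurrence}. Starting from the exact criterion~\eqref{eq:decision_2}, I would substitute all four in and carry out a purely algebraic simplification, then separately handle the rigor of the approximation symbols.

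Concretely, the denominator becomes
\[
  \E{\ynn \cdot \ind{\xan < \beta \ynn}} \approx (1+c)^{-(1+\delta/(1+\delta))}\, \Gamma\!\p{1+\tfrac{\delta}{1+\delta}}\, n^{1/(1+\delta)},
\]
and the numerator becomes
\[
  \E{\xan \cdot \ind{\xan < \beta \ynn}} \approx \Gamma\!\p{\tfrac{\delta}{1+\delta}}(\alpha n)^{1/(1+\delta)} \b{1 - (1+c^{-1})^{-\delta/(1+\delta)} \b{1 + \tfrac{\delta}{1+\delta}(1+c)^{-1}}}.
\]
In the ratio, the $n^{1/(1+\delta)}$ factors cancel, and the gamma factors collapse via the recursion $\Gamma(a+1) = a\,\Gamma(a)$ applied with $a = \delta/(1+\delta)$: the $\Gamma(\delta/(1+\delta))$ on top cancels the gamma piece of $\Gamma(1+\delta/(1+\delta)) = \frac{\delta}{1+\delta}\Gamma(\delta/(1+\delta))$ in the denominator, leaving a $\frac{\delta}{1+\delta}$ factor in the denominator. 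The surviving $(\alpha n)^{1/(1+\delta)}/n^{1/(1+\delta)} = \alpha^{1/(1+\delta)}$ supplies the leading $\alpha^{1/(1+\delta)}$ in the numerator of $\phi_2$, so the ratio is exactly $\phi_2(\alpha,\beta,\delta)$. Therefore~\eqref{eq:decision_2} reduces to $\phi_2 > 1$.

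The main obstacle is not this algebra but upgrading each $\approx$ to a rigorous two-sided bound. I would lean on the error estimates developed in Appendices~\ref{app:missing_pl} and~\ref{app:pl_proofs}, which control each of the four approximations to a multiplicative factor $1 \pm O\!\p{\frac{(\ln n)^2}{n}}$. Taking the product (and quotient) of four such factors still gives an overall factor of $1 \pm O\!\p{\frac{(\ln n)^2}{n}}$ between the exact ratio in~\eqref{eq:decision_2} and $\phi_2(\alpha,\beta,\delta)$. Hence, for any fixed $(\alpha,\beta,\delta)$ with $\phi_2 \neq 1$, there is a threshold $n_0$ beyond which the sign of (exact ratio) $-\,1$ agrees with the sign of $\phi_2 - 1$, which is precisely the ``if and only if'' claim of the theorem. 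The only subtlety worth flagging is that the $\alpha n$ in $\xan$ must be interpreted as an integer (e.g.\ $\lfloor \alpha n\rfloor$), but this introduces only an $O(1/n)$ perturbation that is absorbed into the same error term.
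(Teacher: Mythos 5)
Your proposal is correct and follows essentially the same route as the paper: the paper's Section~2.2 derives exactly the approximations~\eqref{eq:y_cond_2} and~\eqref{eq:x_cond_2}, substitutes $\E{\xan} \apeq \Gamma\p{\frac{\delta}{1+\delta}}(\alpha n)^{1/(1+\delta)}$ and $\E{\ynn} \apeq \Gamma\p{1+\frac{\delta}{1+\delta}}n^{1/(1+\delta)}$ from Lemmas~\ref{lem:os_approx} and~\ref{lem:recurrence}, and cancels via $\Gamma(a+1)=a\Gamma(a)$, with the rigorous $1\pm O\p{\frac{(\ln n)^2}{n}}$ bookkeeping deferred to the appendix machinery (Theorems~\ref{thm:cond_x_k} and~\ref{thm:cond_y_k} and the $\apeq$ calculus of Appendix~\ref{app:equiv}), just as you describe.
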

Note that in the limit as $\beta \to
\infty$, $c \to 0$, and the entire expression goes to
$\alpha^{1/(1+\delta)}(1+\delta)/\delta$, as noted in
Section~\ref{sec:inf_bias}.
Although the full expression in the statement of Theorem \ref{thm:r_2} 
is complex, it can be directly
evaluated, giving a tight characterization of when the Rule yields increased
utility in expectation.

With this result, we could ask for a fixed $\alpha$ and $\delta$ how to
characterize the set of $\beta$ such that the condition in~\eqref{eq:condition_2} holds. 
In fact, we can show that this expression 
is monotonically increasing in $\beta$.
\begin{thm} \label{thm:r_2_increasing}
  The left hand side of~\eqref{eq:condition_2} is decreasing in $c$ and
  therefore increasing in $\beta$. 
  Hence for fixed $\alpha$ and $\delta$ 
  there exists $\beta^*$ such that~\eqref{eq:condition_2} holds if 
  and only if $\beta > \beta^*$.
\end{thm}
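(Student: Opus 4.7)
The plan is to reduce the monotonicity claim to Bernoulli's inequality via two algebraic simplifications. Set $\gamma := \delta/(1+\delta) \in (0,1)$. Since the prefactor $\alpha^{1/(1+\delta)}$ on the left-hand side of \eqref{eq:condition_2} does not involve $\beta$, and $c = \alpha\beta^{-(1+\delta)}$ is strictly decreasing in $\beta$ (as $\delta>0$), the statement about $\beta$ follows from showing that
\[
g(c) \;:=\; \frac{1 - (1+c^{-1})^{-\gamma}\!\left(1 + \gamma(1+c)^{-1}\right)}{\gamma(1+c)^{-1-\gamma}}
\]
is strictly decreasing on $c>0$.

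The first move is to use $(1+c^{-1})^{-\gamma} = c^{\gamma}(1+c)^{-\gamma}$ and clear denominators by multiplying the numerator and denominator of $g(c)$ by $(1+c)^{1+\gamma}$. This collapses $g$ into $g(c) = h(c)/\gamma$, where
\[
h(c) \;=\; (1+c)^{1+\gamma} - c^{\gamma}(1+c+\gamma).
\]
A direct differentiation, followed by grouping the $c^{\gamma-1}$ and $c^{\gamma}$ terms arising from the product rule, yields
\[
h'(c) \;=\; (1+\gamma)\!\left[\,(1+c)^{\gamma} - c^{\gamma} - \gamma\, c^{\gamma-1}\,\right].
\]
Thus the sign of $g'(c)$ is the sign of the bracketed expression.

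The key step is to show the bracket is negative for $c>0$. Multiplying the inequality $(1+c)^{\gamma} \le c^{\gamma} + \gamma\, c^{\gamma-1}$ by $c^{1-\gamma}>0$ gives the equivalent $c^{1-\gamma}(1+c)^{\gamma} \le c+\gamma$, and substituting $t = 1/c$ reduces it to
\[
(1+t)^{\gamma} \;\le\; 1 + \gamma t,
\]
which is Bernoulli's inequality for exponent $\gamma \in (0,1)$ and $t\ge 0$, strict whenever $t>0$. Consequently $h'(c)<0$ for all $c>0$, so $h$ and $g$ are strictly decreasing in $c$, i.e., the LHS of \eqref{eq:condition_2} is strictly increasing in $\beta$.

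For the existence of $\beta^*$, strict monotonicity (plus continuity) implies that the set $\{\beta>1 : \eqref{eq:condition_2}\text{ holds}\}$ is an upward-closed interval, whose infimum in $[1,\infty]$ is the desired threshold. The only real obstacle is spotting the right reparametrisations --- introducing $\gamma = \delta/(1+\delta)$, multiplying through by $(1+c)^{1+\gamma}$ to eliminate the negative powers, and finally putting $t=1/c$ --- which together collapse an opaque rational-power expression into Bernoulli's inequality; once these substitutions are in place the differentiation and sign analysis are routine.
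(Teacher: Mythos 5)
Your proposal is correct and follows essentially the same route as the paper: after clearing the $(1+c)^{-1-\gamma}$ denominator, both arguments differentiate $(1+c)^{1+\gamma} - c^{1+\gamma} - (1+\gamma)c^{\gamma}$ and reduce the sign condition to $(1+c^{-1})^{\gamma} < 1 + \gamma c^{-1}$, which is exactly the paper's Lemma~\ref{lem:first_expansion} (Bernoulli's inequality). The only difference is cosmetic (your substitution $t = 1/c$ versus the paper's direct multiplication by $c^{-\gamma}$).
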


\xhdr{Non-monotonicity in $\delta$} From Theorem~\ref{thm:r_2}, we can gain some
intuition for the non-monotonicity in $\delta$ shown in Figure~\ref{fig:phi}.
For $\alpha < e^{-1}$, we can show that even with infinite bias, the Rooney Rule
has a negative effect on utility for sufficiently large $\delta$. Intuitively,
this is because the condition for positive change with infinite bias is $\alpha
> \p{\frac{\delta}{1+\delta}}^{1+\delta}$, which can be written as $\alpha >
\p{1-\frac{1}{d}}^d$ for $d = 1+\delta$. Since this converges to $e^{-1}$ from
below, for sufficiently large $\delta$ and $\alpha < e^{-1}$, 
we have $\alpha < \p{\frac{\delta}{1+\delta}}^{1+\delta}$. 
On the other hand, as $\delta \to 0$,
the Rooney Rule has a more negative effect on utility. For instance, $\phi_2(.3,
10, 1) > 1$ but $\phi_2(.3, 10, .5) < 1$. Intuitively, this non-monotonicity
arises from the fact that for large $\delta$ and small $\alpha$, the Rooney Rule
always has a negative impact on utility, while for very small $\delta$, samples
are very far from each other, meaning that the bias has less effect on the
ranking.

\subsection{The General Case}
We can extend these techniques to handle larger values of $k$.
For $k \in [n]$, we define
\[
  r_k(\alpha, \beta, \delta) = \frac{\E{\xan | \xan < \beta \ynko}}{\E{\ynko |
  \xan < \beta \ynko}} = \frac{\E{\xan \cdot \ind{\xan < \beta \ynko}}}{\E{\ynko \cdot
  \ind{\xan < \beta \ynko}}}.
\]
We can see that the Rooney Rule improves expected utility when selecting $k$
candidates if and only if $r_k > 1$. While $r_k$ depends on $n$, we will show
that it is a very weak dependence: for small $k$, as $n$ increases, 
$r_k$ converges to a function of
$(\alpha, \beta, \delta, k)$ up to a $1 + O((\ln n)^2/n)$ multiplicative factor.
To make this precise, we define the following notion of asymptotic equivalence:
\begin{dfn}
  For nonnegative functions $f(n)$ and $g(n)$, define
  \[
    f(n) \apeq g(n)
  \]
  if and only if there exist $a > 0$ and $n_0 > 0$ such that
  \[
    \frac{f(n)}{g(n)} \le 1 + \frac{a (\ln n)^2}{n} ~~~~ \text{and} ~~~~
    \frac{g(n)}{f(n)} \le 1 + \frac{a (\ln n)^2}{n}
  \]
  for all $n \ge n_0$. In other words, $f(n) = g(n) \p{1 \pm O\p{\frac{(\ln
  n)^2}{n}}}$. When being explicit about $a$ and $n_0$, we'll write $f(n)
  \apeqq{a}{n_0} g(n)$.
\end{dfn}
Appendix~\ref{app:equiv} contains a series of lemmas 
establishing how to rigorously manipulate
equivalences of this form. Now, we formally define a tight
approximation to $r_k$, which serves as an expanded restatement of
Theorem~\ref{thm:characterization} from the introduction.
\begin{thm} \label{thm:rk}
  For $k \in [n]$, define
  \begin{equation}
    \phi_k(\alpha, \beta, \delta) = \frac{\alpha^{1/(1+\delta)} c^{\delta/(1+\delta)}
    (1+c)^{k - 1}}{\binom{k-1-\frac{1}{1+\delta}}{k-1}} \b{(1+c^{-1})^{\delta/(1+\delta)} -
    \sum_{j=0}^{k-1} \binom{j-\frac{1}{1+\delta}}{j} (1 + c)^{-j}}
    \label{eq:phi_def}
  \end{equation}
  where $c = \alpha \beta^{-(1+\delta)}$. Note that $\phi_k$ does not depend on
  $n$. When $(\alpha, \beta, \delta)$ are fixed, we will simply write this as
  $\phi_k$. For $k \le ((1-c^2)\ln n)/2$, we have 
  \begin{align*}
    r_k \apeq \phi_k,
  \end{align*}
  and therefore the Rooney Rule improves expected utility for sufficiently
  large $n$ if and only if $\phi_k > 1$.
\end{thm}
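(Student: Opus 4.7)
The plan is to mirror the $k=2$ derivation in Section~\ref{sec:pl}, handling the numerator and denominator of $r_k$ separately and combining them into $\phi_k$. In both pieces I would follow the same ``enlargement'' trick: rewrite integrals against power-law order statistic densities from a size-$n$ sample as integrals against order statistic densities from a size-$n(1+c)$ sample, and then invoke the asymptotic formulas for $\E{\xan}$, $\E{\ynko}$, and related quantities.

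For the denominator, I would start from
\[
\E{\ynko \cdot \ind{\xan < \beta \ynko}} = \int_1^\infty y \, \pdf{n-k+1}{n}(y) \, \cdf{\alpha n}{\alpha n}(\beta y) \dy,
\]
and use Lemma~\ref{lem:y_approx_c} to absorb the factor $\p{1-(\beta y)^{-(1+\delta)}}^{\alpha n}$ into $\p{1-y^{-(1+\delta)}}^{n-k}$, turning the integrand into something proportional to $y \cdot \pdf{n(1+c)-k+1}{n(1+c)}(y)$. The denominator then equals, up to a combinatorial prefactor, $\E{\znko}$, which Lemmas~\ref{lem:os_approx} and~\ref{lem:recurrence} relate back to $\E{\ynko}$ through an explicit power of $1+c$ (for $k=2$ this was $(1+c)^{-(1+\delta/(1+\delta))}$; the generalization to $(1+c)^{-(k + \delta/(1+\delta))}$ is straightforward).

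For the numerator, I would use the decomposition
\[
\E{\xan \cdot \ind{\xan < \beta \ynko}} = \E{\xan} - \int_\beta^\infty x \, \pdf{\alpha n}{\alpha n}(x) \, \cdf{n-k+1}{n}(x/\beta) \dx,
\]
and then expand $\cdf{n-k+1}{n}(y) = \sum_{j=0}^{k-1} \binom{n}{j} \p{y^{-(1+\delta)}}^j \p{1-y^{-(1+\delta)}}^{n-j}$, which has exactly $k$ terms and is precisely what supplies the finite sum appearing in~\eqref{eq:phi_def}. For each $j$, the same merging step used in the denominator reshapes the integrand into a multiple of the density of an appropriate order statistic from a size-$n(1+c)$ sample, and Lemma~\ref{lem:recurrence} evaluates its expectation as a power of $c$ and $1+c$ times the binomial coefficient $\binom{j - 1/(1+\delta)}{j}$. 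Summing the $k$ contributions, substituting the asymptotic forms $\E{\xan} \apeq \Gamma\p{\frac{\delta}{1+\delta}} (\alpha n)^{1/(1+\delta)}$ and $\E{\ynko} \apeq \binom{k-1-1/(1+\delta)}{k-1} \Gamma\p{\frac{\delta}{1+\delta}} n^{1/(1+\delta)}$, and applying the identity $c^{\delta/(1+\delta)} (1+c^{-1})^{\delta/(1+\delta)} = (1+c)^{\delta/(1+\delta)}$ to align the pieces, yields the explicit expression for $\phi_k$ stated in~\eqref{eq:phi_def}.

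The main obstacle is uniformly controlling the $\apeq$ errors as $k$ grows with $n$. Two issues compound: the replacement of $\p{1-y^{-(1+\delta)}}^{n-k+j} \p{1-(\beta y)^{-(1+\delta)}}^{\alpha n}$ by $\p{1-y^{-(1+\delta)}}^{n(1+c)-k+j}$ has a relative error that worsens as $k$ grows, and the numerator sums $k$ such approximations. The hypothesis $k \le ((1-c^2)\ln n)/2$ is chosen precisely so that the cumulative error --- together with the $c^j$-type weights suppressing the high-$j$ tail of the binomial expansion --- remains $O((\ln n)^2/n)$. Propagating this bound rigorously through the ratio using the $\apeq$ calculus of Appendix~\ref{app:equiv} is where most of the technical work lies.
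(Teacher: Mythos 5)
Your outline follows essentially the same route as the paper: the paper isolates the numerator and denominator estimates as Theorems~\ref{thm:cond_x_k} and~\ref{thm:cond_y_k}, proving each with exactly the steps you describe --- splitting the integral at a threshold like $(cn/\ln n)^{1/(1+\delta)}$, merging $\p{1-(\beta y)^{-(1+\delta)}}^{\alpha n}$ into the other factor via Lemma~\ref{lem:y_approx_c}, recognizing the result as an order-statistic density from a size-$n(1+c)$ (or $\beta^{1+\delta}n(1+c)$) sample, and reducing to $\E{\xan}$ and $\E{\ynko}$ via Lemmas~\ref{lem:os_approx} and~\ref{lem:recurrence} --- and then combines the two exactly as you propose. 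One slip to fix: the denominator estimate generalizes to $(1+c)^{-(k-1/(1+\delta))} = (1+c)^{-(k-1+\delta/(1+\delta))}$, not $(1+c)^{-(k+\delta/(1+\delta))}$; your stated exponent is off by one and inconsistent with the $k=2$ case you quote, since $1+\delta/(1+\delta) = 2 - 1/(1+\delta)$, and carrying it through would introduce a spurious factor of $1+c$ into $\phi_k$.
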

This condition tightly characterizes when the Rooney Rule improves expected
utility, and its asymptotic nature in $n$ becomes accurate even for 
moderately small $n$: for example, when $n = 50$, 
the error between $r_k$ and $\phi_k$ is around $1\%$ for reasonable choices of
$(\alpha, \beta, \delta)$.

\xhdr{Increasing $k$}
Consider the scenario in which we're selecting $k$ candidates, and for the given
parameter values, the Rooney Rule improves our expected utility. If we were to
instead select $k+1$ candidates, should we still be reserving a spot for an
$X$-candidate? Intuitively, as $k$ increases, the Rule is less likely to change
our selections, since we're more likely to have already chosen an $X$-candidate;
however, it is not a priori obvious whether increasing $k$ should make it better
for us to use the Rooney Rule (because we have more slots, so we're losing less
by reserving one) or worse (because as we take more candidates, we stop needing
a reserved slot).

In fact, we can apply Theorem~\ref{thm:rk} to understand how $r_k$ changes with
$k$. The following theorem, proven in Appendix~\ref{app:pl_proofs}, is
an expanded restatement of Theorem~\ref{thm:monotone}, showing that
if the Rooney Rule yields an improvement in expected quality when selecting $k$
candidates, it will do so when selecting $k+1$ candidates as well.
\begin{thm} \label{thm:rk_increasing}
  For $k \le ((1-c^2)\ln n)/2$, we have $\phi_{k+1} > \phi_k$, and therefore for
  sufficiently large $n$, we have $r_{k+1} > r_k$.
\end{thm}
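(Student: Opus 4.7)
The plan is to rewrite $\phi_k$ as a positive-term infinite series whose summands are manifestly monotone in $k$, after which both conclusions fall out cleanly. Setting $\gamma = 1/(1+\delta) \in (0,1)$ so that $\delta/(1+\delta) = 1-\gamma$, the first step is to apply the generalized binomial identity $(1-x)^{-(1-\gamma)} = \sum_{j \ge 0}\binom{j-\gamma}{j}x^j$ at $x = (1+c)^{-1}$. Because $1 - (1+c)^{-1} = c/(1+c)$, this yields
\[
  (1+c^{-1})^{\delta/(1+\delta)} \;=\; \left(\tfrac{1+c}{c}\right)^{1-\gamma} \;=\; \sum_{j \ge 0}\binom{j-\gamma}{j}(1+c)^{-j},
\]
so the bracketed factor in the definition of $\phi_k$ collapses to the tail $\sum_{j \ge k}\binom{j-\gamma}{j}(1+c)^{-j}$.

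Next, the recursion $\binom{j+1-\gamma}{j+1}/\binom{j-\gamma}{j} = (j+1-\gamma)/(j+1)$ telescopes the ratio $\binom{k-1+i-\gamma}{k-1+i}/\binom{k-1-\gamma}{k-1}$ into $\prod_{\ell=0}^{i-1}(k+\ell-\gamma)/(k+\ell)$. Reindexing $j = k-1+i$ and absorbing the prefactor $(1+c)^{k-1}/\binom{k-1-\gamma}{k-1}$ into the sum gives the clean form
\[
  \phi_k \;=\; \alpha^{1/(1+\delta)}\, c^{\delta/(1+\delta)} \sum_{i \ge 1}\left[\prod_{\ell=0}^{i-1}\frac{k+\ell-\gamma}{k+\ell}\right](1+c)^{-i}.
\]
At this point the key observation is elementary: each factor $(k+\ell-\gamma)/(k+\ell) = 1 - \gamma/(k+\ell)$ is positive and strictly increasing in $k$ for $\gamma \in (0,1)$. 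Hence every summand is strictly increasing in $k$, and $\phi_{k+1} > \phi_k$ follows with no hypothesis on $n$.

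To promote this to the asymptotic statement $r_{k+1} > r_k$, I would invoke Theorem~\ref{thm:rk} to write $r_k \apeq \phi_k$ and $r_{k+1} \apeq \phi_{k+1}$ under the hypothesis $k \le ((1-c^2)\ln n)/2$, and then check that the multiplicative $1 + O((\ln n)^2/n)$ errors do not swallow the gap $\phi_{k+1}/\phi_k - 1$. Keeping only the $i=1$ summand in the series above yields the quantitative bound $\phi_{k+1} - \phi_k \ge \alpha^{1/(1+\delta)} c^{\delta/(1+\delta)} \cdot \gamma/[k(k+1)(1+c)]$, while a geometric bound on the products gives $\phi_k \le \alpha^{1/(1+\delta)} c^{\delta/(1+\delta)}/c$. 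Thus $\phi_{k+1}/\phi_k - 1 = \Omega(1/k^2) = \Omega(1/(\ln n)^2)$ when $k = O(\ln n)$, which dominates $O((\ln n)^2/n)$ for sufficiently large $n$. The main conceptual step is the recognition of the generalized binomial series that unifies the two pieces inside the bracket of $\phi_k$; once that identification is made, the monotonicity of $\phi_k$ is a termwise observation and the only remaining work is the routine check that the approximation error is dominated by the explicit $\Omega(1/(\ln n)^2)$ gap.
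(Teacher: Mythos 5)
Your proposal is correct and follows essentially the same route as the paper's proof: both use the generalized binomial series to collapse the bracket in $\phi_k$ to a tail sum, reindex so that the $k$-dependence sits in a ratio of generalized binomial coefficients (your telescoping product $\prod_{\ell=0}^{i-1}(k+\ell-\gamma)/(k+\ell)$ is exactly the paper's $\Gamma$-ratio), and then conclude by a termwise comparison that reduces to the elementary inequality $(a+c)/(b+c) > a/b$. Your explicit $\Omega(1/k^2)$ lower bound on the relative gap $\phi_{k+1}/\phi_k - 1$, used to ensure the $1 \pm O((\ln n)^2/n)$ approximation errors from $r_k \apeq \phi_k$ cannot reverse the inequality, is a welcome extra step that the paper leaves implicit.
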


Finally, using these techniques, we can provide a tight characterization of the
probability that the Rooney Rule produces a positive change. 
Specifically, we find
the probability that the Rooney Rule has a positive effect conditioned on the
event that it changes the outcome.
\begin{thm} \label{thm:prob_pos}
  \[
    \Pr\b{\xan > \ynko | \xan < \beta \ynko} \apeq 1-\p{\frac{1 + \alpha
    \beta\tpd}{1+\alpha}}^k.
  \]
\end{thm}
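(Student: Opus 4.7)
The plan is to reduce Theorem~\ref{thm:prob_pos} to two unconditional probability computations. Writing $\Pr[A \mid B] = \Pr[A \cap B]/\Pr[B]$, and using the set-theoretic identity $\{\xan > \ynko\} \cap \{\xan < \beta\ynko\} = \{\xan < \beta\ynko\} \setminus \{\xan \leq \ynko\}$, together with the fact that $\xan$ is a continuous random variable, I get
\[
  \Pr\b{\xan > \ynko \mid \xan < \beta\ynko} = 1 - \frac{\Pr[\xan < \ynko]}{\Pr[\xan < \beta\ynko]}.
\]
So it suffices to produce $\apeq$-tight approximations of the two probabilities in the ratio.

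I would evaluate each by conditioning on $\ynko = y$ and integrating against $\pdf{n-k+1}{n}$, using $\Pr[\xan < t] = \p{1 - t\tpd}^{\alpha n}$. For $\Pr[\xan < \ynko]$, the factor $(1-y\tpd)^{\alpha n}$ combines \emph{exactly} with the $(1-y\tpd)^{n-k}$ piece of the density, so the integrand becomes a constant multiple of $\pdf{n(1+\alpha)-k+1}{n(1+\alpha)}$, which integrates to $1$. Matching binomial prefactors gives the telescoping product
\[
  \Pr[\xan < \ynko] = \frac{\binom{n}{k}}{\binom{n(1+\alpha)}{k}} = \prod_{i=0}^{k-1}\frac{n-i}{n(1+\alpha)-i} \apeq (1+\alpha)^{-k},
\]
where the final equivalence holds for $k = O(\ln n)$. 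For $\Pr[\xan < \beta\ynko]$, the corresponding factor $\p{1 - \beta\tpd y\tpd}^{\alpha n}$ does not combine exactly with $(1-y\tpd)^{n-k}$. I would invoke Lemma~\ref{lem:y_approx_c} (the same tool used in the proof of Theorem~\ref{thm:r_2}) to replace this factor by $(1-y\tpd)^{cn}$ with $c = \alpha\beta\tpd$; the same trick then yields $\Pr[\xan < \beta\ynko] \apeq \binom{n}{k}/\binom{n(1+c)}{k} \apeq (1+c)^{-k}$. Chaining the two estimates through the ratio via the manipulation lemmas in Appendix~\ref{app:equiv} gives
\[
  \Pr\b{\xan > \ynko \mid \xan < \beta\ynko} \apeq 1 - \p{\frac{1+c}{1+\alpha}}^k = 1 - \p{\frac{1+\alpha\beta\tpd}{1+\alpha}}^k,
\]
matching the statement of the theorem.

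The main obstacle is controlling the $\apeq$ equivalence in the denominator: Lemma~\ref{lem:y_approx_c} is only tight on the window of $y$ where the $\ynko$ density concentrates, so rigor requires splitting the integral into a main region (where $y\tpd$ is small and the two exponentials agree up to a $1 \pm O((\ln n)^2/n)$ factor) and a tail contribution that must be shown negligible using concentration of $\ynko$. The condition $k \le ((1-c^2)\ln n)/2$ inherited from Theorem~\ref{thm:rk} is precisely what is needed to absorb the accumulated error both in the Lemma~\ref{lem:y_approx_c} substitution and in passing from $\binom{n}{k}/\binom{n(1+c)}{k}$ to $(1+c)^{-k}$; the latter bound follows from expanding $\prod_{i=0}^{k-1}(1 - ic/(n(1+c)))$ and observing that its logarithm is $O(k^2/n) = O((\ln n)^2/n)$.
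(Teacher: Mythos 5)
Your proposal is correct and follows essentially the same route as the paper: the paper likewise writes the conditional probability as $1 - \Pr[\xan < \ynko]/\Pr[\xan < \beta\ynko]$ and then invokes its Theorem~\ref{thm:prob} (whose proof is exactly the splitting/Lemma~\ref{lem:y_approx_c}/Lemma~\ref{lem:binom_ratio} argument you sketch, with the numerator being the $c=\alpha$ specialization) to approximate the two probabilities by $(1+\alpha)^{-k}$ and $(1+\alpha\beta^{-(1+\delta)})^{-k}$. Your observation that the numerator telescopes exactly to $\binom{n}{k}/\binom{n(1+\alpha)}{k}$ is a nice small sharpening, but the overall argument is the same.
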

To determine whether the Rooney Rule is more likely than not to produce
a positive effect (conditioned on changing the outcome), 
we can compare the right-hand side to $1/2$.

Note that in the case of infinite bias, the right-hand side
becomes $1-(1+\alpha)^{-k}$, and
thus, the Rooney Rule produces positive change with probability at least $1/2$
if and only if $\alpha \ge \sqrt[k]{2} - 1$. 
It is interesting to observe that this means with infinite bias, 
the condition is independent of $\delta$;
in contrast, when considering the effect on 
the expected value with infinite bias,
as we did in Section~\ref{sec:inf_bias}, the expected
change in utility due to the Rooney Rule did depend on $\delta$.

\subsection{Maximum Likelihood Estimation of $\beta$}
The techniques established thus far make it possible to answer
other related questions, including the following type of question
that we consider in this section:
``Given some historical data on past selections, can we estimate the bias
present in the data?'' For example, suppose that for the last $m$ years, a firm
has selected one candidate for each year $i$ out of a pool of $\alpha_i n_i$
$X$-candidates and $n_i$ $Y$-candidates. If all applicants are
assumed to come from the same underlying distribution, then it is easy to see
that the expected number of $X$-selections (in the absence of bias) should be
\[
  \sum_{i=1}^m \frac{\alpha_i}{1 + \alpha_i},
\]
regardless of what distribution the applicants come from. However, if there is
bias in the selection procedure, then this quantity now depends on the bias model
and parameters of the distribution. In particular, in our model, we can use
Theorem \ref{thm:prob} to get
\[
  \Prb{\xan < \beta \yn} \apeq \frac{1}{1 + \alpha \beta^{-(1+\delta)}}.
\]
This gives us the following approximation for the likelihood of the data $D =
(M_1, \dots, M_m)$ given $\beta$, where $M_i$ is $1$ if an $X$-candidate was
selected in year $i$ and $0$ otherwise:
\[
  \prod_{i=1}^m (1-M_i) \cdot \frac{1}{1 + \alpha_i
  \beta^{-(1+\delta)}} + M_i \cdot \frac{\alpha_i \beta\tpd}{1 + \alpha_i
  \beta^{-(1+\delta)}}.
\]
Taking logarithms, this is
\[
  \sum_{i : M_i = 1} \log(\alpha_i \beta\tpd) - \sum_{i=1}^m \log(1 + \alpha_i
  \beta\tpd),
\]
and maximizing this is equivalent to maximizing
\[
  \sum_{i : M_i = 1} \log(\beta\tpd) - \sum_{i=1}^m \log(1 + \alpha_i
  \beta\tpd) = N\log(\beta\tpd) - \sum_{i=1}^m \log(1 + \alpha_i
  \beta\tpd)
\]
where $N$ is the number of $X$-candidates selected. Taking the derivative
with respect to $\beta$, we get
\[
  -(1+\delta) N \beta^{-1} + (1+\delta) \sum_{i=1}^m \frac{\alpha_i
    \beta^{-(2+\delta)}}{1+\alpha_i
  \beta\tpd}.
\]
Setting this equal to 0 and canceling common terms, we have
\begin{align*}
  \sum_{i=1}^m \frac{1}{1+\alpha_i^{-1} \beta^{1+\delta}} &= N
\end{align*}
Since each $1/(1 + \alpha_i^{-1} \beta^{1+\delta})$ is strictly monotonically
decreasing in $\beta$, there is a unique $\hat \beta$ for which equality holds,
meaning that the likelihood is uniquely maximized by $\hat \beta$, up to the $1 \pm
O((\ln n)^2/n)$ approximation we made for $\Prb{\xan < \beta \yn}$. In the
special case where $\alpha_i = \alpha$ for $i = 1, \dots, m$, then the solution
is given by
\[
  \hat \beta = \p{\p{\frac{m}{N} - 1}\alpha}^{1/(1+\delta)}.
\]

\section{Biased Selection with Bounded Distributions} \label{sec:bounded}
In this section, we consider a model in which applicants come from a
distribution with bounded support. Qualitatively, one would expect different
results here from those with power law distributions because in
a model with bounded distributions, 
we expect that for large $n$, the top order statistics of any
distribution will concentrate around the maximum of that distribution. As a
result, when there is even a small amount of bias against one population, for
large $n$ the probability that \textit{any} of the samples with the highest
perceived quality come from that population goes to 0. This means that the
Rooney Rule has an effect with high
probability, and the effect is positive if the unconditional expectation of the
top $X$-candidate is larger than the unconditional expectation of the
$Y$-candidate that it replaces.

We focus on the case when $\alpha = 1$, meaning we have equal numbers of applicants
from both populations. We use the same order statistic notation as before. While
all of our previous results have modeled the bias as a
multiplicative factor $\beta$, we can in fact show that in the bounded
distribution setting, for any model of bias
$\tilde{X}_{(k:n)} = b(X_{(k:n)})$ such
that $b(x) < T$ for $x \ge 0$, where $T$ is strictly less than the maximum of the
distribution, the Rooney Rule increases expected utility. Unlike in the previous
section the following theorem and analysis are by no means a tight
characterization; instead, this is an existence proof that for bounded
distributions, there is always a large enough $n$ such that the Rooney Rule
improves utility in expectation. We prove our results for continuous
distributions with support $[0,1]$, but a simple scaling argument shows that
this extends to any continuous distribution with bounded nonnegative support --
specifically, we scale a distribution such that $\inf_{x : f(x) > 0} = 0$ and
$\sup_{x : f(x) > 0} = 1$.
\begin{thm} \label{thm:bounded}
  If $f$ is a continuous probability density function on $[0,1]$ such that
  $\sup_{x : f(x) > 0} = 1$ and $\tilde{X}_{(n:n)} = b(\xn)$ is never more than $T
  < 1$, then for
  large enough $n$,
  \[
    \E{\xn - \ynn | b(\xn) < \ynn} > 0.
  \]
\end{thm}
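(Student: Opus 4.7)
The plan is to show that conditioning on the Rooney-Rule trigger event $b(\xn) < \ynn$ barely perturbs the unconditional expectation of $\xn - \ynn$, because the complementary event has exponentially small probability while the unconditional expectation decays only polynomially. Since the conditional probability $\Prb{b(\xn) < \ynn}$ is strictly positive for large $n$, it suffices to show
\[
\E{(\xn - \ynn) \ind{b(\xn) < \ynn}} = \E{\xn - \ynn} - \E{(\xn - \ynn) \ind{b(\xn) \ge \ynn}} > 0.
\]

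For the error term, I would use $b(\xn) \le T$ to conclude $\{b(\xn) \ge \ynn\} \subseteq \{\ynn \le T\}$, and since $\xn - \ynn \le 1$,
\[
\E{(\xn - \ynn) \ind{b(\xn) \ge \ynn}} \le \Prb{\ynn \le T} = (1-p)^{n-1}\bigl(1 + (n-1)p\bigr),
\]
where $p = 1 - F(T)$. Because $T < 1$ equals the supremum of the support of $f$ and $f$ is continuous, $f$ has positive mass on some neighborhood above $T$, so $p > 0$ and the bound decays exponentially in $n$.

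For the main term, I would start from the order-statistic identity
\[
\E{\xn - \ynn} = \int_0^1 \bigl[\Prb{\xn > t} - \Prb{\ynn > t}\bigr] \dt = n \int_0^1 (1-F(t)) F(t)^{n-1} \dt,
\]
and use continuity of $f$ on the compact interval $[0,1]$ (which forces $f \le M$ for some constant $M$) to lower-bound
\[
n \int_0^1 (1-F(t)) F(t)^{n-1} \dt \ge \frac{n}{M} \int_0^1 (1-F(t)) F(t)^{n-1} f(t) \dt = \frac{n}{M} \cdot \frac{1}{n(n+1)} = \frac{1}{M(n+1)},
\]
using the substitution $u = F(t)$ on the right. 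Thus $\E{\xn - \ynn} = \Omega(1/n)$, which dominates the exponentially small correction term for all sufficiently large $n$, giving the theorem.

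The main obstacle is the lower bound on $\E{\xn - \ynn}$: one might worry that for a density highly concentrated near $1$ the two top order statistics could be close enough to make this gap exponentially small, but the boundedness of $f$ (free from continuity on a compact set) rules this out and yields the $\Omega(1/n)$ rate we need to beat the exponential decay of the correction term coming from $\Prb{\ynn \le T}$.
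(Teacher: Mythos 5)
Your proof is correct, and it takes a genuinely different route from the paper's. The paper conditions on the event $L$ that both second-highest order statistics exceed $T$, observes $L \Rightarrow G$ (where $G$ is the event the Rule changes the outcome), and then must lower-bound the conditional expectation $\E{\xn - \ynn \mid L}$; this requires analyzing order statistics of the distribution truncated to $[T,1]$ and produces a lower bound $K(F(T)+\eta)^{n-1}$ that is itself exponentially small, so the proof closes by comparing two exponentially decaying quantities with different bases. You instead decompose the unconditional expectation, $\E{(\xn-\ynn)\ind{G}} = \E{\xn-\ynn} - \E{(\xn-\ynn)\ind{\overline{G}}}$, use $\xn \stackrel{d}{=} \yn$ (valid here since $\alpha = 1$) to get the clean identity $\E{\xn - \ynn} = n\int_0^1 (1-F(t))F(t)^{n-1}\dt$, and lower-bound it by $1/(M(n+1))$ via the boundedness of $f$ (free from continuity on $[0,1]$); the correction term is at most $\Prb{\ynn \le T}$, which is exponentially small since $F(T) < 1$. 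Your approach is both simpler and quantitatively stronger (a polynomial-versus-exponential comparison rather than exponential-versus-exponential), and it avoids the truncated-distribution machinery of the paper's Lemma on $\E{D \mid L}$ entirely. The only points worth making explicit in a final write-up are the two facts you use implicitly: $\Prb{G} > 0$ for large $n$ (immediate from $\Prb{\ynn > T} \to 1$), and $F(T) < 1$ (which follows from continuity of $f$ together with $\sup\{x : f(x) > 0\} = 1 > T$).
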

While we the defer the full proof to Appendix~\ref{app:missing_bd},
the strategy for the proof is as follows:
\begin{enumerate}
  \item With high probability, $\xn$ and $\ynn$ are both large.
  \item Whenever $\xn$ and $\ynn$ are large, $\xn$ is
    significantly larger than $\ynn$.
  \item The gain from switching from $\ynn$ to $\xn$ when $\xn$
    and $\ynn$ are both large outweighs the loss when at least one of them is
    not large.
\end{enumerate}

\section{Conclusion}
In this work we have presented a model for implicit bias in a
selection problem motivated by settings including hiring and admissions, 
and we analyzed the Rooney Rule, which
can improve the quality of the resulting choices.
For one of the most natural settings of the problem, when
candidates are drawn from a power-law distribution,
we found a tight characterization of the conditions
under which the Rooney Rule improves the quality of the outcome.
In the process, we identified a 
number of counter-intuitive effects at work,
which we believe may also help provide insight into how we can
reason about implicit bias.
Our techniques also provided a natural solution to an inference problem in 
which we estimate parameters of a biased decision-making process.
Finally, we performed a similar
type of analysis on general bounded distributions.

There are a number of further directions in which these issues could
be investigated. One intriguing direction is to consider the possible
connections to the theory of optimal delegation (see e.g.
\cite{alonso-delegation}).\footnote{We thank Ilya Segal for suggesting
this connection to us.} In the study of delegation, a {\em principal}
wants a task carried out, but this task can only be performed by an {\em
agent} who may have a utility function that is different from the
principal's. In an important family of these models, the principal's
only recourse is to impose a restriction on the set of possible
actions taken by the agent, creating a more constrained task for the
agent to perform, in a way that can potentially improve the quality of the
eventual outcome from the principal's perspective. Our analysis of the
Rooney Rule can be viewed as taking place from the point of view of a principal
who is trying to recruit $k$ candidates, 
but where the process must be delegated
to an agent whose utilities for $X$-candidates and
$Y$-candidates are different from the principal's,
and who is the only party able to evaluate these candidates' potentials. 
The Rooney Rule, requiring that the agent select at least
one $X$-candidate, is an example of a mechanism that the principal
could impose to restrict the agent's set of possible actions, 
potentially improving the quality of
the selected candidates as measured by the principal.
More generally, it is interesting to ask whether there are other
contexts where such a link between delegation and this type of biased 
selection provides insight.

Our framework also makes it possible to naturally explore extensions 
of the basic model.
First, the model can be generalized to include noisy observations, 
potentially with a different level of noise for each group.
It would also be interesting to analyze generalizations of the
Rooney Rule; for example, if we were to define the
{\em $\ell^{\rm th}$-order Rooney Rule} to be the requirement that
at least $\ell$ of $k$ finalists must be from an underrepresented group,
we could ask which $\ell$ produces the greatest increase in utility
for a given set of parameters.  
Finally, we could benefit from a deeper undestanding of 
the function $\phi$ that appears in our main theorems. For example, while we
showed in Theorem~\ref{thm:monotone} that $\phi$ is monotone in
$\beta$ for $k = 2$, Figure~\ref{fig:phi} shows that $\phi$ is clearly
not monotone in $\delta$.  A better understaning of the function
$\phi$ may lead to new insights into our model and into the
phenomena it seeks to capture.

\paragraph{Acknowledgements} We thank Eric Parsonnet for his invaluable
technical insights.  This work was supported in part by a Simons
Investigator Grant and an NSF Graduate Fellowship.

\bibliographystyle{plain}
\bibliography{refs}

\appendix
\section{Missing Proofs for Section~\ref{sec:pl}} \label{app:missing_pl}
\begin{proof}[Proof of Theorems~\ref{thm:r_2} and~\ref{thm:rk}]
  We can expand the statement in Theorem \ref{thm:cond_x_k} to
  \begin{align*}
    &\E{\xan \cdot \ind{\xan < \beta \ynko}} \\
    &\apeq \E{\xan} \b{1 -
    (1+c^{-1})^{-\delta/(1+\delta)} \sum_{j=0}^{k-1}
    \binom{j-\frac{1}{1+\delta}}{j} (1+c)^{-j}} \\
    &\apeq (\alpha n)^{1/(1+\delta)}\Gamma\p{\frac{\delta}{1+\delta}}
    \b{1 - (1+c^{-1})^{-\delta/(1+\delta)} \sum_{j=0}^{k-1}
    \binom{j-\frac{1}{1+\delta}}{j} (1+c)^{-j}} \tag{By
      Lemma~\ref{lem:os_approx}}
  \end{align*}
  This gives us a ratio
  \begin{align*}
    r_k(\alpha, \beta, \delta) &= \frac{\E{\xan \cdot \ind{\xan < \beta \ynko}}}{\E{\ynko \cdot
    \ind{\xan < \beta \ynko}}} \\
    &\apeq \frac{(\alpha n)^{1/(1+\delta)}\Gamma\p{\frac{\delta}{1+\delta}}
    \b{1 - (1+c^{-1})^{-\delta/(1+\delta)} \sum_{j=0}^{k-1}
    \binom{j-\frac{1}{1+\delta}}{j} (1+c)^{-j}}}{(1 + c)^{-(k - 1/(1+\delta))}
    \frac{\Gamma\p{k - \frac{1}{1+\delta}}}{\Gamma(k)} n^{1/(1+\delta)}}
    \tag{Using Theorem~\ref{thm:cond_y_k}} \\
    &= \frac{\alpha^{1/(1+\delta)} \Gamma(k)\Gamma\p{\frac{\delta}{1+\delta}}
    (1+c)^{k-1/(1+\delta)}}{\Gamma\p{k - \frac{1}{1+\delta}}}
    \b{1 - (c^{-1}(1+c))^{-\delta/(1+\delta)} \sum_{j=0}^{k-1}
    \binom{j-\frac{1}{1+\delta}}{j} (1 + c)^{-j}} \\
    &= \frac{\alpha^{1/(1+\delta)} c^{\delta/(1+\delta)}
    \Gamma(k)\Gamma\p{\frac{\delta}{1+\delta}} (1+c)^{k - 1}}{\Gamma\p{k -
    \frac{1}{1+\delta}}} \b{(1+c^{-1})^{\delta/(1+\delta)} -
    \sum_{j=0}^{k-1} \binom{j-\frac{1}{1+\delta}}{j} (1 + c)^{-j}} \\
    &= \frac{\alpha^{1/(1+\delta)} c^{\delta/(1+\delta)}
    (1+c)^{k - 1}}{\binom{k-1-\frac{1}{1+\delta}}{k-1}} \b{(1+c^{-1})^{\delta/(1+\delta)} -
    \sum_{j=0}^{k-1} \binom{j-\frac{1}{1+\delta}}{j} (1 + c)^{-j}}
  \end{align*}
\end{proof}

\begin{proof}[Proof of Theorem~\ref{thm:r_2_increasing}]
  Since the only influence of $\beta$ is through $c$ and $c$ is
  decreasing in $\beta$, it is sufficient to show that
  \begin{equation*}
    \frac{\alpha^{1/(1+\delta)}\b{1 -
    (1+c^{-1})^{-\delta/(1+\delta)} \b{1 + \frac{\delta}{1+\delta}
    (1+c)^{-1}}}}{\frac{\delta}{1+\delta} (1+c)^{-1-\delta/(1+\delta)}}
  \end{equation*}
  is decreasing in $c$. Ignoring constants, this is
  \begin{align*}
    &\propto c^{\delta/(1+\delta)} (1+c) \b{(1 + c^{-1})^{\delta/(1+\delta)}
    - 1 - \frac{\delta}{1+\delta} (1+c)^{-1}} \\
    &= (1+c)^{1 + \delta/(1+\delta)} - c^{\delta/(1+\delta)}(1+c) -
    \frac{\delta}{1+\delta} c^{\delta/(1+\delta)} \\
    &= (1+c)^{1+\delta/(1+\delta)} - c^{1+\delta/(1+\delta)} - \p{1 +
      \frac{\delta}{1+\delta}} c^{\delta/(1+\delta)}
  \end{align*}
  This has derivative
  \begin{align*}
    \frac{\delta}{dc} &(1+c)^{1+\delta/(1+\delta)} - c^{1+\delta/(1+\delta)} -
    \p{1 + \frac{\delta}{1+\delta}} c^{\delta/(1+\delta)} \\
    &= \p{1 + \frac{\delta}{1+\delta}}(1+c)^{\delta/(1+\delta)} - \p{1 +
    \frac{\delta}{1+\delta}} c^{\delta/(1+\delta)} +
    \p{\frac{\delta}{1+\delta}} \p{1+\frac{\delta}{1+\delta}} c^{-1/(1+\delta)},
  \end{align*}
  which is negative if and only if
  \begin{align*}
    &(1+c)^{\delta/(1+\delta)} < c^{\delta/(1+\delta)} + \frac{\delta}{1+\delta}
    c^{-1/(1+\delta)} \\
    \Longleftrightarrow &(1+c)^{\delta/(1+\delta)} c^{-\delta/(1+\delta)} < 1 +
    \frac{\delta}{1+\delta} c^{-1} \\
    \Longleftrightarrow &(1+c^{-1})^{\delta/(1+\delta)} < 1 +
    \frac{\delta}{1+\delta} c^{-1}.
  \end{align*}
  This is true by Lemma \ref{lem:first_expansion}, which proves the theorem.
\end{proof}

\begin{proof}[Proof of Theorem \ref{thm:rk_increasing}]
  By Theorem~\ref{thm:rk},
  \begin{align*}
    \phi_k(\alpha, \beta, \delta)
    = \frac{\alpha^{1/(1+\delta)} c^{\delta/(1+\delta)}
    \Gamma(k)\Gamma\p{\frac{\delta}{1+\delta}} (1+c)^{k - 1}}{\Gamma\p{k -
    \frac{1}{1+\delta}}} \b{(1+c^{-1})^{\delta/(1+\delta)} -
    \sum_{j=0}^{k-1} \binom{j-\frac{1}{1+\delta}}{j} (1 + c)^{-j}}
  \end{align*}
  We use the fact that for $a, b \in
  \mathbb{Z}$ and $s \in \mathbb{R}$
  \[
    \frac{\Gamma(s-a+1)}{\Gamma(s-b+1)} = (-1)^{b-a}
    \frac{\Gamma(b-s)}{\Gamma(a-s)}.
  \]
  If the summation went to $\infty$, it would be
  \begin{align*}
    \sum_{j=0}^\infty \binom{j - \frac{1}{1+\delta}}{j} (1+c)^{-j}
    &= \sum_{j=0}^\infty (1 + c)^{-j} \frac{\Gamma\p{j +
    \frac{\delta}{1+\delta}}}{\Gamma\p{\frac{\delta}{1+\delta}}\Gamma(j+1)} \\
    &= \sum_{j=0}^\infty (1 + c)^{-j} (-1)^j \frac{\Gamma\p{1 -
      \frac{\delta}{1+\delta}}}{\Gamma\p{-j + 1 +
        \frac{\delta}{1+\delta}}\Gamma(j+1)} \\
    &= \sum_{j=0}^\infty \binom{-\frac{\delta}{1+\delta}}{j} (-(1+c)^{-1})^j \\
    &= (1 - (1+c)^{-1})^{-\delta/(1+\delta)} \\
    &= (1+c^{-1})^{\delta/(1+\delta)}
  \end{align*}
  Therefore,
  \[
    \sum_{j=0}^{k-1} \binom{j - \frac{1}{1+\delta}}{j} (1+c)^{-j} =
    (1+c^{-1})^{\delta/(1+\delta)} - 
    \sum_{j=k}^\infty \binom{j - \frac{1}{1+\delta}}{j} (1+c)^{-j}.
  \]
  Plugging this in,
  \begin{align*}
    \phi_k(\alpha, \beta, \delta)
    &= \frac{\alpha^{1/(1+\delta)} c^{\delta/(1+\delta)}
    \Gamma(k)\Gamma\p{\frac{\delta}{1+\delta}} (1+c)^{k - 1}}{\Gamma\p{k -
    \frac{1}{1+\delta}}} \sum_{j=k}^\infty \binom{j -
    \frac{1}{1+\delta}}{j} (1+c)^{-j} \\
    &= \frac{\alpha^{1/(1+\delta)} c^{\delta/(1+\delta)}
    \Gamma(k)\Gamma\p{\frac{\delta}{1+\delta}}}{\Gamma\p{k -
    \frac{1}{1+\delta}}(1+c)} \sum_{j=0}^\infty \binom{j + k -
    \frac{1}{1+\delta}}{j + k} (1+c)^{-j}
  \end{align*}
  With this, we can take
  \begin{align*}
    \phi_{k+1}(\alpha, \beta, \delta) &- \phi_k(\alpha, \beta, \delta) \\
    &= \frac{\alpha^{1/(1+\delta)} c^{\delta/(1+\delta)}
    \Gamma\p{\frac{\delta}{1+\delta}}}{(1+c)}
    \left[\frac{\Gamma(k+1)}{\Gamma\p{k + \frac{\delta}{1+\delta}}}
      \sum_{j=0}^\infty \binom{j + k +1 -
    \frac{1}{1+\delta}}{j + k + 1} (1+c)^{-j} \right. \\
    &- \left. \frac{\Gamma(k)}{\Gamma\p{k -
      \frac{1}{1+\delta}}} \sum_{j=0}^\infty \binom{j + k -
    \frac{1}{1+\delta}}{j + k} (1+c)^{-j}\right] \\
    &= \frac{\alpha^{1/(1+\delta)} c^{\delta/(1+\delta)}
    \Gamma(k)\Gamma\p{\frac{\delta}{1+\delta}}}{\Gamma\p{k -
    \frac{1}{1+\delta}}(1+c)}
    \left[\frac{k}{k - \frac{1}{1+\delta}}
      \sum_{j=0}^\infty \binom{j + k +1 -
    \frac{1}{1+\delta}}{j + k + 1} (1+c)^{-j} \right. \\
    &- \left. \sum_{j=0}^\infty \binom{j + k -
    \frac{1}{1+\delta}}{j + k} (1+c)^{-j}\right] \\
    &= \frac{\alpha^{1/(1+\delta)} c^{\delta/(1+\delta)}
    \Gamma(k)\Gamma\p{\frac{\delta}{1+\delta}}}{\Gamma\p{k -
    \frac{1}{1+\delta}}(1+c)} \sum_{j=0}^\infty  (1+c)^{-j}
    \b{\frac{k}{k - \frac{1}{1+\delta}} \binom{j + k +1 - \frac{1}{1+\delta}}{j
    + k + 1} - \binom{j + k - \frac{1}{1+\delta}}{j + k}}
  \end{align*}
  Thus, to show that $\phi_{k+1} > \phi_k$, it is sufficient to show that for $j \ge
  0$,
  \begin{align*}
    \frac{k}{k - \frac{1}{1+\delta}} \binom{j + k +1 - \frac{1}{1+\delta}}{j
    + k + 1} - \binom{j + k - \frac{1}{1+\delta}}{j + k} &> 0 \\
    \frac{k}{k - \frac{1}{1+\delta}} \frac{\Gamma\p{j + k + 1 +
    \frac{\delta}{1+\delta}}}{\Gamma(j+k+2)\Gamma\p{\frac{\delta}{1+\delta}}}
    - \frac{\Gamma\p{j + k +
    \frac{\delta}{1+\delta}}}{\Gamma(j+k+1)\Gamma\p{\frac{\delta}{1+\delta}}}
    &> 0 \\
    \frac{k}{k - \frac{1}{1+\delta}} \frac{j+k+\frac{\delta}{1+\delta}}{j+k+1} -
    1 &> 0 \tag{$\Gamma(x+1) = x\Gamma(x)$} \\
    \frac{k-\frac{1}{1+\delta} + (j+1)}{k + (j+1)} &> \frac{k -
      \frac{1}{1+\delta}}{k}
  \end{align*}
  The last inequality holds by Lemma \ref{lem:frac_bigger}. As a result a
  result, $\phi_{k+1} > \phi_k$, proving the theorem.
\end{proof}
\begin{proof}[Proof Theorem~\ref{thm:prob_pos}]
  We want to find
  \[
    \Pr\b{\xan > \ynko | \xan < \beta \ynko},
  \]
  or equivalently,
  \[
    \Pr\b{\xan < \ynko | \xan < \beta \ynko}.
  \]
  This can be written as
  \begin{equation}
    \frac{\Pr\b{\xan < \ynko \cap \xan < \beta \ynko}}{\Pr\b{\xan < \beta \ynko}} =
    \frac{\Pr\b{\xan < \ynko}}{\Pr\b{\xan < \beta \ynko}}.
    \label{eq:ratio}
  \end{equation}
  By Theorem~\ref{thm:prob}, the numerator can be approximated by
  $(1+\alpha)^{-k}$ while the denominaotr is approximately $(1+\alpha
  \beta\tpd)^{-k}$. Thus, we have
  \begin{equation*}
    \Pr\b{\xan < \ynko | \xan < \beta \ynko} \approxident \frac{(1+\alpha
    \beta^{-(1+\delta)})^k}{(1+\alpha)^k} = \p{\frac{1+\alpha
    \beta^{-(1+\delta)}}{1+\alpha}}^k,
  \end{equation*}
  and therefore
  \[
    \Pr\b{\xan > \ynko | \xan < \beta \ynko} \approxident 1 - \p{\frac{1+\alpha
    \beta^{-(1+\delta)}}{1+\alpha}}^k.
  \]
\end{proof}

\section{Additional Theorems for Power Laws} \label{app:pl_proofs}
\begin{thm} \label{thm:cond_x_k}
  \[
    \E{\xan \cdot \ind{\xan < \beta \ynko}} \apeq \E{\xan} \b{1 -
    (1+c^{-1})^{-\delta/(1+\delta)} \sum_{j=0}^{k-1}
    \binom{j-\frac{1}{1+\delta}}{j} (1+c)^{-j}}
  \]
  where $c = \alpha \beta^{-(1+\delta)}$.
\end{thm}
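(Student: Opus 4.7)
The plan is to write $\E{\xan \cdot \ind{\xan < \beta \ynko}} = \E{\xan} - \E{\xan \cdot \ind{\xan \ge \beta \ynko}}$ and attack the complementary term directly. Integrating out $\ynko$ first gives
\[
\E{\xan \cdot \ind{\xan \ge \beta \ynko}} = \int_\beta^\infty x \pdf{\alpha n}{\alpha n}(x) \cdf{n-k+1}{n}(x/\beta) \dx,
\]
where the lower limit is $\beta$ because $\cdf{n-k+1}{n}(t) = 0$ for $t < 1$. Reading the event $\ynko \le x/\beta$ as ``at most $k-1$ of the $n$ $Y$-samples exceed $x/\beta$'' yields $\cdf{n-k+1}{n}(x/\beta) = \sum_{j=0}^{k-1} \binom{n}{j} \p{\beta^{1+\delta} x^{-(1+\delta)}}^j \p{1 - \beta^{1+\delta} x^{-(1+\delta)}}^{n-j}$, so the integral splits as a sum of $k$ contributions $T_0, \ldots, T_{k-1}$ indexed by $j$.

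For each $j$, I apply the change of variables $w = \beta^{1+\delta} x^{-(1+\delta)}$, which bijects $[\beta,\infty)$ onto $[0,1]$ and, after routine algebra, reduces the $j$-th term to
\[
T_j = \binom{n}{j} \alpha n \, \beta^{-\delta} \int_0^1 \p{1 - \beta^{-(1+\delta)} w}^{\alpha n - 1} (1-w)^{n-j} w^{j - 1/(1+\delta)} \dw.
\]
The pivotal step is to replace $\p{1 - \beta^{-(1+\delta)} w}^{\alpha n - 1}$ by $(1-w)^{cn - 1}$, the same substitution used on the denominator in the $k = 2$ outline (Lemma~\ref{lem:y_approx_c}); it reflects the fact that $\xan/\beta$ asymptotically behaves like the max of $cn = \alpha n \beta^{-(1+\delta)}$ independent $Z$-samples. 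Combining the two $(1-w)$ factors into $(1-w)^{(1+c)n - j - 1}$ turns the integral into the Beta function $B\p{j + \tfrac{\delta}{1+\delta},\, (1+c)n - j}$.

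What remains is bookkeeping. Applying the Gamma-ratio asymptotic $\Gamma((1+c)n - j)/\Gamma((1+c)n + \delta/(1+\delta)) \apeq ((1+c)n)^{-j - \delta/(1+\delta)}$ (from Lemmas~\ref{lem:os_approx} and~\ref{lem:recurrence}) and $\binom{n}{j} \apeq n^j/j!$, rewriting $\Gamma(j+\delta/(1+\delta))/j!$ via $\binom{j - 1/(1+\delta)}{j} = \Gamma(j+\delta/(1+\delta))/\p{j!\,\Gamma(\delta/(1+\delta))}$, and dividing by $\E{\xan} \apeq \Gamma(\delta/(1+\delta))(\alpha n)^{1/(1+\delta)}$, the prefactor $\alpha \beta^{-\delta}$ collapses via $\alpha^{\delta/(1+\delta)} \beta^{-\delta} = c^{\delta/(1+\delta)}$ and then $(c/(1+c))^{\delta/(1+\delta)} = (1+c^{-1})^{-\delta/(1+\delta)}$, leaving
\[
T_j/\E{\xan} \apeq (1 + c^{-1})^{-\delta/(1+\delta)} \binom{j - \tfrac{1}{1+\delta}}{j} (1+c)^{-j}.
\]
Summing over $j = 0, \ldots, k-1$ and subtracting from $\E{\xan}$ gives the claim.

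The hard part is justifying the pivotal approximation $\p{1 - \beta^{-(1+\delta)} w}^{\alpha n - 1} \apeq (1-w)^{cn - 1}$ within the strict $1 + O((\ln n)^2/n)$ tolerance required by $\apeq$. The two expressions are only pointwise close for $w = o(1)$; near $w = 1$ they diverge sharply. The saving grace is the companion factor $(1-w)^{n-j}$, which concentrates the integral on $w = O(\ln n / n)$, where the approximation is tight. Making this rigorous amounts to splitting the integral at a threshold like $w = (\ln n)^2/n$, bounding the tail using the sharp decay of $(1-w)^{n-j}$, and funneling the two pieces through the $\apeq$-calculus of Appendix~\ref{app:equiv}. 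Comparable care is needed for the Gamma-ratio asymptotic, since $j$ is allowed to range up to $k = O(\ln n)$ rather than a constant.
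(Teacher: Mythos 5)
Your proposal is correct and follows essentially the same route as the paper's proof: the same decomposition into $\E{\xan} - \E{\xan \cdot \ind{\xan \ge \beta \ynko}}$, the same binomial expansion of $\cdf{n-k+1}{n}(x/\beta)$ into $k$ terms, the same exponent-rebasing approximation (the analogue of Lemma~\ref{lem:y_approx_c}) justified by splitting off the region where the integrand is negligible, and the same Gamma-ratio bookkeeping. The only cosmetic difference is that you evaluate each term as an explicit Beta integral after the substitution $w = \beta^{1+\delta}x^{-(1+\delta)}$, whereas the paper packages the same integral as the expectation of an order statistic from a pool of $\beta^{1+\delta}n(1+c)$ samples and then applies the identical asymptotics.
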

\begin{proof}
  First, observe that
  \[
    \E{\xan \cdot \ind{\xan < \beta \ynko}} = \E{\xan} - \E{\xan \cdot
    \ind{\xan \ge \beta \ynko}}.
  \]
  Next, we use the fact that
  \[
    \E{\xan \cdot \ind{\xan \ge \beta \ynko}} = \int_\beta^{\infty} x
    \pdf{\alpha n}{\alpha n}(x) \cdf{n-k+1}{n}\p{\frac{x}{\beta}} \dx.
  \]
  We know that
  \begin{align*}
    \int_\beta^{\infty} x \pdf{\alpha n}{\alpha n}(x)
    \cdf{n-k+1}{n}\p{\frac{x}{\beta}} \dx &=
    \int_\beta^{\p{\frac{\alpha n}{\ln n}}^{1/(1+\delta)}} x \pdf{\alpha
    n}{\alpha n}(x) \cdf{n-k+1}{n}\p{\frac{x}{\beta}} \dx \\
    &+ \int_{\p{\frac{\alpha n}{\ln n}}^{1/(1+\delta)}}^{\infty} x \pdf{\alpha
    n}{\alpha n}(x) \cdf{n-k+1}{n}\p{\frac{x}{\beta}} \dx, \numberthis
    \label{eq:x_split}
  \end{align*}
  and
  \begin{align*}
    \int_\beta^{\p{\frac{\alpha n}{\ln n}}^{1/(1+\delta)}} x \pdf{\alpha
    n}{\alpha n}(x) \cdf{n-k+1}{n}\p{\frac{x}{\beta}} \dx
    &\le \p{\frac{\alpha n}{\ln n}}^{1/(1+\delta)} \cdf{\alpha n}{\alpha n}
    \p{\p{\frac{\alpha n}{\ln n}}^{1/(1+\delta)}} \\
    &\le \p{\frac{\alpha n}{\ln n}}^{1/(1+\delta)} \cdot \frac{1}{n}
  \end{align*}
  by Lemma \ref{lem:cdf_bound}.
  The second term of \eqref{eq:x_split} is
  \begin{align*}
    &(1+\delta) \alpha n \int_{\p{\frac{\alpha n}{\ln n}}^{1/(1+\delta)}}^{\infty}
    (1 - x^{-(1+\delta)})^{\alpha n - 1} x^{-(1+\delta)} \sum_{j=0}^{k-1}
    \binom{n}{j} \p{1 - \p{\frac{x}{\beta}}^{-(1+\delta)}}^{n-j}
    \p{\frac{x}{\beta}}^{-j(1+\delta)} \dx \\
    &=(1+\delta) \alpha n \sum_{j=0}^{k-1} \binom{n}{j} \beta^{j(1+\delta)}
    \int_{\p{\frac{\alpha n}{\ln n}}^{1/(1+\delta)}}^{\infty} (1 -
    x^{-(1+\delta)})^{\alpha n - 1} \p{x^{-(1+\delta)}}^{j+1} \p{1 -
    \p{\frac{x}{\beta}}^{-(1+\delta)}}^{n-j} \dx
  \end{align*}
  Next, we show that for $x \ge \p{\frac{\alpha n}{\ln n}}^{1/(1+\delta)}$,
  \[
    \p{1 - \p{\frac{x}{\beta}}^{-(1+\delta)}}^{n-j}
    \apeq (1 - x^{-(1+\delta)})^{\beta^{1+\delta}n - j}.
  \]
  We begin with
  \[
    \p{1 - \p{\frac{x}{\beta}}^{-(1+\delta)}}^{n-j}
    \apeq (1 - x^{-(1+\delta)})^{\beta^{1+\delta}(n-j)} = 
    (1 - x^{-(1+\delta)})^{\beta^{1+\delta}n - j}
    (1 - x^{-(1+\delta)})^{-j(\beta^{1+\delta}-1)}.
  \]
  Note that $(1 - x^{-(1+\delta)})^{-j(\beta^{1+\delta}-1)} \ge 1$, and by Lemma
  \ref{lem:taylor_1_yz},
  \begin{align*}
    (1 - x^{-(1+\delta)})^{-j(\beta^{1+\delta}-1)} = 1 + j(\beta^{1+\delta}-1)
    x^{-(1+\delta)} + O\p{\frac{1}{n}} \apeq 1.
  \end{align*}
  because $j \le \ln n$. Thus, $(1 - x^{-(1+\delta)})^{\beta^{1+\delta}n - j} (1
  - x^{-(1+\delta)})^{-j(\beta^{1+\delta}-1)} \apeq (1 -
  x^{-(1+\delta)})^{\beta^{1+\delta}n - j}$.
  Therefore, this becomes
  \[
    (1+\delta) \alpha n \sum_{j=0}^{k-1} \binom{n}{j} \beta^{j(1+\delta)}
    \int_{\p{\frac{\alpha n}{\ln n}}^{1/(1+\delta)}}^{\infty} \p{1 -
      x^{-(1+\delta)}}^{\beta^{1+\delta} n(1+c) - j - 1}
    \p{x^{-(1+\delta)}}^{j+1} \dx.
  \]
  We'll now try to relate the $j$th term in this summation to the order
  statistic $\znjo$. We know that
  \begin{align*}
    &\E{\znjo} \\
    &= \int_1^\infty z \pdf{\beta^{1+\delta} n(1+c)-j}{\beta^{1+\delta}n(1+c)}(z) \dz \\
    &= (1+\delta) (j+1) \binom{\beta^{1+\delta} n(1+c)}{j+1} \int_1^\infty \p{1 -
      z^{-(1+\delta)}}^{\beta^{1+\delta} n(1+c)-j-1}
      \p{z^{-(1+\delta)}}^{j+1} \dz.
  \end{align*}
  Using this, we have
  \begin{align*}
    \int_{\p{\frac{\alpha n}{\ln n}}^{1/(1+\delta)}}^\infty x \pdf{\alpha
    n}{\alpha n}(x)
    \cdf{n-k+1}{n}\p{\frac{x}{\beta}} \dx
    &\apeq \sum_{j=0}^{k-1} \frac{\alpha n \beta^{j(1+\delta)}
    \binom{n}{j}}{(j+1)\binom{\beta^{1+\delta}n(1+c)}{j+1}} \left[\E{\znjo}
      \right. \\
    &- \left.\int_1^{\p{\frac{\alpha n}{\ln n}}^{1/(1+\delta)}} z
  \pdf{\beta^{1+\delta} n(1+c)-j}{\beta^{1+\delta}n(1+c)}(z) \dz\right]
  \end{align*}
  We'll show that this last multiplicative term is approximately $\E{\znjo}$.
  Observe that
  \begin{align*}
    \int_1^{\p{\frac{\alpha n}{\ln n}}^{1/(1+\delta)}} &z
    \pdf{\beta^{1+\delta} n(1+c)-j}{\beta^{1+\delta}n(1+c)}(z) \dz \\
    &\le \p{\frac{\alpha n}{\ln n}}^{1/(1+\delta)} \int_1^{\p{\frac{\alpha n}{\ln
    n}}^{1/(1+\delta)}} \pdf{\beta^{1+\delta} n(1+c)-j}{\beta^{1+\delta}n(1+c)}(z) \dz \\
    &= \p{\frac{\alpha n}{\ln n}}^{1/(1+\delta)}
    \cdf{\beta^{1+\delta}n(1+c)-j}{\beta^{1+\delta}n(1+c)}\p{\p{\frac{\alpha
    n}{\ln n}}^{1/(1+\delta)}} \\
    &\le \p{\frac{\alpha n}{\ln n}}^{1/(1+\delta)} \frac{\sqrt{k}}{n}
  \end{align*}
  by Lemma \ref{lem:x_cdf_bound}. This means
  \begin{align*}
    \sum_{j=0}^{k-1} &\frac{\alpha n \beta^{j(1+\delta)}
    \binom{n}{j}}{(j+1)\binom{\beta^{1+\delta}n(1+c)}{j+1}} \E{\znjo} \\
    &\ge \sum_{j=0}^{k-1} \frac{\alpha n \beta^{j(1+\delta)}
    \binom{n}{j}}{(j+1)\binom{\beta^{1+\delta}n(1+c)}{j+1}} \b{\E{\znjo}
    - \p{\frac{\alpha n}{\ln n}}^{1/(1+\delta)} \frac{\sqrt{k}}{n}} \\
    &\apeq \sum_{j=0}^{k-1} \frac{\alpha n \beta^{j(1+\delta)}
    \binom{n}{j}}{(j+1)\binom{\beta^{1+\delta}n(1+c)}{j+1}} \E{\znjo}
    \tag{by Lemma \ref{lem:zlnn}}
  \end{align*}
  Next, we deal with the $n\beta^{j(1+\delta)}
  \binom{n}{j}/((j+1)\binom{\beta^{1+\delta}n(1+c)}{j+1})$ terms. These are
  \begin{equation}
    \frac{n\beta^{j(1+\delta)}
    \binom{n}{j}}{(j+1)\binom{\beta^{1+\delta}n(1+c)}{j+1}} = \frac{n(n-1)
    \cdots (n-j+1)}{\beta^{1+\delta}n(1+c) (\beta^{1+\delta}n(1+c)-1) \cdots
    (\beta^{1+\delta}n(1+c) - j+1)} \cdot \frac{n
      \beta^{j(1+\delta)}}{\beta^{1+\delta} n(1+c)-j}.
      \label{eq:x_binom_ratio}
  \end{equation}
  Each term $(n-\ell)/(\beta^{1+\delta}n(1+c)-\ell)$ is between
  $1/(\beta^{1+\delta}(1+c))$ and $1/(\beta^{1+\delta}(1+c)) \cdot
  (1-\ell/n)$.
  This means
  \begin{align*}
    \frac{1}{(\beta^{1+\delta}(1+c))^j} \ge \prod_{\ell=0}^j
    \frac{n-\ell}{\beta^{1+\delta}n(1+c)-\ell} \ge
    \prod_{\ell=0}^{j}
    \frac{1}{\beta^{1+\delta}(1+c)} \p{1-\frac{\ell}{n}}
    &\ge
    \p{1-\frac{j^2}{n}}
    \apeq \frac{1}{(\beta^{1+\delta}(1+c))^j}
  \end{align*}
  since $j \le k \le ((1-c^2)/2) \ln n$.
  Multiplying by the second term in \eqref{eq:x_binom_ratio}, which is
  \[
    \frac{n \beta^{j(1+\delta)}}{\beta^{1+\delta} n(1+c)-j} \apeq
    \frac{\beta^{(j-1)(1+\delta)}}{1+c},
  \]
  we have
  \[
    \frac{n\beta^{j(1+\delta)}
    \binom{n}{j}}{(j+1)\binom{\beta^{1+\delta}n(1+c)}{j+1}}
    \apeq \frac{1}{\beta^{1+\delta}(1+c)^{j+1}}.
  \]
  As a result,
  \begin{align*}
    \int_{\p{\frac{\alpha n}{\ln n}}^{1/(1+\delta)}}^\infty x \pdf{\alpha
    n}{\alpha n}(x)
    \cdf{n-k+1}{n}\p{\frac{x}{\beta}} \dx
    &\apeq \sum_{j=0}^{k-1} \frac{\alpha}{\beta^{1+\delta}(1+c)^{j+1}}
    \E{\znjo} \\
    &= \sum_{j=0}^{k-1} \frac{c}{(1+c)^{j+1}} \E{\znjo} \numberthis
    \label{eq:x_sum_z}
  \end{align*}
  Finally, note that
  \begin{align*}
    \E{\znjo}
    &= \E{\znbc} \frac{\Gamma(j + \delta/(1+\delta))}{\Gamma(\delta/(1+\delta))
    \Gamma(j+1)} \\
    &\apeq (\beta^{1+\delta} n(1+c))^{1/(1+\delta)} \frac{\Gamma(j +
    \delta/(1+\delta))}{\Gamma(j+1)} \\
    &= \beta(1+c)^{1/(1+\delta)} n^{1/(1+\delta)} \frac{\Gamma(j +
    \delta/(1+\delta))}{\Gamma(j+1)} \\
    &= \frac{\beta}{\alpha^{1/(1+\delta)}}(1+c)^{1/(1+\delta)} (\alpha
    n)^{1/(1+\delta)} \frac{\Gamma(j + \delta/(1+\delta))}{\Gamma(j+1)} \\
    &\apeq c^{-1/(1+\delta)} (1+c)^{1/(1+\delta)} \E{\xan} \frac{\Gamma(j +
    \delta/(1+\delta))}{\Gamma(\delta/(1+\delta)) \Gamma(j+1)}
  \end{align*}
  Substituting back to \eqref{eq:x_sum_z},
  \begin{align*}
    \int_{\p{\frac{\alpha n}{\ln n}}^{1/(1+\delta)}}^\infty x \pdf{\alpha
    n}{\alpha n}(x)
    \apeq \E{\xan} c^{\delta/(1+\delta)} \sum_{j=0}^{k-1} (1+c)^{-(j +
    \delta/(1+\delta))} \frac{\Gamma(j +
    \delta/(1+\delta))}{\Gamma(\delta/(1+\delta)) \Gamma(j+1)}
  \end{align*}
  Going back to \eqref{eq:x_split},
  \begin{align*}
    \E{\xan \cdot \ind{\xan > \beta \ynko}}
    &\apeq \E{\xan} c^{\delta/(1+\delta)} \sum_{j=0}^{k-1} (1+c)^{-(j +
    \delta/(1+\delta))} \frac{\Gamma(j +
    \delta/(1+\delta))}{\Gamma(\delta/(1+\delta)) \Gamma(j+1)} \\
    &+ \int_\beta^{\p{\frac{\alpha n}{\ln n}}^{1/(1+\delta)}} x \pdf{\alpha
    n}{\alpha n}(x) \cdf{n-k+1}{n}\p{\frac{x}{\beta}} \dx \\
    &\le \E{\xan} c^{\delta/(1+\delta)} \sum_{j=0}^{k-1} (1+c)^{-(j +
    \delta/(1+\delta))} \frac{\Gamma(j +
    \delta/(1+\delta))}{\Gamma(\delta/(1+\delta)) \Gamma(j+1)} \\
    &+ \p{\frac{\alpha n}{\ln n}}^{1/(1+\delta)} \frac{(\ln n)^2}{n} \\
    &\apeq \E{\xan} c^{\delta/(1+\delta)} \sum_{j=0}^{k-1} (1+c)^{-(j +
    \delta/(1+\delta))} \frac{\Gamma(j +
    \delta/(1+\delta))}{\Gamma(\delta/(1+\delta)) \Gamma(j+1)} \\
  \end{align*}
  Therefore,
  \[
    \E{\xan \cdot \ind{\xan < \beta \ynko}} \apeq \E{\xan} \b{1 - c^{\delta/(1+\delta)}
    \sum_{j=0}^{k-1} (1+c)^{-(j + \delta/(1+\delta))} \frac{\Gamma\p{j +
      \frac{\delta}{1+\delta}}}{\Gamma\p{\frac{\delta}{1+\delta}} \Gamma(j+1)}}.
  \]
  We can simplify this to
  \[
    \E{\xan \cdot \ind{\xan < \beta \ynko}} \apeq \E{\xan} \b{1 -
    (1+c^{-1})^{-\delta/(1+\delta)} \sum_{j=0}^{k-1} (1+c)^{-j} \frac{\Gamma\p{j +
    \frac{\delta}{1+\delta}}}{\Gamma\p{\frac{\delta}{1+\delta}}\Gamma(j+1)}}.
  \]
  Using the definition
  \[
    \binom{a}{b} = \frac{\Gamma(a+1)}{\Gamma(b+1) \Gamma(a-b+1)},
  \]
  this is
  \[
    \E{\xan \cdot \ind{\xan < \beta \ynko}} \apeq \E{\xan} \b{1 -
    (1+c^{-1})^{-\delta/(1+\delta)} \sum_{j=0}^{k-1}
    \binom{j-\frac{1}{1+\delta}}{j} (1+c)^{-j}}.
  \]
\end{proof}
\begin{thm} \label{thm:cond_y_k}
  \[
    \E{\ynko \cdot \ind{\xan < \beta \ynko}} \apeq (1 + \alpha
    \beta^{-(1+\delta)})^{-(k - 1/(1+\delta))} \E{\ynko}
  \]
\end{thm}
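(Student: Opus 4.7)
The plan is to mirror the $k=2$ denominator calculation sketched in Section~\ref{sec:pl}, tracking factors of $(1+c)$ carefully in the general case. First I would write
\[
\E{\ynko \cdot \ind{\xan < \beta \ynko}} = \int_1^\infty y\, \pdf{n-k+1}{n}(y)\, \cdf{\alpha n}{\alpha n}(\beta y)\dy,
\]
substitute the explicit density $\pdf{n-k+1}{n}(y) = (1+\delta) k \binom{n}{k} (1-y^{-(1+\delta)})^{n-k} (y^{-(1+\delta)})^{k} y^{-1}$ from the preliminary Fact, and use Lemma~\ref{lem:y_approx_c} to replace $\cdf{\alpha n}{\alpha n}(\beta y) = (1-(\beta y)^{-(1+\delta)})^{\alpha n}$ by $(1-y^{-(1+\delta)})^{cn}$, where $c = \alpha\beta^{-(1+\delta)}$. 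This combines with the $(1-y^{-(1+\delta)})^{n-k}$ factor to produce the single exponent $n(1+c) - k$.

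Next, I would recognize the resulting integrand as (a constant multiple of) $z \cdot \pdf{n(1+c)-k+1}{n(1+c)}(z)$, giving
\[
\E{\ynko \cdot \ind{\xan < \beta \ynko}} \apeq \frac{\binom{n}{k}}{\binom{n(1+c)}{k}}\, \E{Z_{(n(1+c)-k+1:n(1+c))}}.
\]
The binomial ratio expands as $\prod_{\ell=0}^{k-1} (n-\ell)/(n(1+c)-\ell)$; for $k \le O(\ln n)$ each factor equals $(1+c)^{-1}(1 + O(k/n))$, so the product is $(1+c)^{-k}(1 + O(k^2/n)) \apeq (1+c)^{-k}$.

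For the final step, I would invoke Lemmas~\ref{lem:os_approx} and~\ref{lem:recurrence} to compare the two order-statistic expectations: each is $\apeq \Gamma(k - 1/(1+\delta))/\Gamma(k)$ times its sample size raised to $1/(1+\delta)$, so
\[
\E{Z_{(n(1+c)-k+1:n(1+c))}} \apeq (1+c)^{1/(1+\delta)}\, \E{\ynko}.
\]
Multiplying by the binomial-ratio estimate gives the exponent $-k + 1/(1+\delta) = -(k - 1/(1+\delta))$, which is the claimed expression.

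The main obstacle is making the substitution of $(1-(\beta y)^{-(1+\delta)})^{\alpha n}$ by $(1-y^{-(1+\delta)})^{cn}$ rigorous after integration. A pointwise application of Lemma~\ref{lem:y_approx_c} is not enough because the approximation deteriorates as $y$ approaches $1$, so one must first truncate the integral, following the same trick used in the proof of Theorem~\ref{thm:cond_x_k}: split at $y = (n/\ln n)^{1/(1+\delta)}$, bound the small-$y$ contribution using Lemma~\ref{lem:cdf_bound} (the density of $\xan$ concentrates far to the right, so this piece is negligible), and apply the exponent approximation on the large-$y$ piece where it is tight. This localizes the approximation to the regime that contributes, yielding an aggregate multiplicative error of the required $O((\ln n)^2/n)$ order and closing the $\apeq$ chain.
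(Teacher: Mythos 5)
Your proposal follows the paper's proof of Theorem~\ref{thm:cond_y_k} essentially step for step: the same truncation of the integral near its lower limit, the same use of Lemma~\ref{lem:y_approx_c} to merge the exponents into $n(1+c)-k$, the same identification of the integrand with $z\,\pdf{n(1+c)-k+1}{n(1+c)}(z)$ so that the integral becomes $\frac{\binom{n}{k}}{\binom{n(1+c)}{k}}\E{\znko}$, and the same final comparison via Lemma~\ref{lem:binom_ratio} and Lemmas~\ref{lem:os_approx} and~\ref{lem:recurrence}. The one detail you gloss over is that the truncated $Z$-integral is itself missing its lower tail $\int_1^{\mathrm{cutoff}} z\,\pdf{n(1+c)-k+1}{n(1+c)}(z)\dz$, which the paper shows is a negligible fraction of $\E{\znko}$ using Lemmas~\ref{lem:y_cdf_bound} and~\ref{lem:zlnn}; this is routine and of exactly the flavor of the truncation argument you already describe.
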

\begin{proof}
  We begin with
  \[
    \E{\ynko \cdot \ind{\xan < \beta \ynko}}
    = \int_1^\infty y \pdf{n-k+1}{n}(y) \cdf{\alpha n}{\alpha n}(\beta y) \dy.
  \]
  Let $c = \alpha \beta^{-(1+\delta)}$. Break this up into
  \begin{align*}
    \int_1^\infty y \pdf{n-k+1}{n}(y) \cdf{\alpha n}{\alpha n}(\beta y) \dy
    &= \int_1^{\p{\frac{cn}{\ln n}}^{1/(1+\delta)}} y \pdf{n-k+1}{n}(y)
    \cdf{\alpha n}{\alpha n}(\beta y) \dy \\
    &+ \int_{\p{\frac{cn}{\ln n}}^{1/(1+\delta)}}^\infty y \pdf{n-k+1}{n}(y)
    \cdf{\alpha n}{\alpha n}(\beta y) \dy.
    \numberthis \label{eq:split_ynko}
  \end{align*}
  The first term is
  \begin{align*}
    \int_1^{\p{\frac{cn}{\ln n}}^{1/(1+\delta)}} &y \pdf{n-k+1}{n}(y)
    \cdf{\alpha n}{\alpha n}(\beta y) \dy \\
    &\le \cdf{\alpha n}{\alpha n}\p{\beta \p{\frac{cn}{\ln n}}^{1/(1+\delta)}}
    \int_1^{\p{\frac{cn}{\ln n}}^{1/(1+\delta)}} y \pdf{n-k+1}{n}(y) \dy \\
    &\le \cdf{\alpha n}{\alpha n}\p{\beta \p{\frac{cn}{\ln n}}^{1/(1+\delta)}}
    \E{\ynko} \\
    &\le \frac{\E{\ynko}}{n}
  \end{align*}
  by Lemma \ref{lem:cdf_bound}.

  For the second term in \eqref{eq:split_ynko}, we have
  \begin{align*}
    \int_{\p{\frac{cn}{\ln n}}^{1/(1+\delta)}}^\infty &y \pdf{n-k+1}{n}(y)
    \cdf{\alpha n}{\alpha n}(\beta y) \dy \\
    &= (1+\delta) k \binom{n}{k} \int_{\p{\frac{cn}{\ln
    n}}^{1/(1+\delta)}}^\infty \p{1 - y^{-(1+\delta)}}^{n-k}
    \p{y^{-(1+\delta)}}^k \p{1 - \p{\beta y}^{-(1+\delta)}}^{\alpha n} \dy
  \end{align*}
  By Lemma \ref{lem:y_approx_c}, for all $y \ge (cn/\ln n)^{1/(1+\delta)}$,
  \[
    \p{1-(\beta y)^{-(1+\delta)}}^{\alpha n} \apeq \p{1-y^{-(1+\delta)}}^{cn}.
  \]
  Therefore,
  \begin{align*}
    \int_{\p{\frac{cn}{\ln n}}^{1/(1+\delta)}}^\infty &y \pdf{n-k+1}{n}(y)
    \cdf{\alpha n}{\alpha n}(\beta y) \dy \\
    &\apeq (1+\delta) k \binom{n}{k} \int_{\p{\frac{cn}{\ln
    n}}^{1/(1+\delta)}}^\infty \p{1 - y^{-(1+\delta)}}^{n-k + c n}
    \p{y^{-(1+\delta)}}^k \dy \\
    &= (1+\delta) k \binom{n}{k} \int_{\p{\frac{cn}{\ln
    n}}^{1/(1+\delta)}}^\infty \p{1 - y^{-(1+\delta)}}^{n(1+c)-k}
    \p{y^{-(1+\delta)}}^k \dy.
  \end{align*}
  We'll now try to relate this to the order statistic $\znko$. We know that
  \begin{align*}
    \E{\znko}
    &= \int_1^\infty z \pdf{n(1+c)-k+1}{n(1+c)}(z) \dz \\
    &= (1+\delta) k \binom{n(1+c)}{k} \int_1^\infty \p{1 -
      z^{-(1+\delta)}}^{n(1+c)-k} \p{z^{-(1+\delta)}}^k \dz.
  \end{align*}
  Using this, we have
  \begin{align*}
    \int_{\p{\frac{cn}{\ln n}}^{1/(1+\delta)}}^\infty &y \pdf{n-k+1}{n}(y)
    \cdf{\alpha n}{\alpha n}(\beta y) \dy \\
    &\apeq \frac{\binom{n}{k}}{\binom{n(1+c)}{k}} \b{\E{\znko} -
    \int_1^{\p{\frac{cn}{\ln n}}^{1/(1+\delta)}} y
    \pdf{n(1+c)-k+1}{n(1+c)}(y) \dy}. \numberthis \label{eq:higher_os}
  \end{align*}
  From here, we'll show that the term being subtracted is only a
  $\frac{\sqrt{\ln n}}{n}$ fraction
  of $\E{\znko}$. To do so, note that
  \begin{align*}
    \int_1^{\p{\frac{cn}{\ln n}}^{1/(1+\delta)}} y
    \pdf{n(1+c)-k+1}{n(1+c)}(y) \dy
    &\le \p{\frac{cn}{\ln n}}^{1/(1+\delta)} \int_1^{\p{\frac{cn}{\ln
    n}}^{1/(1+\delta)}} \pdf{n(1+c)-k+1}{n(1+c)}(y) \dy \\
    &= \p{\frac{cn}{\ln n}}^{1/(1+\delta)}
    \cdf{n(1+c)-k+1}{n(1+c)}\p{\p{\frac{cn}{\ln n}}^{1/(1+\delta)}} \\
    &\le \p{\frac{cn}{\ln n}}^{1/(1+\delta)} \p{\frac{\sqrt{k}}{n}}
  \end{align*}
  By Lemma \ref{lem:y_cdf_bound}. Lemma \ref{lem:zlnn} gives us
  \begin{align*}
    \E{\znko} &\ge \E{\znko} - \int_1^{\p{\frac{cn}{\ln n}}^{1/(1+\delta)}} y
    \pdf{n(1+c)-k+1}{n(1+c)}(y) \dy \\
    &\ge \E{\znko} - \p{\frac{cn}{\ln n}}^{1/(1+\delta)} \p{\frac{\sqrt{k}}{n}}
    \\
    &\ge \E{\znko} \p{1 - \frac{\sqrt{k}}{n}} \\
    &\apeq \E{\znko}
  \end{align*}
  Combining with \eqref{eq:higher_os}, Lemma \ref{lem:squeeze} yields
  \begin{align*}
    \int_{\p{\frac{cn}{\ln n}}^{1/(1+\delta)}}^\infty y \pdf{n-k+1}{n}(y)
    \cdf{\alpha n}{\alpha n}(\beta y) \dy \apeq
    \frac{\binom{n}{k}}{\binom{n(1+c)}{k}} \E{\znko} \numberthis
    \label{eq:higher_os2}
  \end{align*}
  By Lemma~\ref{lem:binom_ratio},
  \[
    \frac{\binom{n}{k}}{\binom{n(1+c)}{k}} \apeq \frac{1}{(1+c)^k}.
  \]
  Putting this into \eqref{eq:higher_os2},
  \begin{align*}
    \int_{\p{\frac{cn}{\ln n}}^{1/(1+\delta)}}^\infty y \pdf{n-k+1}{n}(y)
    \cdf{\alpha n}{\alpha n}(\beta y) \dy \apeq \frac{1}{(1+c)^k} \E{\znko}.
  \end{align*}
  Finally, note that
  \begin{align*}
    \E{\znko}
    &= \E{\znc} \frac{\Gamma(k - 1/(1+\delta))}{\Gamma(\delta/(1+\delta))
    \Gamma(k)} \\
    &\apeq (n(1+c))^{1/(1+\delta)} \frac{\Gamma(k -
    1/(1+\delta))}{\Gamma(k)} \\
    &= (1+c)^{1/(1+\delta)} n^{1/(1+\delta)} \frac{\Gamma(k -
    1/(1+\delta))}{\Gamma(k)} \\
    &\apeq (1+c)^{1/(1+\delta)} \E{\yn} \frac{\Gamma(k -
    1/(1+\delta))}{\Gamma(\delta/(1+\delta)) \Gamma(k)} \\
    &= (1+c)^{1/(1+\delta)} \E{\ynko} 
  \end{align*}
  Substituting into \eqref{eq:split_ynko},
  \begin{align*}
    \E{\ynko \cdot \ind{\xan \le \beta \ynko}}
    &\apeq \E{\ynko} \p{(1+c)^{-(k-1/(1+\delta))} + \frac{1}{n}} \\
    &\apeq \E{\ynko} (1+\alpha \beta^{-(1+\delta)})^{-(k-1/(1+\delta))}
  \end{align*}
  since $c = a\beta^{-1(1+\delta)}$, proving the theorem.
\end{proof}
\begin{thm}
  \label{thm:prob}
  \[
    \Pr\b{\xan < \beta \ynko} \apeq (1+c)^{-k}.
  \]
\end{thm}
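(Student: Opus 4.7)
The plan is to mirror the structure of the proof of Theorem~\ref{thm:cond_y_k} but without the additional factor of $y$ in the integrand, since we are after a probability rather than a weighted expectation. Conditioning on $\ynko$ gives
\[
  \Pr\b{\xan < \beta \ynko} = \int_1^\infty \pdf{n-k+1}{n}(y)\,\cdf{\alpha n}{\alpha n}(\beta y) \dy,
\]
and I would split this integral at the threshold $y_0 = (cn/\ln n)^{1/(1+\delta)}$. On $[1, y_0]$ the biased CDF is uniformly bounded by $\cdf{\alpha n}{\alpha n}(\beta y_0) \le 1/n$ via Lemma~\ref{lem:cdf_bound}, so pulling this maximum out of the integral and using $\int_1^\infty \pdf{n-k+1}{n}(y)\,dy = 1$ bounds the low-$y$ contribution by $1/n$.

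On $[y_0, \infty)$ I apply Lemma~\ref{lem:y_approx_c} to replace $(1 - (\beta y)^{-(1+\delta)})^{\alpha n}$ with $(1 - y^{-(1+\delta)})^{cn}$ up to a $\apeq$ factor. Plugging in the explicit density
\[
  \pdf{n-k+1}{n}(y) = (1+\delta) k \binom{n}{k} (1 - y^{-(1+\delta)})^{n-k} y^{-k(1+\delta)-1},
\]
the integrand collapses to exactly $\binom{n}{k}/\binom{n(1+c)}{k}$ times $\pdf{n(1+c)-k+1}{n(1+c)}(y)$, the density of $\znko$. Since that density integrates to $1$ on $[1, \infty)$, and its mass below $y_0$ is $\apeq 0$ by Lemma~\ref{lem:y_cdf_bound}, the large-$y$ contribution equals $\binom{n}{k}/\binom{n(1+c)}{k}$ up to a $\apeq$ factor. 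Finally, Lemma~\ref{lem:binom_ratio} gives $\binom{n}{k}/\binom{n(1+c)}{k} \apeq (1+c)^{-k}$, which is the theorem.

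The delicate step is the $\apeq$ bookkeeping: the leading term $(1+c)^{-k}$ can be polynomially small in the regime $k = O(\ln n)$, so I have to check that the $1/n$ bound on the low-$y$ piece fits inside the multiplicative $1 + O((\ln n)^2/n)$ tolerance relative to $(1+c)^{-k}$. This parallels the analogous step in the proof of Theorem~\ref{thm:cond_y_k}, where the same $1/n$ correction is absorbed using the constraint $k \le ((1-c^2)\ln n)/2$; the remaining operations --- commuting the $\apeq$ factor through the truncated integral, recognizing the $\znko$ density, and turning a binomial ratio into $(1+c)^{-k}$ --- are routine applications of the equivalence lemmas collected in Appendix~\ref{app:equiv}.
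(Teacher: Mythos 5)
Your proposal matches the paper's proof essentially step for step: the same split at $(cn/\ln n)^{1/(1+\delta)}$, the bound on the low-$y$ piece via Lemma~\ref{lem:cdf_bound}, the substitution from Lemma~\ref{lem:y_approx_c}, the identification of the integrand with the density of $\znko$ whose sub-threshold mass is controlled by Lemma~\ref{lem:y_cdf_bound}, and the final application of Lemma~\ref{lem:binom_ratio}. The bookkeeping concern you raise about absorbing the additive $1/n$ term into the multiplicative tolerance is handled no more carefully in the paper itself, so your argument is as complete as the original.
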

\begin{proof}
  Begin with
  \begin{align*}
    \Pr\b{\xan < \beta \ynko} &= \int_1^\infty \pdf{n-k+1}{n}(y) \cdf{\alpha
    n}{\alpha n}(\beta y) \dy \\
    &= \int_1^{\p{\frac{cn}{\ln n}}^{1/(1+\delta)}} \pdf{n-k+1}{n}(y) \cdf{\alpha
    n}{\alpha n}(\beta y) \dy \\
    &+ \int_{\p{\frac{cn}{\ln n}}^{1/(1+\delta)}}^\infty \pdf{n-k+1}{n}(y)
    \cdf{\alpha n}{\alpha n}(\beta y) \dy \numberthis \label{eq:prob_split}
  \end{align*}
  Observe that
  \begin{align*}
    \int_1^{\p{\frac{cn}{\ln n}}^{1/(1+\delta)}} \pdf{n-k+1}{n}(y) \cdf{\alpha
    n}{\alpha n}(\beta y) \dy
    &\le \cdf{\alpha n}{\alpha n} \p{\beta \p{\frac{cn}{\ln n}}^{1/(1+\delta)}}
    \cdf{n-k+1}{n} \p{\p{\frac{cn}{\ln n}}^{1/(1+\delta)}}\\
    &\le \cdf{\alpha n}{\alpha n} \p{\beta \p{\frac{cn}{\ln n}}^{1/(1+\delta)}}
    \\
    &\le \p{1 - \beta^{-(1+\delta)} \p{\frac{\ln n}{cn}}}^{\alpha n} \\
    &\le \exp \p{-\alpha \beta^{-(1+\delta)}\frac{\ln n}{cn}} \\
    &= \frac{1}{n}
  \end{align*}
  Next, we have
  \begin{align*}
    \int_{\p{\frac{cn}{\ln n}}^{1/(1+\delta)}}^\infty \pdf{n-k+1}{n}(y)
    \cdf{\alpha n}{\alpha n}(\beta y) \dy
    &= \int_{\p{\frac{cn}{\ln n}}^{1/(1+\delta)}}^\infty (1-(\beta
    y)^{-(1+\delta)})^{\alpha n} \pdf{n-k+1}{n}(y) \dy
  \end{align*}
  By Lemma \ref{lem:y_approx_c}, for $y \ge (cn/\ln n)^{1/(1+\delta)}$,
  \[
    \p{1-(\beta y)^{-(1+\delta)}}^{\alpha n} \apeq \p{1-y^{-(1+\delta)}}^{cn},
  \]
  so
  \begin{align*}
    \int_{\p{\frac{cn}{\ln n}}^{1/(1+\delta)}}^\infty &\pdf{n-k+1}{n}(y)
    \cdf{\alpha n}{\alpha n}(\beta y) \dy \\
    &\apeq \int_{\p{\frac{cn}{\ln n}}^{1/(1+\delta)}}^\infty
    (1-y^{-(1+\delta)})^{c n} \pdf{n-k+1}{n}(y) \dy \\
    &= (1+\delta) k \binom{n}{k} 
    \int_{\p{\frac{cn}{\ln n}}^{1/(1+\delta)}}^\infty
    (1-y^{-(1+\delta)})^{n(1+c)-k} (y^{-(1+\delta)})^k y^{-1} \dy \\
    &= \frac{\binom{n}{k}}{\binom{n(1+c)}{k}}
    \int_{\p{\frac{cn}{\ln n}}^{1/(1+\delta)}}^\infty
    \pdf{n(1+c)-k+1}{n(1+c)} \dy
  \end{align*}
  From Lemma \ref{lem:y_cdf_bound}, we have
  \[
    \cdf{n(1+c)-k+1}{n(1+c)}\p{\p{\frac{cn}{\ln n}^{1/(1+\delta)}}} \le
    \frac{\sqrt{k}}{n},
  \]
  so
  \begin{align*}
    \int_{\p{\frac{cn}{\ln n}}^{1/(1+\delta)}}^\infty
    \pdf{n(1+c)-k+1}{n(1+c)} \dy
    &= 1 - \cdf{n(1+c)-k+1}{n(1+c)}\p{\p{\frac{cn}{\ln n}^{1/(1+\delta)}}} \\
    &\ge 1 - \frac{\sqrt{k}}{n} \\
    &\approx 1.
  \end{align*}
  Therefore,
  \begin{align*}
    \int_{\p{\frac{cn}{\ln n}}^{1/(1+\delta)}}^\infty \pdf{n-k+1}{n}(y)
    \cdf{\alpha n}{\alpha n}(\beta y) \dy \apeq
    \frac{\binom{n}{k}}{\binom{n(1+c)}{k}} \apeq \frac{1}{(1+c)^k}
  \end{align*}
  by Lemma~\ref{lem:binom_ratio}. By \eqref{eq:prob_split}, this means
  \[
    \Pr\b{\xan < \beta \ynko} \apeq (1+c)^{-k}.
  \]
\end{proof}

\section{Lemmas for the Equivalence Definition} \label{app:equiv}
\begin{lemma}[Transitivity]
  If $f(n) \apeqq{a_1}{n_1} g(n)$ and $g(n) \apeqq{a_2}{n_2} h(n)$, then $f(n)
  \apeq h(n)$.
  \label{lem:trans}
\end{lemma}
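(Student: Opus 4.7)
The plan is a direct multiplicative composition of the two given bounds, controlling the cross term that arises. Pick any $n \ge \max(n_1, n_2)$ so that both hypotheses apply simultaneously. Writing
\[
\frac{f(n)}{h(n)} \;=\; \frac{f(n)}{g(n)} \cdot \frac{g(n)}{h(n)}
\;\le\; \left(1 + \frac{a_1 (\ln n)^2}{n}\right)\left(1 + \frac{a_2 (\ln n)^2}{n}\right),
\]
and expanding the product gives $1 + (a_1 + a_2)\frac{(\ln n)^2}{n} + a_1 a_2 \frac{(\ln n)^4}{n^2}$. The analogous bound holds for $h(n)/f(n)$ by the same computation with the roles of the two ratios swapped.

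The one thing to handle is the quadratic cross term $a_1 a_2 (\ln n)^4 / n^2$, which is not immediately of the desired form $C (\ln n)^2/n$. The key observation is that $(\ln n)^2/n \to 0$, so for all $n$ past some threshold $n_0'$ we have $(\ln n)^2/n \le 1$, and hence
\[
a_1 a_2 \frac{(\ln n)^4}{n^2} \;\le\; a_1 a_2 \frac{(\ln n)^2}{n}.
\]
Setting $a_3 = a_1 + a_2 + a_1 a_2$ and $n_3 = \max(n_1, n_2, n_0')$ then yields
\[
\frac{f(n)}{h(n)} \;\le\; 1 + \frac{a_3 (\ln n)^2}{n} \qquad \text{and} \qquad \frac{h(n)}{f(n)} \;\le\; 1 + \frac{a_3 (\ln n)^2}{n}
\]
for all $n \ge n_3$, which is exactly the statement $f(n) \apeqq{a_3}{n_3} h(n)$.

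There is essentially no obstacle here; the proof is a short calculation. The only mildly subtle point is verifying that the product of two $(1+\varepsilon)$-type factors can be absorbed into a single factor of the same shape, which requires noting that $(\ln n)^2/n$ is eventually bounded by $1$ so that higher powers of the error can be dominated by the first power.
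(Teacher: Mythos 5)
Your proof is correct and is essentially identical to the paper's: both factor $f/h = (f/g)\cdot(g/h)$, expand the product, and absorb the cross term $a_1a_2(\ln n)^4/n^2$ into the constant $a_1+a_2+a_1a_2$ using $(\ln n)^2 \le n$. The only cosmetic difference is that the paper invokes $(\ln n)^2 \le n$ for all $n \ge 1$ rather than introducing an extra threshold $n_0'$.
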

\begin{proof}
  \begin{align*}
    \frac{f(n)}{h(n)} &= \frac{f(n)}{g(n)} \cdot \frac{g(n)}{h(n)} \\
    &\le \p{1 + \frac{a_1 (\ln n)^2}{n}} \p{1 + \frac{a_2 (\ln n)^2}{n}} \\
    &\le 1 + \frac{(a_1 + a_2)(\ln n)^2}{n} + \frac{a_1 a_2 (\ln  n)^4}{n^2} \\
    &\le 1 + \frac{(a_1 + a_2 + a_1a_2)(\ln n)^2}{n}
  \end{align*}
  for all $n \ge \max(n_1, n_2)$, since $n \ge (\ln n)^2$. A symmetric argument
  holds for $h(n)/f(n)$. Thus, $f(n) \apeqq{a_1+a_2+a_1a_2}{\max(n_1,n_2)}
  h(n)$.
\end{proof}

\begin{lemma}[Linearity]
  If $f_1(n) \apeqq{a_1}{n_1} g_1(n)$ and $f_1(n) \apeqq{a_2}{n_2} g_2(n)$,
  then $b
  f_1(n) + c f_2(n) \apeq b g_1(n) + cg_2(n)$.
  \label{lem:linear}
\end{lemma}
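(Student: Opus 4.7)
The plan is to unpack the definition of $\apeqq{a_i}{n_i}$ into its two-sided multiplicative bounds, then combine them via a simple weighted-sum inequality. I will treat $b$ and $c$ as nonnegative constants throughout so that $bf_1 + cf_2$ and $bg_1 + cg_2$ are themselves nonnegative functions, as required for the $\apeq$ relation to be well-defined.

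First, I unpack the hypotheses. For $n \geq n_1$ the inequalities $f_1(n) \leq g_1(n)\bigl(1 + a_1 (\ln n)^2/n\bigr)$ and $g_1(n) \leq f_1(n)\bigl(1 + a_1 (\ln n)^2/n\bigr)$ both hold, and analogously for $f_2, g_2$ when $n \geq n_2$. Let $a^* = \max(a_1, a_2)$ and $n^* = \max(n_1, n_2)$.

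The central step is the following chain of inequalities, valid for $n \geq n^*$:
\begin{align*}
b f_1(n) + c f_2(n)
&\leq b g_1(n)\left(1 + \frac{a_1 (\ln n)^2}{n}\right) + c g_2(n)\left(1 + \frac{a_2 (\ln n)^2}{n}\right) \\
&\leq \bigl(b g_1(n) + c g_2(n)\bigr)\left(1 + \frac{a^* (\ln n)^2}{n}\right),
\end{align*}
where the second line uses $a_1, a_2 \leq a^*$ together with the nonnegativity of $g_1$ and $g_2$. Dividing yields $(b f_1 + c f_2)/(b g_1 + c g_2) \leq 1 + a^*(\ln n)^2/n$. An identical argument with the roles of $f_i$ and $g_i$ swapped establishes the reciprocal bound $(b g_1 + c g_2)/(b f_1 + c f_2) \leq 1 + a^*(\ln n)^2/n$.

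Taken together, these two inequalities are precisely the statement that $b f_1 + c f_2 \apeqq{a^*}{n^*} b g_1 + c g_2$, which implies the desired $b f_1 + c f_2 \apeq b g_1 + c g_2$. There is no genuine obstacle here; the only thing to be careful about is that $\apeq$ is a two-sided relation and therefore both the upper bound and its reciprocal must be checked, but both follow from the same one-line computation.
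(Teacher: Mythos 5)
Your proof is correct and follows essentially the same route as the paper: the paper bounds the ratio $(bf_1+cf_2)/(bg_1+cg_2)$ by $\max(f_1/g_1, f_2/g_2)$ via a mediant-type inequality, which is the same termwise combination you carry out by distributing the common factor $1 + a^*(\ln n)^2/n$. Your explicit remark that $b, c \ge 0$ is needed is a point the paper leaves implicit, but otherwise the two arguments coincide.
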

\begin{proof}
  By Lemma \ref{lem:ratio_bound},
  \[
    \frac{bf_1(n) + cf_2(n)}{bg_1(n) + cg_2(n)} \le
    \max\p{\frac{f_1(n)}{g_1(n)}, \frac{f_2(n)}{g_2(n)}} \le
    \frac{\max(a_1, a_2)(\ln n)^2}{n}
  \]
  for $n \ge \max(n_1, n_2)$. A symmetric argument holds for the reciprocal.
  Therefore,
  \[
    b f_1(n) + c f_2(n) \apeqq{\max(a_1, a_2)}{\max(n_1, n_2)} b
    g_1(n) + cg_2(n).
  \]
\end{proof}

\begin{lemma}[Integrals]
  If $f(x, n) \apeqq{a}{n_0} g(x, n)$, then
  \[
    \int f(x, n) \dx \apeq \int g(x, n) \dx
  \]
  \label{lem:integrals}
\end{lemma}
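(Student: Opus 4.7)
My plan is to unpack the definition of $\apeqq{a}{n_0}$ pointwise in $x$ and observe that the resulting bound separates cleanly: the multiplicative factor depends only on $n$, while the dependence on $x$ is confined to $f$ and $g$ themselves. First I would note that the hypothesis gives, for all $n \ge n_0$ and every $x$ in the integration domain,
\[
f(x,n) \le \p{1 + \frac{a(\ln n)^2}{n}} g(x,n) \quad \text{and} \quad g(x,n) \le \p{1 + \frac{a(\ln n)^2}{n}} f(x,n).
\]
Then, since the approximation factor $1 + a(\ln n)^2/n$ does not depend on $x$, and since $f$ and $g$ are nonnegative (as required by the definition of $\apeq$), I would invoke monotonicity of the integral to integrate both inequalities termwise, pulling the constant outside. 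This yields
\[
\int f(x,n) \dx \le \p{1 + \frac{a(\ln n)^2}{n}} \int g(x,n) \dx,
\]
together with the symmetric reverse bound, which combine to give $\int f(x,n)\dx \apeqq{a}{n_0} \int g(x,n)\dx$ with exactly the same constants $a$ and $n_0$ as in the hypothesis.

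There is no genuine obstacle here; the only thing to verify for the conclusion to make sense is that the lemma is well-posed, i.e., that the two integrals are finite and nonzero so that the ratio in the definition of $\apeq$ is defined. In every application of this lemma in the paper, the integrands are probability densities multiplied by nonnegative quantities such as $x$ or $y$, and the resulting integrals represent either expectations or probabilities that are manifestly positive and finite, so this is automatic. The proof is therefore essentially a one-line observation about the $x$-independence of the approximation constant combined with monotonicity of the Lebesgue integral; the closest thing to a subtlety is simply remembering to invoke nonnegativity when pulling the $(1 + a(\ln n)^2/n)$ factor across the inequality, which is already baked into the hypothesis.
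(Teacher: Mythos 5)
Your proposal is correct and matches the paper's proof in substance: the paper writes the ratio of integrals as $\int g\cdot(f/g)\,dx \big/ \int g\,dx$ and bounds $f/g$ pointwise by $1 + a(\ln n)^2/n$, which is exactly your observation that the approximation factor is $x$-independent and passes through the integral by monotonicity. The same constants $a$ and $n_0$ carry over, as you note, so nothing further is needed.
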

\begin{proof}
  \[
    \frac{\int f(x, n) \dx}{\int g(x, n) \dx} = 
    \frac{\int g(x, n) \frac{f(x, n)}{g(x, n)} \dx}{\int g(x, n) \dx} \le
    \frac{\int g(x, n) \p{1 + \frac{a(\ln n)^2}{n}} \dx}{\int g(x, n) \dx} \le
    1 + \frac{a(\ln n)^2}{n}
  \]
  for $n \ge n_0$. A symmetric argument holds for the repciprocal, proving the
  lemma.
\end{proof}

\begin{lemma}
  If $f_1(n) \apeqq{a_1}{n_1} g_1(n)$ and $f_2(n) \apeqq{a_2}{n_2} g_2(n)$, then
  \[
    f_1(n) f_2(n) \apeq g_1(n) g_2(n).
  \]
  \label{lem:mult}
\end{lemma}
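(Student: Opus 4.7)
The plan is to prove this multiplicativity lemma in essentially the same way as the transitivity lemma (Lemma~\ref{lem:trans}), since both reduce to controlling a product of two ratios that are each close to $1$. The key algebraic identity is the factorization
\[
  \frac{f_1(n) f_2(n)}{g_1(n) g_2(n)} = \frac{f_1(n)}{g_1(n)} \cdot \frac{f_2(n)}{g_2(n)},
\]
which turns a statement about products into one about a product of two ratios that the hypotheses already control.

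First, I would fix $n \ge \max(n_1, n_2)$ so that both hypotheses are in force. Applying the definition of $\apeqq{a_i}{n_i}$ to each factor gives
\[
  \frac{f_1(n) f_2(n)}{g_1(n) g_2(n)} \le \left(1 + \frac{a_1 (\ln n)^2}{n}\right)\left(1 + \frac{a_2 (\ln n)^2}{n}\right) = 1 + \frac{(a_1 + a_2)(\ln n)^2}{n} + \frac{a_1 a_2 (\ln n)^4}{n^2}.
\]
To absorb the second-order error term into the target form $1 + a(\ln n)^2/n$, I would use the inequality $(\ln n)^2 \le n$ (valid for $n$ sufficiently large, which we can ensure by enlarging $n_1, n_2$ if necessary), so that $(\ln n)^4/n^2 \le (\ln n)^2/n$. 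This collapses the bound to $1 + (a_1 + a_2 + a_1 a_2)(\ln n)^2/n$.

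A symmetric argument applied to the reciprocal ratio $g_1(n) g_2(n)/(f_1(n) f_2(n))$ yields the same bound with the same constants. Combining the two directions, we conclude
\[
  f_1(n) f_2(n) \apeqq{a_1 + a_2 + a_1 a_2}{\max(n_1, n_2)} g_1(n) g_2(n),
\]
which establishes the lemma. There is no real obstacle here: the proof is entirely parallel to Lemma~\ref{lem:trans}, and the only point requiring any care is making sure the quartic error $(\ln n)^4/n^2$ is dominated by the quadratic tolerance $(\ln n)^2/n$, which is immediate from $(\ln n)^2 \le n$.
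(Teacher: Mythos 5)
Your proof is correct and matches the paper's argument essentially line for line: factor the ratio of products into the product of the two ratios, multiply out the bounds, and absorb the $(\ln n)^4/n^2$ cross term using $(\ln n)^2 \le n$, yielding the constant $a_1 + a_2 + a_1 a_2$ and threshold $\max(n_1, n_2)$. No issues.
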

\begin{proof}
  \begin{align*}
    \frac{f_1(n) f_2(n)}{g_1(n) g_2(n)} &= \frac{f_1(n)}{g_1(n)} \cdot
    \frac{f_2(n)}{g_2(n)} \\
    &\le \p{1 + \frac{a_1 (\ln n)^2}{n}} \p{1 + \frac{a_2 (\ln n)^2}{n}} \\
    &\le 1 + \frac{(a_1 + a_2)(\ln n)^2}{n} + \frac{a_1 a_2 (\ln  n)^4}{n^2} \\
    &\le 1 + \frac{(a_1 + a_2 + a_1a_2)(\ln n)^2}{n}
  \end{align*}
  for all $n \ge \max(n_1, n_2)$, since $n \ge (\ln n)^2$. A symmetric argument
  holds for the reciprocal. Thus, $f_1(n) f_2(n) \apeqq{a_1+a_2+a_1a_2}{\max(n_1,n_2)}
  g_1(n) g_2(n)$.
\end{proof}
\begin{lemma}
  If $f(n) \apeqq{a}{n_0} g(n)$, then $\frac{1}{f(n)} \apeq
  \frac{1}{g(n)}$.
  \label{lem:reciprocal}
\end{lemma}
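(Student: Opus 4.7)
The plan is to observe that the definition of $\apeqq{a}{n_0}$ is essentially symmetric in $f$ and $g$, and that taking reciprocals of both sides simply interchanges the roles of the two inequalities in the definition. Concretely, the hypothesis $f(n) \apeqq{a}{n_0} g(n)$ unpacks to the pair of bounds
\[
\frac{f(n)}{g(n)} \le 1 + \frac{a(\ln n)^2}{n}, \qquad \frac{g(n)}{f(n)} \le 1 + \frac{a(\ln n)^2}{n},
\]
valid for all $n \ge n_0$.

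First I would write out the ratio we need to control for the reciprocal equivalence. For positive $f(n)$ and $g(n)$, we have $(1/f(n))/(1/g(n)) = g(n)/f(n)$ and $(1/g(n))/(1/f(n)) = f(n)/g(n)$. So the two inequalities required to establish $1/f(n) \apeqq{a}{n_0} 1/g(n)$ are, in the same order,
\[
\frac{g(n)}{f(n)} \le 1 + \frac{a(\ln n)^2}{n}, \qquad \frac{f(n)}{g(n)} \le 1 + \frac{a(\ln n)^2}{n},
\]
which is exactly the hypothesis with the two bounds swapped. Hence the conclusion holds with the same constants $a$ and $n_0$.

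There is essentially no obstacle here: the lemma is a direct rewriting, and the only mild prerequisite is to note that $f(n)$ and $g(n)$ are positive on $n \ge n_0$ (which follows from the defining ratio bounds, since both $f(n)/g(n)$ and $g(n)/f(n)$ are bounded above by a finite number and hence nonzero). No appeal to earlier lemmas (transitivity, linearity, multiplicativity, integrals) is required.
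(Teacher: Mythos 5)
Your proof is correct and is essentially identical to the paper's: both simply note that $(1/f(n))/(1/g(n)) = g(n)/f(n)$, so the two defining inequalities are just swapped, and the same constants $a$ and $n_0$ work.
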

\begin{proof}
  \[
    \frac{1/f(n)}{1/g(n)} = \frac{g(n)}{f(n)} \le 1 + \frac{a(\ln n)^2}{n}
  \]
  for $n \ge n_0$. A symmetric argument holds for the reciprocal.
\end{proof}

\begin{lemma}
  If $g_1(n) \le f(n) \le g_2(n)$, $g_1(n) \apeq h(n)$, and $g_2(n) \apeq h(n)$,
  then $f(n) \apeq h(n)$.
  \label{lem:squeeze}
\end{lemma}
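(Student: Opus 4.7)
The plan is to bound the two ratios $f(n)/h(n)$ and $h(n)/f(n)$ directly using the given equivalences together with the sandwich $g_1(n) \le f(n) \le g_2(n)$. Since this is essentially a monotonicity argument for ratios, it should fall out with no clever manipulation.

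First I would unpack the hypotheses. By definition of $\apeq$, there exist constants $a_1, n_1 > 0$ with
\[
  \frac{g_1(n)}{h(n)}, \ \frac{h(n)}{g_1(n)} \ \le \ 1 + \frac{a_1 (\ln n)^2}{n} \qquad (n \ge n_1),
\]
and constants $a_2, n_2 > 0$ with the analogous bound for $g_2$. In particular this forces $g_1(n), g_2(n), h(n) > 0$ for $n$ large, so the sandwich gives $f(n) > 0$ and all ratios are well-defined.

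Next I would apply the sandwich. For $n \ge \max(n_1, n_2)$, since $f(n) \le g_2(n)$,
\[
  \frac{f(n)}{h(n)} \ \le \ \frac{g_2(n)}{h(n)} \ \le \ 1 + \frac{a_2 (\ln n)^2}{n},
\]
and since $f(n) \ge g_1(n) > 0$,
\[
  \frac{h(n)}{f(n)} \ \le \ \frac{h(n)}{g_1(n)} \ \le \ 1 + \frac{a_1 (\ln n)^2}{n}.
\]
Taking $a = \max(a_1, a_2)$ and $n_0 = \max(n_1, n_2)$, both ratios are bounded by $1 + a(\ln n)^2/n$ for $n \ge n_0$, which is exactly $f(n) \apeqq{a}{n_0} h(n)$.

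There is no real obstacle here; the argument is essentially ``monotonicity of $x \mapsto x/h(n)$ and $x \mapsto h(n)/x$.'' The only thing to be slightly careful about is that positivity of $g_1$ (needed so that $h/g_1$ makes sense) is already guaranteed by $g_1 \apeq h$, so no extra hypothesis is required.
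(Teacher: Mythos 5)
Your proof is correct and takes exactly the same route as the paper's: bound $f(n)/h(n)$ by $g_2(n)/h(n)$ and $h(n)/f(n)$ by $h(n)/g_1(n)$, then invoke the definition of $\apeq$. You simply spell out the constants and the positivity considerations that the paper leaves implicit.
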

\begin{proof}
  \[
    \frac{f(n)}{h(n)} \le \frac{g_2(n)}{h(n)}
  \]
  and
  \[
    \frac{h(n)}{f(n)} \le \frac{h(n)}{g_1(n)},
  \]
  proving the lemma by definition.
\end{proof}

\begin{fact}
  For all $x \ge 1$, $\ln x \le x$ and $(\ln x)^2 \le x$.
\end{fact}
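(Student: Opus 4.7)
The plan is to prove each of the two inequalities separately, using only the basic properties of the exponential function $e^t$ as a convex function bounded below by its tangent line at the origin.

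For the first inequality $\ln x \le x$, I would appeal to the elementary fact $e^y \ge 1 + y$ for all real $y$, which is immediate from the convexity of the exponential (or directly from its Taylor expansion at $0$). Setting $y = \ln x$, this gives $x = e^{\ln x} \ge 1 + \ln x \ge \ln x$, which holds for every $x > 0$, and in particular on $[1, \infty)$.

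For the second inequality $(\ln x)^2 \le x$, the plan is to substitute $t = \ln x \ge 0$ and instead prove $t^2 \le e^t$ for $t \ge 0$. I would define $f(t) = e^t - t^2$ and analyze it using derivatives. Observe that $f(0) = 1 > 0$, $f'(t) = e^t - 2t$, and $f''(t) = e^t - 2$. The second derivative vanishes only at $t = \ln 2$, so this is the unique critical point of $f'$ on $[0, \infty)$, and it is a minimum of $f'$ there. Evaluating at this minimizer gives $f'(\ln 2) = 2 - 2 \ln 2 > 0$, so $f'$ is strictly positive on $[0, \infty)$. Consequently $f$ is strictly increasing on $[0, \infty)$, which together with $f(0) = 1 > 0$ yields $f(t) > 0$ for all $t \ge 0$, i.e.\ $t^2 < e^t$. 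Translating back gives $(\ln x)^2 < x$ for all $x \ge 1$.

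There is no real obstacle here; this is a standard calculus exercise. The only mild subtlety is that the naive series bound $e^t \ge 1 + t + t^2/2$ only gives $t^2 \le 2 e^t$, so a derivative argument (or equivalently a two-step Taylor bound) is needed to sharpen the constant and get $t^2 \le e^t$.
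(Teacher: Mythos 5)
Your proof is correct. The paper states this as an elementary background fact without supplying any proof, so there is nothing to compare against; your argument --- $e^y \ge 1+y$ for the first inequality, and the substitution $t=\ln x$ followed by showing $f(t)=e^t-t^2$ has everywhere-positive derivative (since $f'$ attains its minimum $2-2\ln 2>0$ at $t=\ln 2$) for the second --- is a complete and standard verification.
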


\begin{lemma}
  For $a,b,c,d > 0$, if $\frac{a}{b} \le \frac{c}{d}$, then
  \[
    \frac{a}{b} \le \frac{a+c}{b+d} \le \frac{c}{d}.
  \]
  \label{lem:ratio_bound}
\end{lemma}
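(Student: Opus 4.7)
The plan is to reduce both of the target inequalities to a single cross-multiplied form of the hypothesis. Since $b,d > 0$, the assumption $a/b \le c/d$ is equivalent to $ad \le bc$; I would start by recording this reformulation.

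Next I would handle the two inequalities separately but in parallel. For the left inequality $a/b \le (a+c)/(b+d)$, multiply both sides by $b(b+d) > 0$ to get the equivalent statement $a(b+d) \le b(a+c)$, expand to $ab + ad \le ab + bc$, and cancel $ab$ to arrive at $ad \le bc$. For the right inequality $(a+c)/(b+d) \le c/d$, multiply both sides by $d(b+d) > 0$ to get $d(a+c) \le c(b+d)$, expand to $ad + cd \le bc + cd$, and cancel $cd$ to arrive once more at $ad \le bc$. Both of the inequalities to be proved are therefore equivalent to the hypothesis, which completes the argument.

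There is no real obstacle here; the only thing to be careful about is that we never multiply through by a nonpositive quantity, and this is guaranteed by $a,b,c,d > 0$ (so in particular $b(b+d) > 0$ and $d(b+d) > 0$). Geometrically, this is just the standard mediant inequality: the fraction $(a+c)/(b+d)$ always lies between $a/b$ and $c/d$, and this one-line observation is exactly what is being used in Lemma~\ref{lem:linear} to bound a sum of ratios by the maximum of the two ratios.
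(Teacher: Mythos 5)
Your proof is correct and amounts to the same elementary verification as the paper's: both reduce the two inequalities to the cross-multiplied hypothesis $ad \le bc$ (the paper phrases this as $d/b \le c/a$ and factors out $a/b$ and $c/d$ instead of clearing denominators, but the content is identical). No issues.
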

\begin{proof}
  Since $\frac{a}{b} \le \frac{c}{d}$, $\frac{d}{b} \le \frac{c}{a}$. Therefore,
  \[
    \frac{a+c}{b+d} = \frac{a}{b} \cdot \frac{1 + c/a}{1+d/b} \ge
    \frac{a}{b} \cdot \frac{1+d/b}{1+d/b} = \frac{a}{b}.
  \]
  Similarly,
  \[
    \frac{a+c}{b+d} = \frac{c}{d} \cdot \frac{1 + a/c}{1+b/d} \le
    \frac{c}{d} \cdot \frac{1+b/d}{1+b/d} = \frac{c}{d}.
  \]
\end{proof}

\begin{lemma}
  \[
    \frac{a-(\ln n)^2/n}{b} \apeq \frac{a}{b}
  \]
  \label{lem:sub}
\end{lemma}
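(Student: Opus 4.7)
Interpreting the statement in context, $a$ and $b$ are positive constants (not depending on $n$), so the factor $b$ cancels in the ratios that define $\apeq$, and it suffices to show that $a - (\ln n)^2/n \apeq a$. I would verify the two one-sided bounds in the definition of $\apeq$ directly.

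First I would handle the easy direction by computing
\[
  \frac{a - (\ln n)^2/n}{a} \;=\; 1 - \frac{(\ln n)^2}{a\,n} \;\le\; 1 \;\le\; 1 + \frac{(\ln n)^2}{n},
\]
so any constant $\ge 1$ works for this inequality, valid for all $n \ge 1$.

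For the reciprocal direction, I would use the elementary inequality $\frac{1}{1-x} \le 1 + 2x$ for $0 \le x \le 1/2$. Choosing $n_0$ large enough that $(\ln n)^2/(a\,n) \le 1/2$ for all $n \ge n_0$ (which is possible since $(\ln n)^2/n \to 0$), I get
\[
  \frac{a}{a - (\ln n)^2/n} \;=\; \frac{1}{1 - (\ln n)^2/(a\,n)} \;\le\; 1 + \frac{2(\ln n)^2}{a\,n}.
\]
Taking the constant $a_0 = \max(1, 2/a)$ then yields $a - (\ln n)^2/n \apeqq{a_0}{n_0} a$, and multiplying through by $1/b$ gives the claimed equivalence.

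There is no real obstacle here; the lemma is a direct calculation. The only subtlety is choosing $n_0$ large enough before invoking the Taylor-type bound on $1/(1-x)$, which requires $(\ln n)^2/n$ to be small relative to $a$. Everything else follows from unwinding the definition of $\apeq$.
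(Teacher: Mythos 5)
Your proof is correct and follows essentially the same route as the paper's: both directions are handled by direct computation of the ratio, with the reciprocal bounded by $1 + \frac{2(\ln n)^2}{an}$ once $n$ is large enough that $\frac{(\ln n)^2}{an} \le \frac{1}{2}$ (the paper simply makes the threshold explicit as $n \ge 16/a^4$).
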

\begin{proof}
  \[
    \frac{\frac{a-(\ln n)^2/n}{b}}{\frac{a}{b}} = 1 - \frac{(\ln n)^2}{n} \le 1
  \]
  \[
    \frac{\frac{a}{b}}{\frac{a-(\ln n)^2/n}{b}} =
    \frac{1}{1-\frac{(\ln n)^2}{an}} = 1 + \frac{\frac{(\ln
    n)^2}{an}}{1-\frac{(\ln n)^2}{an}} \le 1 + \frac{2(\ln n)^2}{an}
  \]
  for $n \ge 16/a^4$.
\end{proof}

\section{Lemmas for Appendix~\ref{app:pl_proofs}}
\begin{lemma}
  For $k \le (1-c) \ln n$,
  \[
    \cdf{n(1+c)-k+1}{n(1+c)}\p{\p{\frac{cn}{\ln n}^{1/(1+\delta)}}} \le
    \frac{\sqrt{k}}{n}.
  \]
  \label{lem:y_cdf_bound}
\end{lemma}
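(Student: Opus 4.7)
The plan is to reinterpret the CDF as the lower-tail probability of a binomial random variable and then apply a Chernoff-style bound. Let $y := (cn/\ln n)^{1/(1+\delta)}$, so that a single power-law draw exceeds $y$ with probability $p := y^{-(1+\delta)} = (\ln n)/(cn)$. If $B$ counts the number of samples exceeding $y$ among the $n(1+c)$ draws, then $B \sim \mathrm{Bin}(n(1+c), p)$ and, by definition of an order statistic,
\[
\cdf{n(1+c)-k+1}{n(1+c)}(y) = \Prb{B \le k-1}.
\]
Setting $\mu := \E{B} = (1+c)(\ln n)/c$, the hypothesis $k \le (1-c)\ln n$ yields $(k-1)/\mu \le c(1-c)/(1+c) < 1$, so we are in a genuine lower-tail regime.

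The next step is to dominate each binomial term by a Poisson term. Using $\binom{n(1+c)}{i} \le (n(1+c))^i/i!$ and $(1-p)^{n(1+c)-i} \le e^{-\mu}e^{pi}$, together with $pi \le (\ln n)^2/(cn) = o(1)$, each summand is at most $(1+o(1))\,e^{-\mu}\mu^i/i!$. The Poisson terms are strictly increasing in $i$ throughout our range (their ratio $\mu/(i+1) \ge (1+c)/(c(1-c)) > 1$), so a reverse geometric-series bound followed by Stirling's inequality $(k-1)! \ge \sqrt{2\pi(k-1)}((k-1)/e)^{k-1}$ gives
\[
\Prb{B \le k-1} \;\le\; \frac{C}{\sqrt{k}}\left(\frac{e\mu}{k-1}\right)^{k-1} e^{-\mu}
\]
for a constant $C$ that is bounded uniformly for $c \in (0,1)$.

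Substituting $\mu = (1+c)(\ln n)/c$ and using that $(e\mu/(k-1))^{k-1}$ is increasing in $k$ on $\{k - 1 < \mu\}$, the worst case within the allowed range is $k - 1 = (1-c)\ln n$, at which the exponent of $n$ is
\[
E(c) = (1-c)\bigl[1 + \ln\bigl((1+c)/(c(1-c))\bigr)\bigr] - (1+c)/c.
\]
A clean way to verify $E(c) \le -1$ on $(0,1)$ is to set $u := 1/(c(1-c)) \ge 4$ and rewrite the target inequality as $u - 1 - \ln u \ge \ln(1+c)$, which follows because $u - 1 - \ln u$ is increasing in $u > 1$ with value $3 - \ln 4 \approx 1.61$ at $u = 4$, comfortably exceeding $\ln 2 \ge \ln(1+c)$. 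This collapses the bound to $\le C/(n\sqrt{k})$, which is dominated by $\sqrt{k}/n$ once $k$ exceeds an absolute constant; the finitely many smaller values of $k$ are handled by direct computation (e.g., at $k = 1$ the CDF equals $(1-p)^{n(1+c)} \le n^{-(1+c)/c} \le 1/n$).

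The main obstacle I anticipated was the uniform calculus inequality $E(c) \le -1$, which looked delicate at the endpoints $c \to 0^+$ and $c \to 1^-$; the substitution $u = 1/(c(1-c))$ reduces it to an elementary monotone one-variable check, after which the remaining bookkeeping (Poisson domination, the reverse geometric-series sum, Stirling, and the small-$k$ cases) is routine.
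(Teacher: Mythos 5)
Your proof is correct and follows essentially the same route as the paper's: both view the CDF as a binomial lower tail, dominate each term by a Poisson-type term via $\binom{m}{j}\le m^j/j!$ and $(1-p)^{m-j}\le e^{-p(m-j)}$, apply Stirling, use monotonicity in $j$ to reduce to the worst case $j = (1-c)\ln n$, and finish by verifying a one-variable inequality on the exponent of $n$. The only differences are in the bookkeeping: you sum the terms via geometric domination (obtaining the slightly stronger $C/(n\sqrt{k})$) where the paper bounds $\sum_j j^{-1/2} \le \sqrt{k}$, and your substitution $u = 1/(c(1-c))$ gives a clean uniform check of the exponent inequality where the paper bounds the logarithms term by term.
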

\begin{proof}
  We can write
  \begin{align*}
    \cdf{n(1+c)-k+1}{n(1+c)}\p{\p{\frac{cn}{\ln n}}^{1/(1+\delta)}}
    &= \sum_{j=0}^{k-1} \binom{n(1+c)}{j} \p{1 - \frac{\ln n}{cn}}^{n(1+c)-j}
    \p{\frac{\ln n}{cn}}^j \\
    &\le \sum_{j=0}^{k-1} \frac{(n(1+c))^j}{j!} \exp\p{-\p{\frac{\ln
    n}{cn}}(n(1+c)-j)} \p{\frac{\ln n}{cn}}^j \\
    &= \sum_{j=0}^{k-1} \frac{1}{j!} \p{\frac{(1+c)\ln n}{c}}^j \exp\p{-\ln
    n\p{1 + c^{-1}\p{1-\frac{j}{n}}}} \\
    &= \frac{1}{n} \sum_{j=0}^{k-1} \frac{1}{j!} \p{(1+c^{-1})\ln n}^j
    \p{\frac{1}{n}}^{c^{-1}\p{1-\frac{j}{n}}} \\
    &\le \frac{1}{n^2} + \frac{1}{n} \sum_{j=1}^{k-1}
    \frac{1}{\sqrt{2\pi j}}
    \p{\frac{e(1+c^{-1})\ln n}{j}}^j 
    \p{\frac{1}{n}}^{c^{-1}\p{1-\frac{j}{n}}} \numberthis \label{eq:sum_j_cdf}
  \end{align*}
  by Stirling's approximation. The term
  \[
    \p{\frac{e(1+c^{-1}) \ln n}{j}}^j
  \]
  is increasing whenever it's natural log,
  \[
    j \p{1 + \ln (1+c^{-1}) + \ln \ln n - \ln j},
  \]
  is increasing. This has derivative
  \[
    1 + \ln (1+c^{-1}) + \ln \ln n - \ln j - 1 = \ln(1+c^{-1}) + \ln \ln n - \ln
    j \ge \ln \ln n - \ln j.
  \]
  Thus, it is increasing for $j \le \ln n$. For $j \le (1-c)\ln n$, we have
  \begin{align*}
    \p{\frac{e(1+c^{-1}) \ln n}{j}}^j
    &\le \p{\frac{e(1+c^{-1}) \ln n}{(1-c)\ln n}}^{(1-c)\ln n} \\
    &= \p{\frac{e(1+c^{-1})}{1-c}}^{(1-c)\ln n} \\
    &= \exp\p{1+\ln(1+c^{-1}) - \ln(1-c)}^{(1-c)\ln n} \\
    &= \exp\p{\ln n}^{(1+\ln(1 + c^{-1}) - \ln(1-c))(1-c)} \\
    &= n^{(1+\ln(1 + c^{-1}) - \ln(1-c))(1-c)} \\
    &\le n^{(1 + c^{-1} + c + c^2)(1-c)} \\
    &= n^{1 + c^{-1} + c + c^2 - c - 1 - c^2 - c^3} \\
    &= n^{c^{-1} - c^3} \\
    &\le n^{c^{-1}(1 - j/n)}
  \end{align*}
  for sufficiently large $n$, since $j \le (1-c)\ln n$. Combining this with
  \eqref{eq:sum_j_cdf}, we have
  \begin{align*}
    \cdf{n(1+c)-k+1}{n(1+c)}\p{\p{\frac{cn}{\ln n}}^{1/(1+\delta)}}
    &\le \frac{1}{n^2} + \frac{1}{n\sqrt{2\pi}} \sum_{j=1}^{k-1}
    \frac{1}{\sqrt{j}} \\
    &\le \frac{1}{n^2} + \frac{1}{n\sqrt{2\pi}} \p{1 + \int_1^k
      \frac{1}{\sqrt{j}}\, dj} \\
    &\le \frac{1}{n^2} + \frac{\sqrt{k}}{n\sqrt{2\pi}} \\
    &\le \frac{\sqrt{k}}{n}
  \end{align*}
\end{proof}

\begin{lemma}
  For $y \ge (cn/\ln n)^{1/(1+\delta)}$,
  \[
    \p{1-(\beta y)^{-(1+\delta)}}^{\alpha n} \apeq \p{1-y^{-(1+\delta)}}^{cn},
  \]
  \label{lem:y_approx_c}
\end{lemma}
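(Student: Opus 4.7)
The plan is to take logarithms of both sides and show their difference is $O((\ln n)^2/n)$ uniformly over the stated range of $y$; exponentiating will then deliver the $\apeq$ relation directly from its definition. Since $y \ge (cn/\ln n)^{1/(1+\delta)}$ forces $y^{-(1+\delta)} \le \ln n/(cn)$, which is small for large $n$, both $y^{-(1+\delta)}$ and $(\beta y)^{-(1+\delta)} = \beta^{-(1+\delta)} y^{-(1+\delta)}$ lie safely in the radius of convergence, so I would expand using $\ln(1-u) = -\sum_{k \ge 1} u^k/k$.

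The crucial observation is that by the very definition $c = \alpha \beta^{-(1+\delta)}$, we have $\alpha n \beta^{-k(1+\delta)} = cn \beta^{-(k-1)(1+\delta)}$, and in particular the two linear ($k=1$) terms of the expansions agree exactly. Taking the difference of the two logarithms therefore yields
\[
  \alpha n \ln\p{1-(\beta y)^{-(1+\delta)}} - cn \ln\p{1-y^{-(1+\delta)}} = \sum_{k \ge 2} \frac{cn\, y^{-k(1+\delta)}}{k}\b{1 - \beta^{-(k-1)(1+\delta)}}.
\]
Since $\beta > 1$, each bracket lies in $[0,1]$, so this sum is nonnegative and bounded above by $\sum_{k \ge 2} cn\, y^{-k(1+\delta)}/k$. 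Plugging in $y^{-(1+\delta)} \le \ln n/(cn)$, the $k=2$ term contributes at most $(\ln n)^2/(2cn) = O((\ln n)^2/n)$, and subsequent terms form a geometrically decaying tail with ratio at most $\ln n/(cn)$, so the entire log-difference is $O((\ln n)^2/n)$ uniformly in $y$ (the worst case being $y = (cn/\ln n)^{1/(1+\delta)}$, since the bound is monotone in $y^{-(1+\delta)}$).

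Finally, exponentiating and using $e^{t} = 1 + O(t)$ for small $t$ gives
\[
  \frac{\p{1-(\beta y)^{-(1+\delta)}}^{\alpha n}}{\p{1-y^{-(1+\delta)}}^{cn}} = 1 + O\p{\frac{(\ln n)^2}{n}},
\]
which is exactly the $\apeq$ relation. I do not expect any real obstacle here: the argument is mechanical once one notes the cancellation of the linear term via $\alpha\beta^{-(1+\delta)} = c$. The only point requiring mild care is ensuring that the bound holds uniformly in $y$, which is immediate since $y^{-(1+\delta)}$ is maximized at the lower endpoint of the range and all higher-order terms depend monotonically on $y^{-(1+\delta)}$.
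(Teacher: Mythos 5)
Your argument is correct and matches the paper's proof in substance: both hinge on the exact cancellation of the first-order terms (since $\alpha\beta^{-(1+\delta)} = c$) together with the bound $y^{-(1+\delta)} \le \ln n/(cn)$, which makes the second-order correction $O((\ln n)^2/n)$. The paper packages this as a two-sided sandwich via the elementary inequalities $1-(\beta y)^{-(1+\delta)} \ge \p{1-y^{-(1+\delta)}}^{\beta^{-(1+\delta)}}$ and $(1-u)^{m} \ge e^{-mu}(1-mu^2)$ instead of expanding the logarithm, but the content is the same.
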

\begin{proof}
  We know that $1-(\beta y)^{-(1+\delta)} \ge
  (1-y^{-(1+\delta)})^{\beta^{-(1+\delta)}}$ from the Taylor expansion, giving
  us
  \[
    \p{1-(\beta y)^{-(1+\delta)}}^{\alpha n} \ge
    \p{(1-y^{-(1+\delta)})^{\beta^{-(1+\delta)}}}^{\alpha n} =
    (1-y^{-(1+\delta)})^{cn}.
  \]
  On the other hand, for $y \ge (cn/\ln n)^{1/(1+\delta)}$,
  \[
    \p{1-(\beta y)^{-(1+\delta)}}^{\alpha n} \le \exp\p{-c y^{-(1+\delta)} n}
    \le \frac{\p{1-y^{-(1+\delta)}}^{cn}}{1-c n y^{-2(1+\delta)}}
    \le \frac{\p{1-y^{-(1+\delta)}}^{cn}}{1 - \frac{(\ln n)^2}{cn}} \apeq
    \p{1-y^{-(1+\delta)}}^{cn}.
  \]
\end{proof}

\begin{lemma}
  For $k \le ((1-c^2) \ln n)/2$,
  \begin{align*}
    \cdf{\beta^{1+\delta}n(1+c)-j}{\beta^{1+\delta}n(1+c)}\p{\p{\frac{\alpha n}{\ln n}}^{1/(1+\delta)}}
    \le \frac{\sqrt{k}}{n},
  \end{align*}
  \label{lem:x_cdf_bound}
\end{lemma}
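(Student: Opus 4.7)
The plan is to mirror the strategy of Lemma~\ref{lem:y_cdf_bound}, since the quantity has the same structural form. Setting $m = \beta^{1+\delta}n(1+c)$ and $x = (\alpha n/\ln n)^{1/(1+\delta)}$, I first observe that the probability a single draw exceeds $x$ is $x^{-(1+\delta)} = \ln n/(\alpha n)$, and the expected number of draws out of $m$ exceeding $x$ is $m \cdot \ln n/(\alpha n) = (1+c^{-1})\ln n$. This exactly matches the expected count that appears in Lemma~\ref{lem:y_cdf_bound}, despite the different scaling of $m$ and $x$ here. The CDF $\cdf{m-j}{m}(x)$ is the probability of observing at most $j$ draws above $x$, and since $j$ ranges in $\{0,\ldots,k-1\}$ with $k \ll (1+c^{-1})\ln n$, we are deep in the lower tail.

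To carry out the bound, I would substitute $\ell = m - i$ in the CDF sum to get
\[
\cdf{m-j}{m}(x) = \sum_{\ell=0}^{j}\binom{m}{\ell}\p{1-\frac{\ln n}{\alpha n}}^{m-\ell}\p{\frac{\ln n}{\alpha n}}^\ell,
\]
apply $\binom{m}{\ell} \le m^\ell/\ell!$ and $(1-t)^{m-\ell} \le e^{-t(m-\ell)}$, and extract $n^{-(1+c^{-1})}$ from the exponent $-(\ln n/(\alpha n))(m-\ell) = -(1+c^{-1})\ln n + \ell\ln n/(\alpha n)$. The residual factor $n^{\ell/(\alpha n)}$ is $1 + O((\ln n)^2/n)$ for $\ell = O(\ln n)$, so it can be absorbed into a constant.

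Stirling's approximation $\ell! \ge \sqrt{2\pi\ell}(\ell/e)^\ell$ for $\ell \ge 1$ then reduces the problem to showing $(e(1+c^{-1})\ln n/\ell)^\ell \le n^{c^{-1}}$ for each $\ell \in \{1,\ldots,j\}$, so that each summand becomes $O(1/(n\sqrt{\ell}))$. The function $\ell \mapsto (e(1+c^{-1})\ln n/\ell)^\ell$ is increasing on $\ell \le (1+c^{-1})\ln n$, and the hypothesis $k \le ((1-c^2)/2)\ln n$ keeps us below this threshold, so the worst case is at the endpoint $\ell = ((1-c^2)/2)\ln n$, where the value simplifies to $(2e/(c(1-c)))^{((1-c^2)/2)\ln n} = n^{((1-c^2)/2)(1+\ln(2/(c(1-c))))}$. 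The required bound thus reduces to the single-variable inequality
\[
\frac{c(1-c^2)}{2}\p{1 + \ln\frac{2}{c(1-c)}} \le 1 \quad \text{for all } c \in (0,1),
\]
which I would verify by elementary calculus; the left side vanishes at both endpoints and peaks near $c \approx 0.6$ at about $0.6$, well below $1$. Summing the per-term bounds $O(1/(n\sqrt{\ell}))$ over $\ell = 1,\ldots,k-1$ gives $\sum 1/\sqrt{\ell} \le 2\sqrt{k}$, yielding the desired $O(\sqrt{k}/n)$; the $\ell=0$ term contributes only $n^{-(1+c^{-1})} \le 1/n^2$.

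The main obstacle is the single-variable inequality above. It is the direct analog of the bound $(1 + \ln(1+c^{-1}) - \ln(1-c))(1-c) \le c^{-1} - c^3$ used in the proof of Lemma~\ref{lem:y_cdf_bound}, and the threshold $k \le ((1-c^2)/2)\ln n$ looks engineered precisely to give this inequality enough margin across the whole interval $c \in (0,1)$. Everything else is routine bookkeeping around the Stirling estimate, the geometric sum for $\sum 1/\sqrt{\ell}$, and the $\apeq$ framework developed in Appendix~\ref{app:equiv}.
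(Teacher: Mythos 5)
Your proposal is correct and follows essentially the same route as the paper's proof of Lemma~\ref{lem:x_cdf_bound}: expand the CDF as a binomial tail sum, apply $\binom{m}{\ell}\le m^\ell/\ell!$, the exponential bound, and Stirling, use monotonicity of $\ell\mapsto(e(1+c^{-1})\ln n/\ell)^\ell$ to reduce to the endpoint $\ell=((1-c^2)/2)\ln n$, and close with a single-variable inequality in $c$. Your bookkeeping is in fact slightly cleaner than the paper's (which carries a redundant factor of $\beta^{1+\delta}$ in the base and bounds it by $c^{-1}$ later), and your resulting inequality $\tfrac{c(1-c^2)}{2}\bigl(1+\ln\tfrac{2}{c(1-c)}\bigr)\le 1$ does hold on all of $(0,1)$ with a comfortable margin.
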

\begin{proof}
  We begin with
  \begin{align*}
    \cdf{\beta^{1+\delta}n(1+c)-j}{\beta^{1+\delta}n(1+c)} &\p{\p{\frac{\alpha
    n}{\ln n}}^{1/(1+\delta)}} \\
    &= \sum_{\ell=0}^{j} \binom{\beta^{1+\delta}n(1+c)}{\ell} \p{1 - \frac{\ln
    n}{\alpha n}}^{\beta^{1+\delta}n(1+c)-\ell}
    \p{\frac{\ln n}{\alpha n}}^\ell \\
    &\le \sum_{\ell=0}^{j} \frac{(\beta^{1+\delta}n(1+c))^\ell}{\ell!} \exp\p{-\p{\frac{\ln
    n}{\alpha n}}(\beta^{1+\delta}n(1+c)-\ell)} \p{\frac{\ln n}{\alpha n}}^\ell \\
    &= \sum_{\ell=0}^{j} \frac{1}{\ell!} \p{\frac{\beta^{1+\delta}(1+c)\ln
    n}{c}}^\ell \exp\p{-\ln n\p{1 + c^{-1}\p{1-\frac{\ell}{n}}}}
    \\
    &= \frac{1}{n} \sum_{\ell=0}^{j} \frac{1}{\ell!} \p{\beta^{1+\delta}(1+c^{-1})\ln n}^\ell
    \p{\frac{1}{n}}^{c^{-1}\p{1-\frac{\ell}{n}}} \\
    &\le \frac{1}{n^2} + \frac{1}{n} \sum_{\ell=1}^{j}
    \frac{1}{\sqrt{2\pi \ell}}
    \p{\frac{e\beta^{1+\delta}(1+c^{-1})\ln n}{\ell}}^\ell 
    \p{\frac{1}{n}}^{c^{-1}\p{1-\frac{\ell}{n}}} \numberthis \label{eq:sum_ell_cdf}
  \end{align*}
  We apply a similar argument as in Lemma \ref{lem:y_cdf_bound}, showing that for
  $\ell \le ((1-c^2)/2) \ln n$,
  \begin{align*}
    \p{\frac{e\beta^{1+\delta}(1+c^{-1}) \ln n}{\ell}}^\ell
    &\le \p{\frac{ec^{-1}(1+c^{-1}) \ln n}{(1-c)\ln n}}^{((1-c^2)/2)\ln n} \\
    &= \p{\frac{ec^{-1}(1+c^{-1})}{1-c}}^{(1-c)\ln n} \\
    &= \exp\p{1+ \ln c^{-1} + \ln(1+c^{-1}) - \ln(1-c)}^{((1-c^2)/2)\ln n} \\
    &= \exp\p{\ln n}^{(1+ \ln c^{-1} + \ln(1 + c^{-1}) - \ln(1-c))((1-c^2)/2)} \\
    &= n^{(1+ \ln c^{-1} + \ln(1 + c^{-1}) - \ln(1-c))((1-c^2)/2)} \\
    &\le n^{(1 + (c^{-1}-1) + c^{-1} + c + c^2)((1-c^2)/2)} \\
    &= n^{(2c^{-1} + c + c^2)((1-c^2)/2)} \\
    &\le n^{(2c^{-1} + 2c)((1-c^2)/2)} \tag{$c \le 1$} \\
    &= n^{c^{-1}(1+c^2)(1-c^2)} \\
    &= n^{c^{-1}(1 - c^4)} \\
    &\le n^{c^{-1}(1 - \ell/n)}
  \end{align*}
  for sufficiently large $n$. This gives us
  \begin{align*}
    \cdf{\beta^{1+\delta}n(1+c)-j}{\beta^{1+\delta}n(1+c)}\p{\p{\frac{\alpha n}{\ln n}}^{1/(1+\delta)}}
    &\le \frac{1}{n^2} + \frac{1}{n} \sum_{\ell=1}^j \frac{1}{\sqrt{2\pi \ell}}
    \le \frac{\sqrt{k}}{n},
  \end{align*}
\end{proof}

\begin{lemma}
  For $0 < a < b$ and $c > 0$,
  \[
    \frac{a+c}{b+c} > \frac{a}{b}
  \]
  \label{lem:frac_bigger}
\end{lemma}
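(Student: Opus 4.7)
The plan is to clear denominators and reduce to an obvious inequality. Since $b > 0$ and $b + c > 0$, both denominators are strictly positive, so the inequality $\frac{a+c}{b+c} > \frac{a}{b}$ is equivalent to $b(a+c) > a(b+c)$.

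Expanding both sides gives $ab + bc > ab + ac$, which simplifies to $bc > ac$, or equivalently $c(b - a) > 0$. This is immediate from the hypotheses $c > 0$ and $b > a$, so the chain of equivalences yields the claim.

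There is no real obstacle here; the only thing to be careful about is that we are allowed to cross-multiply without flipping the inequality, which is justified by the positivity of $b$ and $b+c$ (guaranteed by $0 < a < b$ and $c > 0$, so in particular $b > 0$). The proof is a two- or three-line calculation, with no need to invoke any of the earlier machinery of the paper.
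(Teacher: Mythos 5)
Your proof is correct, and it is essentially the same elementary verification as the paper's: the paper factors the fraction as $\frac{a(1+c/a)}{b(1+c/b)}$ and uses $c/a > c/b$, whereas you cross-multiply and reduce to $c(b-a) > 0$, but both are one-line algebraic checks resting on the same facts ($b>a>0$, $c>0$). Nothing is missing; your note justifying the cross-multiplication via positivity of the denominators is exactly the right care to take.
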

\begin{proof}
  \[
    \frac{a+c}{b+c}
    = \frac{a(1+c/a)}{b(1+c/b)}
    > \frac{a(1+c/b)}{b(1+c/b)}
    = \frac{a}{b}
  \]
\end{proof}

\begin{lemma}
  For $0 \le y \le a_1 \cdot \frac{\ln n}{n}$ and $|z| \le a_2 \ln n$,
  \[
    |(1-y)^z - (1-yz)| = O\p{\frac{1}{n}}
  \]
  \label{lem:taylor_1_yz}
\end{lemma}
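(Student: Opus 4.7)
The plan is to use the exponential-logarithm representation $(1-y)^z = \exp(z \ln(1-y))$, valid since $1 - y \ge 1 - a_1 \ln n / n > 0$ once $n$ is large, and then Taylor-expand in both the logarithm and the exponential. The key observation driving the bookkeeping is that even though $z$ can grow like $a_2 \ln n$, the combined product $|zy|$ is only $O((\ln n)^2/n) = o(1)$, so all the usual expansions near $0$ apply and the error terms will all shrink to $O(1/n)$.

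First I will apply the elementary bound $|\ln(1-y) + y| \le y^2$, valid for $y \in [0, 1/2]$ (which holds for large $n$), to obtain $z \ln(1-y) = -zy + \eta$ with $|\eta| \le |z|\, y^2 \le a_2 a_1^2 (\ln n)^3 / n^2$. Next, since $|{-zy + \eta}| \to 0$, I will use $|e^x - 1 - x| \le x^2$ for $|x| \le 1$ to write
\[
(1-y)^z = 1 - zy + \eta + R, \qquad |R| = O\bigl((zy)^2 + \eta^2\bigr) = O\bigl((\ln n)^4 / n^2\bigr).
\]
Subtracting $(1 - yz)$, the difference is $\eta + R$, whose magnitude is $O((\ln n)^3/n^2) + O((\ln n)^4/n^2) = O(1/n)$ because $(\ln n)^4 / n \to 0$.

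There is no substantive obstacle here — the whole argument is routine Taylor-remainder bookkeeping. The only step worth checking carefully is that both remainders (the one from $\ln$ and the one from $\exp$) fit inside $O(1/n)$ in the stated parameter range; this succeeds because the effective small parameter is $zy$, not $z$ or $y$ separately, and the largest surviving term $(zy)^2 = O((\ln n)^4/n^2)$ is comfortably subsumed into $O(1/n)$.
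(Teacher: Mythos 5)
Your proof is correct, and it takes a slightly different route from the paper's. The paper applies Taylor's theorem with Lagrange remainder directly to $f(y) = (1-y)^z$ as a function of $y$, writing $(1-y)^z = 1 - yz \pm \tfrac{1}{2}f''(\varepsilon)y^2$ and then bounding $f''(\varepsilon) = z(z-1)(1-\varepsilon)^{z-2} \le |z(z-1)|\exp(\varepsilon|z-2|)$; the mild technical work there is controlling the factor $(1-\varepsilon)^{z-2}$, which exceeds $1$ and must be shown to be only $n^{O(\ln n/n)} = 1 + o(1)$. You instead factor through $(1-y)^z = \exp(z\ln(1-y))$ and expand the logarithm and the exponential separately, with the observation that the effective small parameter is $zy = O((\ln n)^2/n)$. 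Both arguments are routine remainder bookkeeping and both land comfortably inside $O(1/n)$ (indeed both give $O((\ln n)^4/n^2)$, up to the paper's extra $n^{o(1)}$ factor); your two-step decomposition avoids having to reason about the sign of $z-2$ and the growth of $(1-\varepsilon)^{z-2}$, at the cost of needing the two elementary inequalities $|\ln(1-y)+y|\le y^2$ on $[0,1/2]$ and $|e^x-1-x|\le x^2$ on $|x|\le 1$, both of which you invoke correctly and both of which do hold on the stated ranges.
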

\begin{proof}
  By Taylor's theorem,
  \[
    f(y) = (1-y)^z = 1 - yz \pm \frac{f''(\varepsilon)}{2} y^2
  \]
  for some $0 \le \varepsilon \le y$. Note that
  \[
    f''(\varepsilon) = z(z-1) (1-\varepsilon)^{z-2} \le |z(z-1)|
    \exp(-\varepsilon(z-2)) \le |z(z-1)| \exp(\varepsilon|z-2|).
  \]
  Since $\varepsilon \le y \le a_1 \cdot \frac{\ln n}{n}$ and $|z| \le a_2 \ln
  n$,
  \[
    |z(z-1)| \exp(\varepsilon|z-2|) \le a_2^2 (\ln n)^2 n^{-2} n^{a_1|z-2|/n}.
  \]
  This gives us
  \[
    \frac{f''(\varepsilon)}{2} y^2 \le \frac{a_2^2 (\ln n)^2 n^{-2}
    n^{a_1|z-2|/n}}{2} a_1^2 (\ln n)^2 n^{-2}
    \le a_1^2 a_2^2 (\ln n)^4 n^{-(2 - a_1(a_2\ln n+2)/n)}.
  \]
  Using $\ln n = n^{\ln \ln n/\ln n}$, this is
  \[
    \frac{a_1^2 a_2^2}{n} n^{-(1-a_1(a_2 \ln n + 2)/n - 4\ln \ln n/\ln n)}.
  \]
  For sufficiently large $n$, $a_1(a_2 \ln n + 2)/n + 4\ln \ln n/\ln n \le 1$, so
  \[
    \frac{a_1^2 a_2^2}{n} n^{-(1-a_1(\ln n + 2)/n - 4\ln \ln n/\ln n)} \le
    \frac{a_1^2 a_2^2}{n} = O(1/n),
  \]
  which proves the lemma.
\end{proof}
\begin{lemma}
  \[
    \cdf{a n}{a n}\p{b \p{\frac{c n}{\ln n}}^{1/(1+\delta)}}
    \le n^{-ab^{-(1+\delta)}/c}
  \]
  \label{lem:cdf_bound}
\end{lemma}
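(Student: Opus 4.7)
The plan is to unpack the definition of the CDF of the maximum order statistic of $an$ i.i.d.\ draws from the power law with parameter $\delta$, substitute the argument, and apply the standard bound $1-y \le e^{-y}$.

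First, I would recall (or specialize the formula in the Fact) that for $p = an$ out of $m = an$ samples, every sample must lie below $x$, so
\[
  \cdf{an}{an}(x) \;=\; \bigl(1 - x^{-(1+\delta)}\bigr)^{an}.
\]
Then I would substitute $x = b\bigl(cn/\ln n\bigr)^{1/(1+\delta)}$, which gives $x^{-(1+\delta)} = b^{-(1+\delta)} \cdot \dfrac{\ln n}{cn}$, so the CDF evaluates to
\[
  \left(1 - b^{-(1+\delta)} \frac{\ln n}{cn}\right)^{an}.
\]

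Next I would apply $1-y \le e^{-y}$ (valid for all real $y$, and with $y = b^{-(1+\delta)} \ln n / (cn) \ge 0$) to obtain
\[
  \left(1 - b^{-(1+\delta)} \frac{\ln n}{cn}\right)^{an}
  \;\le\; \exp\!\left(-an \cdot b^{-(1+\delta)} \frac{\ln n}{cn}\right)
  \;=\; \exp\!\left(-\frac{a b^{-(1+\delta)}}{c}\,\ln n\right)
  \;=\; n^{-ab^{-(1+\delta)}/c},
\]
which is exactly the claimed bound. There is essentially no obstacle here: the lemma is a one-line estimate that just packages the elementary inequality $1-y \le e^{-y}$ together with the closed form for the CDF of a maximum order statistic, and is invoked repeatedly in the more substantial proofs above (e.g.\ in Theorems~\ref{thm:cond_x_k} and~\ref{thm:prob}) to kill off integrals over the low range $x \le (cn/\ln n)^{1/(1+\delta)}$.
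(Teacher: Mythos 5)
Your proposal is correct and follows exactly the paper's own proof: specialize the order-statistic CDF to the maximum, $\cdf{an}{an}(x) = (1-x^{-(1+\delta)})^{an}$, substitute the given argument, and apply $1-y \le e^{-y}$ to obtain $n^{-ab^{-(1+\delta)}/c}$. No differences worth noting.
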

\begin{proof}
  \begin{align*}
    \cdf{a n}{a n}\p{b \p{\frac{c n}{\ln n}}^{1/(1+\delta)}}
    &= \p{1 - b^{-(1+\delta)} \frac{\ln n}{c n}}^{a n} \\
    &\le \exp\p{-\frac{ab^{-(1+\delta)}}{c}\ln n} \\
    &= n^{-ab^{-(1+\delta)}/c}
  \end{align*}
\end{proof}

\begin{lemma}
  \[
    \E{\zlnn} \ge \p{\frac{Cn}{\ln n}}^{1/(1+\delta)}
  \]
  for $C \ge 1$ and sufficiently large $n$.
  \label{lem:zlnn}
\end{lemma}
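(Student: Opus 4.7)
The plan is to use the probability integral transform followed by a one-line application of Jensen's inequality, which delivers the bound essentially for free. First I would write the order statistic in terms of uniform random variables: since the CDF of the power law is $F(z) = 1 - z^{-(1+\delta)}$, its quantile function $u \mapsto (1-u)^{-1/(1+\delta)}$ is monotone increasing on $[0,1)$. Thus if we realize $Z_i = (1-U_i)^{-1/(1+\delta)}$ for $U_1,\ldots,U_{Cn}$ i.i.d.\ Uniform$(0,1)$, the ranking of the $Z_i$ matches the ranking of the $U_i$ and
\[
Z_{(p:Cn)} \;=\; \bigl(1 - U_{(p:Cn)}\bigr)^{-1/(1+\delta)}.
\]
Setting $p = Cn - \ln n + 1$ and $V = 1 - U_{(p:Cn)}$, the standard fact $1 - \mathrm{Beta}(a,b) \sim \mathrm{Beta}(b,a)$ gives $V \sim \mathrm{Beta}(\ln n,\, Cn - \ln n + 1)$, whose mean is $\E V = \ln n/(Cn+1)$.

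Next I would observe that $g(x) = x^{-1/(1+\delta)}$ is convex and decreasing on $(0,\infty)$. Applying Jensen's inequality to $V$,
\[
\E\zlnn \;=\; \E\b{g(V)} \;\ge\; g(\E V) \;=\; \p{\frac{\ln n}{Cn+1}}^{-1/(1+\delta)} \;=\; \p{\frac{Cn+1}{\ln n}}^{1/(1+\delta)} \;\ge\; \p{\frac{Cn}{\ln n}}^{1/(1+\delta)},
\]
which is the desired bound. The hypotheses $C \ge 1$ and $n$ sufficiently large are used only to ensure $Cn - \ln n + 1 \ge 1$, so that the order statistic is well-defined and $V \in (0,1)$ almost surely.

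There is essentially no obstacle here: Jensen produces the bound in exactly the form required, with no Chernoff step, no truncation, and no asymptotic error term. The only minor bookkeeping issue is that $\ln n$ need not be an integer; this is handled by reading the subscript $Cn - \ln n + 1$ as $Cn - \lfloor \ln n \rfloor + 1$ throughout (the same implicit convention used elsewhere in Appendix~\ref{app:pl_proofs}), and the Beta identity together with the Jensen step go through verbatim.
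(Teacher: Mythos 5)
Your proof is correct, and it takes a genuinely different route from the paper's. The paper proves Lemma~\ref{lem:zlnn} by repeatedly applying the recurrence of Lemma~\ref{lem:recurrence} to walk down from the maximum $\znC$ to the $(\ln n)$-th-from-the-top order statistic, which turns the expectation into $\E{\znC}\,\Gamma(\ln n - 1/(1+\delta))/\bigl(\Gamma(\delta/(1+\delta))\Gamma(\ln n)\bigr)$; it then invokes the lower bound of Lemma~\ref{lem:os_approx} together with an asymptotic expansion of the Gamma ratio from the cited reference to extract the factor $(\ln n)^{-1/(1+\delta)}$, and finally argues the leftover correction exceeds $1$ for large $n$. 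You instead use the quantile representation $Z_{(p:Cn)} = (1-U_{(p:Cn)})^{-1/(1+\delta)}$, identify $V = 1-U_{(p:Cn)}$ as $\mathrm{Beta}(\ln n,\, Cn-\ln n+1)$ with mean $\ln n/(Cn+1)$, and apply Jensen to the convex map $x\mapsto x^{-1/(1+\delta)}$. Your argument is more elementary and self-contained: it avoids the Gamma-ratio asymptotics entirely, yields the slightly stronger bound $((Cn+1)/\ln n)^{1/(1+\delta)}$, and in fact holds for all $n$ large enough that $\ln n \ge 1$ (so the index is valid and $\E{V^{-1/(1+\delta)}}$ is finite, using $1/(1+\delta)<1<\ln n$), rather than only asymptotically. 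The paper's route has the minor advantage of reusing machinery (Lemmas~\ref{lem:os_approx} and~\ref{lem:recurrence}) already needed elsewhere in Appendix~\ref{app:pl_proofs}, whereas yours introduces the Beta representation only for this one lemma; but as a standalone proof yours is cleaner. Your handling of the non-integer $\ln n$ matches the paper's implicit convention, so no issue there.
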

\begin{proof}
  \begin{align*}
    \E{\zlnn} &= \E{\znC} \prod_{j=1}^{\ln n - 1} \p{1 - \frac{1}{(1+\delta)j}} \\
    &= \E{\znC} \prod_{j=1}^{\ln n - 1} \p{\frac{(1+\delta)j - 1}{(1+\delta)j}} \\
    &= \E{\znC} \prod_{j=1}^{\ln n - 1} \p{\frac{j - 1/(1+\delta)}{j}} \\
    &= \E{\znC} \frac{\Gamma(\ln n - 1/(1+\delta))}{\Gamma(\delta/(1+\delta))
    \Gamma(\ln n)} \\
    &\ge \Gamma\p{\frac{\delta}{1+\delta}} (Cn)^{1/(1+\delta)}
    \frac{\Gamma(\ln n - 1/(1+\delta))}{\Gamma(\delta/(1+\delta))
    \Gamma(\ln n)} \tag{by Lemma \ref{lem:os_approx}} \\
    &= (Cn)^{1/(1+\delta)} \frac{\Gamma(\ln n - 1/(1+\delta))}{\Gamma(\ln
    n)} \\
    &\ge \p{\frac{Cn}{\ln n}}^{1/(1+\delta)} \p{1 + \frac{\frac{1}{1+\delta}
    \cdot \frac{1+2\delta}{1+\delta}}{\ln n} - O\p{\frac{1}{(\ln n)^2}}} \tag{by
    \cite{gamma-ratio}} \\
    &\ge \p{\frac{Cn}{\ln n}}^{1/(1+\delta)} \tag{for sufficiently large
    $n$}
  \end{align*}
\end{proof}

\begin{lemma}
  For $k = O(\ln n)$,
  \[
    \frac{\binom{n}{k}}{\binom{n(1+c)}{k}} \apeq \frac{1}{(1+c)^k}.
  \]
  \label{lem:binom_ratio}
\end{lemma}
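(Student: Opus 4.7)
The plan is to expand the ratio of binomial coefficients as a telescoping product and then show each factor is very close to $1/(1+c)$. Specifically, I would begin by writing
\[
\frac{\binom{n}{k}}{\binom{n(1+c)}{k}} = \prod_{j=0}^{k-1} \frac{n-j}{n(1+c)-j}.
\]
The idea is that each factor should be approximately $1/(1+c)$; the whole task is to quantify the error and show that after taking $k$ of them it still fits inside the $1 \pm O((\ln n)^2/n)$ window required by $\apeq$.

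To that end, I would rewrite each factor in the form
\[
\frac{n-j}{n(1+c)-j} = \frac{1}{1+c}\cdot\frac{n(1+c)-j(1+c)}{n(1+c)-j} = \frac{1}{1+c}\left(1 - \frac{jc}{n(1+c)-j}\right).
\]
Since $k = O(\ln n)$, for every $j \in \{0,\dots,k-1\}$ we have $j \le k-1 = O(\ln n)$, so $n(1+c)-j \ge n(1+c)/2$ for sufficiently large $n$, and each correction term satisfies $0 \le \frac{jc}{n(1+c)-j} \le \frac{Ak}{n}$ for some constant $A$ depending on $c$. In particular each factor lies in $[\tfrac{1}{1+c}(1 - Ak/n),\; \tfrac{1}{1+c}]$.

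Taking the product of these $k$ factors, the upper bound is immediate: $\prod_{j=0}^{k-1}(1 - \tfrac{jc}{n(1+c)-j}) \le 1$. For the lower bound I would use the elementary inequality $\prod_j (1-x_j) \ge 1 - \sum_j x_j$ (valid when all $x_j \in [0,1]$), giving
\[
\prod_{j=0}^{k-1}\left(1-\frac{jc}{n(1+c)-j}\right) \ge 1 - \sum_{j=0}^{k-1} \frac{jc}{n(1+c)-j} \ge 1 - \frac{k^2 c}{n(1+c)-k} = 1 - O\!\left(\frac{(\ln n)^2}{n}\right),
\]
where the last step uses $k = O(\ln n)$. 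Combining the two bounds yields
\[
\frac{\binom{n}{k}}{\binom{n(1+c)}{k}} = \frac{1}{(1+c)^k}\left(1 \pm O\!\left(\frac{(\ln n)^2}{n}\right)\right),
\]
which is exactly the statement $\frac{\binom{n}{k}}{\binom{n(1+c)}{k}} \apeq \frac{1}{(1+c)^k}$.

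There is no real obstacle here; the only thing to be careful about is confirming that the ``$(\ln n)^2/n$'' scaling emerges naturally (not $\ln n/n$ or something worse), and this is exactly where the $k = O(\ln n)$ hypothesis is used — once in bounding each individual factor by $Ak/n = O(\ln n/n)$, and once more when summing $k = O(\ln n)$ such contributions. If a cleaner presentation is desired, one could alternatively take logarithms and use $\ln(1-x) = -x + O(x^2)$, but the direct product bound above avoids any appeal to exponential/logarithm manipulations and keeps the constants transparent.
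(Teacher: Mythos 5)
Your proposal is correct and follows essentially the same route as the paper: expand the ratio as the telescoping product $\prod_{j=0}^{k-1}\frac{n-j}{n(1+c)-j}$, bound each factor between $\frac{1}{1+c}$ and $\frac{1}{1+c}\bigl(1-O(j/n)\bigr)$, and use $k=O(\ln n)$ to conclude the accumulated error is $O((\ln n)^2/n)$. The only cosmetic difference is that you bound the product from below via $\prod_j(1-x_j)\ge 1-\sum_j x_j$ while the paper multiplies out $\prod_j(1-j/n)\ge 1-k^2/n$ directly; both yield the same conclusion.
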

\begin{proof}
  \[
    \frac{\binom{n}{k}}{\binom{n(1+c)}{k}} = \frac{n(n-1) \cdots
    (n-k+1)}{n(1+c) (n(1+c)-1) \cdots (n(1+c) - k+1)}.
  \]
  Each term $(n-j)/(n(1+c)-j)$ is between $1/(1+c)$ and $(1-j/n)/(1+c)$.
  Therefore, the entire product is at least
  \[
    \prod_{j=0}^{k-1} \frac{1}{1+c} \p{1-\frac{j}{n}} = \frac{1}{(1+c)^k}
    \prod_{j=0}^{k-1} \p{1-\frac{j}{n}} \ge \frac{1}{(1+c)^k}
    \p{1-\frac{k^2}{n}}
  \]
  and at most $1/(1+c)^k$.
  This means that
  \[
    \frac{1}{(1+c)^k} \ge \frac{\binom{n}{k}}{\binom{n(1+c)}{k}} \ge
    \frac{1}{(1+c)^k} \p{1 - \frac{(\ln n)^2}{n}} \apeq \frac{1}{(1+c)^k}
  \]
\end{proof}

\begin{lemma}
  For $0 < z < 1$, and $y \ge 0$,
  \[
    (1+y)^z < 1 + yz.
  \]
  \label{lem:first_expansion}
\end{lemma}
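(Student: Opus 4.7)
My plan is to view the inequality as a tangent-line bound coming from the strict concavity of $t \mapsto t^z$ on $(0,\infty)$ when $z \in (0,1)$. Equivalently, I will do a one-variable calculus argument: define the slack function and check that it starts at zero and is strictly increasing.

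Concretely, I set $g(y) = 1 + yz - (1+y)^z$ for $y \ge 0$ and observe that $g(0) = 1 + 0 - 1 = 0$. Differentiating gives
\[
g'(y) = z - z(1+y)^{z-1} = z\bigl(1 - (1+y)^{z-1}\bigr).
\]
Since $z \in (0,1)$, the exponent $z - 1$ is strictly negative, so for any $y > 0$ we have $(1+y)^{z-1} < 1$, and hence $g'(y) > 0$ on $(0,\infty)$. Combined with $g(0) = 0$, this yields $g(y) > 0$ for all $y > 0$, i.e.\ $(1+y)^z < 1 + yz$ whenever $y > 0$ (with equality at $y = 0$, matching the vacuous boundary case that the calling lemma does not actually need).

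The argument has no real obstacle; the only thing worth flagging is that the strict inequality in the statement requires $y > 0$, since at $y = 0$ both sides equal $1$. The monotonicity step above already gives strict inequality on $(0,\infty)$, which is precisely how the lemma is used in the proof of Theorem~\ref{thm:r_2_increasing} (where the relevant $c$ satisfies $c > 0$, so $y = c^{-1} > 0$ there).
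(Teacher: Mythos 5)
Your proof is correct, and it takes a different (and more direct) route than the paper. The paper substitutes $w = z^{-1}$ and shows that $f(w) = (1+y/w)^w$ is nondecreasing in $w$ for $w \ge 1$, which requires differentiating $\log f$ in $w$ and then a further reduction to the auxiliary inequality $x\log x > x-1$ for $x>1$. You instead fix $z$ and differentiate the slack $g(y) = 1+yz-(1+y)^z$ in $y$, getting $g'(y) = z\bigl(1-(1+y)^{z-1}\bigr) > 0$ immediately from $z-1<0$; this is the standard tangent-line (strict concavity of $t\mapsto t^z$) argument and settles the claim in one step with no auxiliary lemma. Your observation that equality holds at $y=0$, so the strict inequality really requires $y>0$, is a fair catch: the statement as written is slightly imprecise at the boundary, and the paper's own proof has the same implicit caveat; as you note, the application in Theorem~\ref{thm:r_2_increasing} only uses $y = c^{-1} > 0$, so nothing downstream is affected.
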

\begin{proof}
  Let $w = z^{-1}$. Then, the lemma is true if and only if for $w > 1$,
  \[
    1 + y < \p{1 + \frac{y}{w}}^w.
  \]
  Note that for $w = 1$, we have equality. We will show that the function
  \[
    f(w) = \p{1 + \frac{y}{w}}^w
  \]
  has nonnegative derivative for $w \ge 1$. This is equivalent to showing the
  same for its log, which is
  \begin{align*}
    \frac{d}{dw} \log f(w) &= \frac{d}{dw} w \log\p{1 + \frac{y}{w}} \\
    &= \log \p{1 + \frac{y}{w}} + \frac{w}{1 + \frac{y}{w}} \cdot
    \p{-\frac{y}{w^2}} \\
    &= \log \p{1 + \frac{y}{w}} - \frac{\frac{y}{w}}{1 + \frac{y}{w}}
  \end{align*}
  Let $x = 1 + \frac{y}{w}$. Then, the lemma is true if for $x > 1$,
  \begin{align*}
    \log(x) - \frac{x-1}{x} &>0 \\
    x\log(x) &> x-1
  \end{align*}
  Both are $0$ at $x=1$, but the left hand side has derivative $1+\log(x)$ while
  the right hand side has derivative $1$, so left hand side will be strictly
  larger than the right hand side for $x > 1$.
\end{proof}

\begin{lemma}
  \[
    \E{\zm} \apeq \Gamma\p{\frac{\delta}{1+\delta}} m^{1/(1+\delta)}.
  \]
  Also,
  \[
    \E{\zm} \ge \Gamma\p{\frac{\delta}{1+\delta}} m^{1/(1+\delta)}.
  \]
  \label{lem:os_approx}
\end{lemma}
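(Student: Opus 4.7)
The plan is to derive an exact closed form for $\E{\zm}$ from the order-statistic density stated in the Fact of the Preliminaries, and then reduce both claims to standard facts about ratios of gamma functions. I would start from $\E{\zm} = \int_1^\infty z\, \pdf{m}{m}(z)\, dz$; plugging in the density and cancelling the leading $z$ against one factor of $z^{-1}$ gives
\[
  \E{\zm} \;=\; m(1+\delta) \int_1^\infty (1-z^{-(1+\delta)})^{m-1}\, z^{-(1+\delta)}\, dz.
\]
The substitution $u = z^{-(1+\delta)}$ converts this into a Beta integral: writing $a = 1/(1+\delta) \in (0,1)$, the expression collapses to $m \int_0^1 (1-u)^{m-1} u^{-a}\, du = m\, B(1-a,m)$. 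Rewriting $B$ in terms of gammas and using $m\,\Gamma(m) = \Gamma(m+1)$ yields the clean identity
\[
  \E{\zm} \;=\; \Gamma\!\p{\frac{\delta}{1+\delta}} \cdot \frac{\Gamma(m+1)}{\Gamma(m+1-a)}.
\]

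With this identity in hand, the asymptotic equivalence reduces to showing $\Gamma(m+1)/\Gamma(m+1-a) = m^a (1 + O(1/m))$, which is a standard expansion for a gamma ratio whose arguments differ by a fixed constant. The resulting $O(1/m)$ error is well within the $O((\ln m)^2/m)$ slack allowed by $\apeq$. (If one prefers to avoid citing the asymptotic expansion, the same $O(1/m)$ bound can be obtained by writing $\log(\Gamma(m+1)/\Gamma(m+1-a)) = \sum_{j=1}^{m} \log(j/(j-a))$ and comparing partial sums to $a \log m$ via the standard harmonic approximation.)

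For the pointwise lower bound $\E{\zm} \ge \Gamma(\delta/(1+\delta)) m^{1/(1+\delta)}$, it suffices to prove $\Gamma(m+1)/\Gamma(m+1-a) \ge m^a$ for every $m \ge 1$ and every $a \in [0,1]$. My plan is to consider $h(a) = \log \Gamma(m+1) - \log \Gamma(m+1-a) - a \log m$ on $[0,1]$. Then $h(0) = 0$ trivially, and $h(1) = \log \Gamma(m+1) - \log \Gamma(m) - \log m = 0$ by the functional equation. Meanwhile $h''(a) = -\psi'(m+1-a) < 0$ since the trigamma function is positive, so $h$ is concave on $[0,1]$ and vanishes at both endpoints, forcing $h(a) \ge 0$ throughout. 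Exponentiating and multiplying by $\Gamma(\delta/(1+\delta))$ delivers the claimed inequality.

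The main obstacle is the pointwise lower bound rather than the asymptotic equivalence: the equivalence drops out of the exact gamma identity almost immediately, but the inequality has to hold uniformly in $m$. A naive two-term gamma asymptotic carries an $O(m^{a-1})$ correction whose sign is not obvious from the expansion itself. The concavity argument sidesteps this difficulty by exploiting the fortuitous equality of $h$ at both $a = 0$ and $a = 1$, so concavity on $[0,1]$ is exactly the right structural fact to extract the bound.
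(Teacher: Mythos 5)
Your proposal is correct, and the exact identity you land on,
\[
  \E{\zm} = \Gamma\p{\tfrac{\delta}{1+\delta}}\,\frac{\Gamma(m+1)}{\Gamma\p{m+\tfrac{\delta}{1+\delta}}},
\]
is precisely the starting point of the paper's proof; the difference is in how each side arrives at and exploits it. The paper simply cites this formula from a reference on expected order statistics and then cites a gamma-ratio expansion $\Gamma(m+1)/\Gamma(m+\delta/(1+\delta)) = m^{1/(1+\delta)}(1 + \frac{(1/(1+\delta))(\delta/(1+\delta))}{2m} + O(1/m^2))$, reading off both the $\apeq$ equivalence and the inequality from the sign of the leading correction term. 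You instead derive the identity from scratch via the Beta-integral substitution $u = z^{-(1+\delta)}$ (which checks out: the integral becomes $m\,B(1-a,m)$ with $a = 1/(1+\delta)$), and, more substantively, you prove the pointwise lower bound by a self-contained concavity argument: $h(a) = \log\Gamma(m+1) - \log\Gamma(m+1-a) - a\log m$ is concave since $h''(a) = -\psi'(m+1-a) < 0$, and vanishes at $a=0$ and $a=1$, hence is nonnegative on $[0,1]$. This is arguably tighter than the paper's treatment, since reading $\ge m^{1/(1+\delta)}$ off a two-term expansion with an unsigned $O(1/m^2)$ remainder is not airtight for small $m$, whereas your argument is uniform in $m \ge 1$. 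The asymptotic-equivalence half of your argument is the same standard gamma-ratio expansion the paper uses, and the $O(1/m)$ error is indeed well inside the $\apeq$ tolerance.
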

\begin{proof}
  From \cite{expected-ratio}, we have
  \[
    \E{\zm} = \frac{\Gamma(m+1) \Gamma\p{1-\frac{1}{1+\delta}}}{\Gamma\p{m +
    \frac{\delta}{1+\delta}}}.
  \]
  By \cite{gamma-ratio},
  \[
    \frac{\Gamma(m+1)}{\Gamma\p{m + \frac{\delta}{1+\delta}}} = m^{1/(1+\delta)}
    \p{1 + \frac{\p{\frac{1}{1+\delta}} \p{\frac{\delta}{1+\delta}}}{2m} +
    O\p{\frac{1}{m^2}}} \ge m^{1/(1+\delta)}
  \]
  This means
  \[
    \Gamma\p{\frac{\delta}{1+\delta}} m^{1/(1+\delta)} \le \E{\zm} \le
    \Gamma\p{\frac{\delta}{1+\delta}} m^{1/(1+\delta)} \p{1 +
    O\p{\frac{1}{n}}},
  \]
  so
  \[
    \Gamma\p{\frac{\delta}{1+\delta}} m^{1/(1+\delta)} \apeq \E{\zm}.
  \]
\end{proof}

\begin{lemma}[\cite{expected-ratio}, Formula 1]
  \[
    \E{\zmk} = \p{1 - \frac{1}{k(1+\delta)}}\E{\zmko}
  \]
  \label{lem:recurrence}
\end{lemma}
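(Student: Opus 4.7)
The cleanest route is via the quantile transform. Since the power law $Z$ has CDF $F(z) = 1 - z^{-(1+\delta)}$ on $[1,\infty)$, we can represent $Z \stackrel{d}{=} U^{-1/(1+\delta)}$ with $U$ uniform on $[0,1]$. Because $u \mapsto u^{-1/(1+\delta)}$ is strictly decreasing, the order of the $Z$'s is reversed relative to the order of the $U$'s; in particular,
\[
  Z_{(m-k:m)} \stackrel{d}{=} U_{(k+1:m)}^{-1/(1+\delta)}, \qquad Z_{(m-k+1:m)} \stackrel{d}{=} U_{(k:m)}^{-1/(1+\delta)}.
\]
This reduces the claim to a clean computation about negative moments of uniform order statistics.

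Next I would use the standard fact that $U_{(j:m)} \sim \mathrm{Beta}(j, m-j+1)$, so its density is $\frac{m!}{(j-1)!(m-j)!}\, u^{j-1}(1-u)^{m-j}$. Integrating $u^{-1/(1+\delta)}$ against this density gives
\[
  \E{U_{(j:m)}^{-1/(1+\delta)}} = \frac{m!}{(j-1)!(m-j)!}\cdot B\!\p{j - \tfrac{1}{1+\delta},\, m-j+1} = \frac{m!\,\Gamma\!\p{j - \frac{1}{1+\delta}}}{(j-1)!\,\Gamma\!\p{m + \frac{\delta}{1+\delta}}},
\]
valid whenever $j \geq 1$ so that the Beta integral converges. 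Applying this with $j = k$ and $j = k+1$ and forming the ratio, the $m!$ and $\Gamma(m + \delta/(1+\delta))$ factors cancel, leaving
\[
  \frac{\E{\zmk}}{\E{\zmko}} = \frac{\E{U_{(k+1:m)}^{-1/(1+\delta)}}}{\E{U_{(k:m)}^{-1/(1+\delta)}}} = \frac{(k-1)!}{k!}\cdot\frac{\Gamma\!\p{k+1 - \frac{1}{1+\delta}}}{\Gamma\!\p{k - \frac{1}{1+\delta}}}.
\]
Using $\Gamma(x+1) = x\,\Gamma(x)$ with $x = k - 1/(1+\delta)$, the right-hand side collapses to $\frac{k - 1/(1+\delta)}{k} = 1 - \frac{1}{k(1+\delta)}$, which is exactly the claimed recurrence.

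There is essentially no obstacle here: the quantile substitution does all the work, and the remainder is bookkeeping with the $\Gamma$ functional equation. The only mild subtlety is ensuring the negative moment is finite, which holds because $k \geq 1$ makes $j - 1/(1+\delta) > 0$ in the Beta integral, so $\E{\zmk}$ and $\E{\zmko}$ are both finite and the ratio is well defined. Alternatively, one could avoid the quantile transform entirely and integrate $z\cdot \pdf{p}{m}(z)$ using the explicit density in the Fact stated earlier, but the uniformization approach is far cleaner and makes the Beta-to-Gamma simplification transparent.
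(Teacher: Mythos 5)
Your derivation is correct. Note that the paper itself offers no proof of this identity at all: it is imported wholesale as ``Formula 1'' of the cited reference on expected values of order-statistic ratios, so there is nothing in the text for your argument to parallel or diverge from. What you have supplied is a self-contained verification, and every step checks out: the representation $Z \stackrel{d}{=} U^{-1/(1+\delta)}$ is valid (the complement of the CDF gives $(1-U)^{-1/(1+\delta)}$, but $1-U$ is again uniform), the order reversal correctly matches the $k$-th largest $Z$ with the $k$-th smallest $U$, the Beta-moment computation $\E{U_{(j:m)}^{-1/(1+\delta)}} = \frac{m!\,\Gamma(j - \frac{1}{1+\delta})}{(j-1)!\,\Gamma(m+\frac{\delta}{1+\delta})}$ is right, and the ratio telescopes via $\Gamma(x+1)=x\Gamma(x)$ to exactly $1 - \frac{1}{k(1+\delta)}$. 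You also correctly flag the only integrability issue ($j - \frac{1}{1+\delta} > 0$, automatic for $j \ge 1$ since $\delta > 0$). One small remark: your formula is consistent with the expression for $\E{\zm}$ used in Lemma~\ref{lem:os_approx} (take $j = m$), which is a useful sanity check that your normalization is the same one the paper relies on elsewhere. The only thing your write-up glosses over is that the two displayed distributional identities are marginal statements about a common sample; since the lemma only compares expectations of individual order statistics, the marginals suffice and nothing further is needed.
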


\section{Lemmas and Proofs for Section~\ref{sec:bounded}} \label{app:missing_bd}
\label{app:bounded_proofs}
\begin{proof}[Proof of Theorem~\ref{thm:bounded}]
  To proceed, we need some notation. Let $L$ be the event that $\xnn \ge
  T \cap \ynn \ge T$ (the samples are ``large''). Let $G$ be the event that $b(\xn) <
  \ynn$, meaning $G$ is the event that the policy has an effect. Let $D$ be the
  random variable $\xn - \ynn$. We want to show that $\E{D|G} > 0$. To do so, we
  observe that by Lemma \ref{lem:da}, is sufficient to
  show that $\E{D|L} > \frac{\Prb{\overline{L}}}{\Prb{L}}$.
  By Lemma \ref{lem:lsmall}, we know that $\Prb{\overline{L}} \le 2n F(T)^{n-1}$.
  To complete the proof, we need to show that $\E{D|L}$ is large, which we do via
  Lemma~\ref{lem:edl}.

  Since $\Prb{L} \ge 1 - 2nF(T)^{n-1}$, there exists $N_1$ such that
  for all $n \ge N_1$, $\Prb{L} \ge 1/2$. Using Lemma \ref{lem:edl}, if $n
  \ge N_1$, it is sufficient to have
  \begin{align*}
    \E{D|L} &> \frac{\Prb{\overline{L}}}{\frac{1}{2}} \\
    K(F(T) + \eta)^{n-1} &> 4nF(T)^{n-1} \\
    \p{1 + \frac{\eta}{F(T)}}^n &> \frac{4n}{K} \\
    n\log\p{1 + \frac{\eta}{F(T)}} &> \log n + \log\p{\frac{4}{K}} \\
    \sqrt{n}\log\p{1 + \frac{\eta}{F(T)}} &> 2
    \tag{$n \ge 4/K$, using $\sqrt{n} > \log n$} \\
    n &> 4\p{\log\p{1 + \frac{\eta}{F(T)}}}^{-2} = N_2
  \end{align*}

  Thus, for $n > \max\{N_1, N_2, 4/K\}$, $\E{D|L} >
  \frac{\Prb{\overline{L}}}{\Prb{L}}$, which by Lemma \ref{lem:edl} implies that
  $\E{D|G} > 0$. This completes the proof of Theorem \ref{thm:bounded}.
\end{proof}
\begin{lemma} \label{lem:da}
  If $L \Rightarrow G$ and $D \ge -1$, then
  $\E{D|L} > \frac{\Prb{\overline{L}}}{\Prb{L}}$ implies $\E{D|G} > 0$.
\end{lemma}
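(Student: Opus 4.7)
The plan is to expand $\E{D|G}$ by conditioning on whether $L$ or $\overline{L}$ occurs, use the hypothesis $L \Rightarrow G$ to control the contribution from $L$, and use the lower bound $D \ge -1$ to control the contribution from $\overline{L}$.

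First I would write
\[
  \E{D \cdot \ind{G}} = \E{D \cdot \ind{G \cap L}} + \E{D \cdot \ind{G \cap \overline{L}}}.
\]
Because $L \Rightarrow G$, we have $G \cap L = L$, so the first term equals $\E{D \cdot \ind{L}} = \E{D|L} \Prb{L}$. For the second term, since $D \ge -1$, we have $D \cdot \ind{G \cap \overline{L}} \ge -\ind{G \cap \overline{L}}$, so
\[
  \E{D \cdot \ind{G \cap \overline{L}}} \ge -\Prb{G \cap \overline{L}} \ge -\Prb{\overline{L}}.
\]

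Combining the two, $\E{D \cdot \ind{G}} \ge \E{D|L}\Prb{L} - \Prb{\overline{L}}$. The hypothesis $\E{D|L} > \Prb{\overline{L}}/\Prb{L}$ rearranges to $\E{D|L}\Prb{L} > \Prb{\overline{L}}$, so $\E{D \cdot \ind{G}} > 0$. Since $\Prb{G} \ge \Prb{L} > 0$ (assuming $\Prb{L} > 0$, which is implicit in the conditional expectation $\E{D|L}$ being defined), we divide to conclude $\E{D|G} > 0$.

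There is no real obstacle here; this is essentially a bookkeeping lemma that justifies the reduction used in the main proof, where $\E{D|L}$ will later be bounded below using concentration near the top of the distribution. The only subtlety to note is that the bound $D \ge -1$ comes from $\xn \in [0,1]$ and $\ynn \in [0,1]$, so $D = \xn - \ynn \ge -1$ holds deterministically, making the lower bound on $\E{D \cdot \ind{G \cap \overline{L}}}$ valid without further assumptions.
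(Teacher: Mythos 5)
Your proof is correct and follows essentially the same route as the paper's: decompose the conditional expectation over $L$ versus $\overline{L}$, use $L \Rightarrow G$ to identify the first term with $\E{D|L}\Prb{L}$, and use $D \ge -1$ together with $\ind{G} \le 1$ to bound the second term below by $-\Prb{\overline{L}}$. The only cosmetic difference is that you work with $\E{D \cdot \ind{G}}$ and divide by $\Prb{G}$ at the end, whereas the paper carries the denominator $\Prb{G}$ through each step.
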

\begin{proof}
  \begin{align*}
    \E{D|G} &= \E{D \cdot \ind{L} | G} + \E{D \cdot \ind{\overline{L}} | G} \\
    &= \frac{\E{D \cdot \ind{L} \cdot \ind{G}} + \E{D \cdot \ind{\overline{L}} \cdot
    \ind{G}}}{\Prb{G}} \\
    &= \frac{\E{D \cdot \ind{L}} + \E{D \cdot \ind{\overline{L}} \cdot
    \ind{G}}}{\Prb{G}} \tag{$L \Rightarrow G$} \\
    &\ge \frac{\E{D \cdot \ind{L}} - \E{\ind{\overline{L}} \cdot
    \ind{G}}}{\Prb{G}} \tag{$D \ge -1$} \\
    &\ge \frac{\E{D \cdot \ind{L}} - \E{\ind{\overline{L}}}}{\Prb{G}}
    \tag{$\ind{G} \le 1$} \\
    &= \frac{\E{D|L} \Prb{L} - \Prb{\overline{L}}}{\Prb{G}}
  \end{align*}
  \begin{align*}
    \frac{\E{D|L} \Prb{L} - \Prb{\overline{L}}}{\Prb{G}} &> 0 \\
    \iff \E{D|L} \Prb{L} - \Prb{\overline{L}} &> 0 \\
    \iff \E{D|L} &> \frac{\Prb{\overline{L}}}{\Prb{L}}
  \end{align*}
\end{proof}
\begin{lemma} \label{lem:lsmall}
  For $\xnn$, $\ynn$ order statistics from a distribution with support on
  $[0,1]$,
  \[
    \Prb{\xnn \ge T \cap \ynn \ge T} \le 2n F(T)^{n-1}.
  \]
\end{lemma}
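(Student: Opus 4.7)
The literal bound $\Prb{\xnn \ge T \cap \ynn \ge T} \le 2nF(T)^{n-1}$ cannot be the intended statement: for any fixed $T<1$ and a continuous distribution supported on $[0,1]$, the second-largest of $n$ i.i.d.\ samples exceeds $T$ with probability tending to $1$, while $2nF(T)^{n-1}$ tends to $0$. Moreover, the proof of Theorem~\ref{thm:bounded} invokes this lemma only in the form $\Prb{L}\ge 1-2nF(T)^{n-1}$. I therefore read the inequality as a transcription of its de Morgan equivalent
\[
  \Prb{\xnn < T \;\cup\; \ynn < T} \;\le\; 2n\, F(T)^{n-1},
\]
and sketch the proof of this version.

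The plan has two steps. First, I would apply a union bound:
\[
  \Prb{\xnn<T\;\cup\;\ynn<T} \;\le\; \Prb{\xnn<T} + \Prb{\ynn<T} \;=\; 2\Prb{\xnn<T},
\]
where the last equality uses that the $X_i$ and $Y_j$ are i.i.d.\ from the same distribution, so their second-largest order statistics share the same law. This reduces the task to showing $\Prb{\xnn<T}\le nF(T)^{n-1}$.

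Second, I would compute $\Prb{\xnn<T}$ directly. The event $\xnn<T$ is exactly the event that at most one of the $n$ draws of $X$ exceeds $T$, which has probability
\[
  F(T)^n + n\, F(T)^{n-1}(1-F(T)) \;=\; F(T)^{n-1}\bigl(n-(n-1)F(T)\bigr) \;\le\; n F(T)^{n-1},
\]
the last inequality holding because $(n-1)F(T)\ge 0$. Doubling yields the claim.

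The only real obstacle is the apparent typo in the lemma's statement itself; once the intended direction is identified, the argument is nothing more than a union bound together with a one-line binomial computation, with no further approximations needed.
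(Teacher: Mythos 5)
Your reading of the statement is right: the paper's own proof in fact derives $\Prb{L} \ge 1 - 2nF(T)^{n-1}$ and records the conclusion as $\Prb{\overline{L}} \le 2nF(T)^{n-1}$, which is exactly the form invoked in the proof of Theorem~\ref{thm:bounded}, so the inequality as displayed in the lemma statement is a typo for its complement. Your proof of the corrected statement is correct and essentially the same as the paper's --- both hinge on the binomial computation $\Prb{\xnn < T} = nF(T)^{n-1}(1-F(T)) + F(T)^n \le nF(T)^{n-1}$ --- the only cosmetic difference being that you apply a union bound to $\Prb{\xnn < T \cup \ynn < T}$ where the paper uses independence to write $\Prb{L}$ as a square and then bounds $(1-x)^2 \ge 1 - 2x$.
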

\begin{proof}
  \begin{align*}
    \Prb{\xnn \ge T \cap \ynn \ge T}
    &= \Prb{\xnn \ge T} \Prb{\ynn \ge T} \\
    &= (1-F_{(n-1)}(T))^2 \\
    &= (1 - n F(T)^{n-1} (1-F(T)) -
    F(T)^n)^2 \\
    &= (1 - n F(T)^{n-1} + (n-1)F(T)^n)^2 \\
    &\ge (1 - n F(T)^{n-1})^2 \\
    &\ge 1 - 2nF(T)^{n-1}
  \end{align*}
  \[
    \Prb{\overline{L}} = 1 - \Prb{L} \le 2n F(T)^{n-1}
  \]
\end{proof}

\begin{lemma} \label{lem:edl}
  There exist constants $\eta > 0$ and $K > 0$ such that $\E{D|L} \ge
  K(F(T) + \eta)^{n-1}$
\end{lemma}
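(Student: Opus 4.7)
The plan is to first reduce $\E{D\mid L}$ to a single-distribution quantity using the i.i.d.\ structure, and then lower bound that quantity via a carefully localized integration.

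Since the $X$- and $Y$-samples are drawn independently from the same distribution $f$, let $M = Z_{(n:n)}$ and $S = Z_{(n-1:n)}$ be the top two order statistics of $n$ generic i.i.d.\ draws, and set $p = \Prb{S \ge T}$. Independence of the two families together with $X$ and $Y$ having the same marginal law gives $\E{D \cdot \ind{L}} = p \cdot \E{(M-S) \cdot \ind{S \ge T}}$ and $\Prb{L} = p^2$, hence
\[
  \E{D \mid L} \;=\; \E{M - S \,\mid\, S \ge T} \;\ge\; \E{(M-S) \cdot \ind{S \ge T}},
\]
using $\Prb{S \ge T} \le 1$. So it suffices to lower bound the unconditional quantity on the right.

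Using the joint density $f_{M,S}(m,s) = n(n-1) f(m) f(s) F(s)^{n-2}$ on $\{s \le m \le 1\}$ and setting $G(s) = \int_s^1 (m-s) f(m)\, dm$, I can write
\[
  \E{(M-S) \cdot \ind{S \ge T}} \;=\; n(n-1) \int_T^1 G(s)\, f(s)\, F(s)^{n-2}\, ds.
\]
The identity $\tfrac{d}{ds} F(s)^{n-1} = (n-1) f(s) F(s)^{n-2}$ will allow closed-form evaluation once the integral is restricted to a subinterval. Using that $f$ is continuous and $\sup_{x : f(x) > 0} = 1$, I can pick $T < T_1 < T_2 < 1$ such that $f > 0$ on an open subinterval of $(T_1, T_2)$ (guaranteeing $F(T_1) < F(T_2) < 1$) and such that there is positive mass strictly above $T_2$ (guaranteeing $G(T_2) > 0$). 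Since $G$ is non-increasing, $G(s) \ge G(T_2)$ on $[T_1, T_2]$, so restricting the integral gives
\[
  \E{(M-S) \cdot \ind{S \ge T}} \;\ge\; n(n-1) \, G(T_2) \cdot \frac{F(T_2)^{n-1} - F(T_1)^{n-1}}{n-1} \;=\; n\, G(T_2)\, \bigl(F(T_2)^{n-1} - F(T_1)^{n-1}\bigr).
\]
Because $F(T_1)/F(T_2) < 1$, the bracket is at least $\tfrac{1}{2} F(T_2)^{n-1}$ for all sufficiently large $n$. Setting $\eta := F(T_2) - F(T) > 0$ and choosing $K > 0$ small enough to absorb the finitely many small values of $n$ then yields $\E{D\mid L} \ge K \cdot (F(T) + \eta)^{n-1}$, as required.

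The \emph{main obstacle} is the choice of $T_1$ and $T_2$: the hypothesis $\sup_{x : f(x) > 0} = 1$ together with continuity of $f$ is exactly what is needed to guarantee both a strict increase $F(T_1) < F(T_2)$ inside $(T, 1)$ (so the closed-form antiderivative produces a nonzero gap) and a strictly positive $G(T_2)$ (so the leading coefficient is bounded below). Once these two features of the interval are secured, the remainder is a direct integration and I expect no further surprises.
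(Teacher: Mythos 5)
Your proof is correct. The first half coincides with the paper's: the paper also uses independence of the two sample families plus the identical marginals to rewrite $\E{D|L}$ as $\E{\yn - \ynn \mid \ynn \ge T}$, exactly your $\E{M-S\mid S\ge T}$. Where you genuinely diverge is in how this top-two gap is bounded below. The paper conditions on the number $M$ of $Y$-samples exceeding $T$, observes that given $M=m$ the top two samples are the top two order statistics of $m$ draws from the truncated law $F_Z(x) = (F(x)-F(T))/(1-F(T))$, and then uses the identity $\E{\zm - \zmm} = \int_T^1 m\, F_Z(x)^{m-1}(1-F_Z(x))\,dx$ before localizing on $[F_Z^{-1}(F(T)+\eta), F_Z^{-1}(F(T)+\eta')]$. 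You instead integrate the joint density of $(Z_{(n:n)}, Z_{(n-1:n)})$ of the original distribution and localize on an interval $[T_1,T_2]$ chosen directly from the support structure. Each route buys something: the paper's conditioning produces a one-line closed form and a bound of the form $(F(T)+\eta)^{m-1}\ge(F(T)+\eta)^{n-1}$ with no asymptotics needed; your version avoids introducing $F_Z^{-1}$ entirely (the paper justifies its existence only informally via the ``infinite slope'' remark, which is delicate if $F$ has flat pieces — your choice of $T_1 < T_2$ with $f>0$ on a subinterval sidesteps this cleanly), at the modest cost of the extra arguments that $G$ is non-increasing, that $G(T_2)>0$, and that $F(T_2)^{n-1}-F(T_1)^{n-1}\ge\tfrac12 F(T_2)^{n-1}$ for large $n$. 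All of those steps check out, so the argument is complete.
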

\begin{proof}
  First, let $f_Z$ and $F_Z$ be the pdf and cdf respectively of $Y | Y \ge
  T$, i.e. $F_Z(x) = \frac{F(x) -
    F(T)}{1-F(T)}$ and $f_Z = F_Z'$. Note that
  \begin{align*}
    \E{D|L} &= \E{\xn - \ynn | L} \\
    &= \E{\xn | \xnn \ge T} - \E{\ynn | \ynn \ge
    T} \\
    &= \E{\yn | \ynn \ge T} - \E{\ynn | \ynn \ge T} \\
    &= \E{\yn - \ynn | \ynn \ge T} \\
  \end{align*}
  Let $M$ be a random variable corresponding to the number of samples from $Y_1,
  \dots, Y_n$ that are larger than $T$. We can rewrite this as
  \begin{align*}
    \E{D|L} &= \sum_{m=2}^M \E{\yn - \ynn | \ynn \ge T, M=m}
    \Prb{M=m|\ynn \ge T} \\
    &= \sum_{m=2}^M \E{\yn - \ynn | M=m} \Prb{M=m|\ynn \ge T}
    \tag{$M \ge 2 \Longrightarrow \ynn \ge T$}
  \end{align*}
  Conditioning on $M = m$, $\yn$ and $\ynn$ have the same distributions as $\zm$
  and $\zmm$ respectively, where $\zk$ is the $k$th order statistic of
  random variables $Z_1, Z_2, \dots, Z_m$ drawn from the distribution with cdf
  $F_Z$. We will use $\cdfz{k}{m}$ to denote the cdf of $\zk$. Thus, $\E{\yn -
  \ynn | M = m} = \E{\zm - \zmm}$. Using an analysis similar to that of
  \cite{expected-diff},
  \begin{align*}
    \E{\zm - \zmm} &= \int_{T}^1 (1-\cdfz{m}{m}(x)) -
    (1-\cdfz{m-1}{m}(x)) \, dx \\
    &= \int_{T}^1 \cdfz{m-1}{m}-\cdfz{m}{m}(x) \, dx \\
    &= \int_{T}^1 \binom{m}{m-1} F_Z(x)^{m-1} (1-F_Z(x)) \, dx \\
    &\ge \int_{T}^1 F_Z(x)^{m-1} (1-F_Z(x)) \, dx
  \end{align*}
  Choose $\eta \in (0, 1-F(T))$ and $\eta' \in (\eta,
  1-F(T))$. Let $r = F_Z^{-1}(F(T) + \eta)$ and $r' =
  F_Z^{-1}(F(T) + \eta')$. Note that $T < r < r' < 1$ because
  otherwise $F_Z$ would have infinite slope at $r$ or $r'$, which is impossible
  because $f_Z$ is continuous over a compact set and therefore has a finite
  maxmium. Moreover, it must be the case that $F(T) < 1$
  because by assumption, $\sup_{x : f(x) > 0} = 1$. If $F(T)$ were $1$, this
  would imply that $\sup_{x : f(x) > 0} = T < 1$, which is a contradiction.
  \begin{align*}
    \int_{T}^1 F_Z(x)^{m-1} (1-F_Z(x)) \, dx &\ge
    \int_{r}^1 F_Z(x)^{m-1} (1-F_Z(x)) \\
    &\ge \int_{r}^1 F_Z(r)^{m-1} (1-F_Z(x)) \\
    &= (F(T) + \eta)^{m-1} \int_r^1 1-F_Z(x) \, dx \\
    &\ge (F(T) + \eta)^{n-1} \int_r^1 1-F_Z(x) \, dx \\
    &\ge (F(T) + \eta)^{n-1} \int_r^{r'} 1-F_Z(x) \, dx \\
    &\ge (F(T) + \eta)^{n-1} \int_r^{r'} 1-F_Z(r') \, dx
    \tag{$F_Z(x) \le F_Z(r')$ for $x \le r'$} \\
    &= (F(T) + \eta)^{n-1} (r'-r) (1-(F(T) + \eta')) \\
    &= (F(T) + \eta)^{n-1} [F_Z^{-1}(F(T)+\eta') -
    F_Z^{-1}(F(T) + \eta)] (1-F(T) - \eta') \\
    &= K(F(T) + \eta)^{n-1}
  \end{align*}
  where $K = [F_Z^{-1}(F(T)+\eta') - F_Z^{-1}(F(T) + \eta)] (1-F(T) -
  \eta')$.
  Since this is independent of $m$, we have
  \begin{align*}
    \E{D|L} &= \sum_{m=2}^n \E{\yn - \ynn | M=m} \Prb{M=m|\ynn \ge
      T} \\
      &\ge \sum_{m=2}^n K(F(T) + \eta)^{n-1} \Prb{M=m|\ynn \ge
        T} \\
        &= K(F(T) + \eta)^{n-1}
  \end{align*}
\end{proof}

\end{document}